\documentclass[11pt]{article}

\usepackage{comment,amsmath,amssymb,geometry,mathtools,amsthm,amsfonts,xcolor,algorithm,algpseudocode,enumitem,todonotes,setspace,array}
\usepackage{tabularx}
\usepackage{booktabs}
\definecolor{darkblue}{rgb}{0,0,0.5}
\definecolor{darkred}{rgb}{0.5,0,0}
\usepackage[linktocpage=true,colorlinks,
urlcolor=darkblue, linkcolor=darkblue, anchorcolor=darkblue,citecolor=darkblue]{hyperref}

\usepackage[noabbrev,capitalise]{cleveref}

\numberwithin{equation}{section}
\newtheorem{theorem}{Theorem}[section]
\newtheorem*{theorem*}{Theorem}
\newtheorem{lemma}[theorem]{Lemma}
\newtheorem*{lemma*}{Lemma}
\newtheorem*{definition*}{Definition}
\newtheorem{proposition}[theorem]{Proposition}
\newtheorem*{proposition*}{Proposition}
\newtheorem{corollary}{Corollary}[section]
\newtheorem*{corollary*}{Corollary}
\newtheorem{definition}[theorem]{Definition}
\newtheorem*{definitions*}{Definitions}

\newtheorem{problem}{\bf Problem}
\newtheorem*{problem*}{\bf Problem}
\newtheorem{openproblem}{\bf Open problem}
\newtheorem*{openproblem*}{\bf Open problem}

\newtheorem*{example*}{\bf Example}
\theoremstyle{remark}
\newtheorem{remark}{\bf Remark}[section]

\DeclareMathOperator{\diag}{diag}

\DeclareMathOperator{\poly}{poly}
\DeclareMathOperator{\Orth}{\textbf{O}}
\DeclareMathOperator{\GL}{\textbf{GL}}

\DeclareMathOperator{\M}{\textbf{M}}
\DeclareMathOperator{\Orb}{Orb}
\DeclareMathOperator{\Unitary}{\textbf{U}}
\DeclareMathOperator{\supp}{supp}
\DeclareMathOperator{\vecc}{vec}

\newcommand{\R}{\mathbb{R}}
\newcommand{\F}{\mathbb{F}}
\newcommand{\C}{\mathbb{C}}

\newcommand{\E}{\mathbb{E}}
\newcommand{\Prob}{\mathbb{P}}
\newcommand{\Z}{\mathbb{Z}}

\newcommand{\LevyFunc}{\mathcal{L}}
\newcommand{\Comp}{\mathrm{Comp}}
\newcommand{\Incomp}{\mathrm{Incomp}}
\newcommand{\LCD}{\mathrm{LCD}}
\newcommand{\rLCD}{\widehat{\mathrm{LCD}}}
\newcommand{\dist}{\mathrm{dist}}
\newcommand{\spread}{\mathrm{spread}}

\newcommand{\event}{\mathcal{E}}
\newcommand{\net}{\mathcal{N}}
\newcommand{\MinSingVal}{s_{\mathrm{min}}}
\newcommand{\MaxSingVal}{s_{\mathrm{max}}}
\newcommand{\MinEigenvalueGap}{\lambda_{\mathrm{min}}}
\newcommand{\angles}[1]{\langle#1\rangle}

\renewcommand{\AA}{\mathbb{A}}

\newcommand{\CC}{\mathbb{C}}

\newcommand{\PP}{\mathbb{P}}
\newcommand{\QQ}{\mathbb{Q}}
\newcommand{\RR}{\mathbb{R}}

\newcommand{\ZZ}{\mathbb{Z}}

\newcommand{\ccPSPACE}{\textsf{\upshape PSPACE}}

\newcommand{\ccNP}{\textsf{\upshape NP}}
\newcommand{\ccAM}{\textsf{\upshape AM}}
\newcommand{\cccoAM}{\textsf{\upshape coAM}}

\newcommand{\ccGI}{\textsf{\upshape GI}}

\newcommand{\cK}{\mathcal{K}}

\newcommand{\cS}{\mathcal{S}}
\newcommand{\cT}{\mathcal{T}}

\newcommand{\Sym}{\mathrm{Sym}}

\newcommand{\OTI}{\textsc{O-TI}}
\newcommand{\UTI}{\textsc{U-TI}}
\newcommand{\GLTI}{\textsc{GL-TI}}

\title{Random tensor isomorphism under orthogonal and unitary actions}
\author{
Jeremy Chizewer\footnote{University of Chicago. \tt{jchizew@uchicago.edu}}
\and 
Samuel Everett\footnote{University of Chicago. \tt{same@uchicago.edu}}
\and 
Deven Mithal\footnote{University of Chicago. \tt{mithal@uchicago.edu}}
\and 
Youming Qiao\footnote{University of New South Wales. \tt{jimmyqiao86@gmail.com}}
}

\begin{document}
\pagenumbering{gobble}
\maketitle

\begin{abstract}
We study the problem of testing whether two tensors in $\mathbb{R}^\ell\otimes \mathbb{R}^m\otimes \mathbb{R}^n$ are isomorphic under the natural action of orthogonal groups $\textbf{O}(\ell, \mathbb{R})\times\textbf{O}(m, \mathbb{R})\times\textbf{O}(n, \mathbb{R})$, as well as the corresponding question over $\mathbb{C}$ and unitary groups. These problems naturally arise in several areas, including graph and tensor isomorphism (Grochow--Qiao, \emph{SIAM J. Comp.}'21), scaling algorithms for orbit closure intersections (Allen-Zhu--Garg--Li--Oliveira--Wigderson, \emph{STOC}'18), and quantum information (Liu--Li--Li--Qiao, \emph{Phys. Rev. Lett.}'12). 

We study average-case algorithms for orthogonal and unitary tensor isomorphism, with one random tensor where each entry is sampled uniformly independently from a sub-Gaussian distribution, and the other arbitrary. For the algorithm design, we develop algorithmic ideas from the higher-order singular value approach into polynomial-time exact (algebraic) and approximate (numerical) algorithms with rigorous average-case analyses. Following (Allen-Zhu--Garg--Li--Oliveira--Wigderson, \emph{STOC}'18), we present an algorithm for a gapped version of the orbit distance approximation problem.
For the average-case analysis, we work from recent progress in random matrix theory on eigenvalue repulsion of sub-Gaussian Wishart matrices (Christoffersen--Luh--O’Rourke--Shearer and Han, \emph{arXiv}'25) by extending their results from side lengths of Wishart matrices linearly related to polynomially related.

Our algorithmic and probabilistic results have natural implications and interpretations in several research directions. First, the tensor orbit closure intersection problems under orthogonal and unitary groups are average-case easy, in contrast to evidence for their worst-case hardness (B\"urgisser--Do\u{g}an--Makam--Walter--Wigderson, \emph{CCC}'21), as well as for the average-case tensor isomorphism under the general linear group action (Grochow--Qiao, \emph{SICOMP}'21). Second, one can test whether a Haar-random tripartite quantum state is locally unitary equivalent to an arbitrary tripartite quantum state in polynomial time. Third, we obtain a spectral algorithm for random tripartite 3-uniform hypergraph isomorphism.
\end{abstract}

\newpage

\tableofcontents

\newpage

\pagenumbering{arabic}
\setstretch{1.05}

\section{Introduction}

Given two mathematical objects, what is the computational cost of determining whether they are equivalent? These basic problems, known as \textit{isomorphism problems}, are of central importance in areas as varied as quantum information to graph theory. Substantial effort has been devoted to their study, but many fundamental questions remain unresolved.

One natural question arising in physics (starting with \cite{kraus2010local,LLLQ12}) is the matter of determining whether two tensors are isomorphic under orthogonal (over $\RR$) or unitary (over $\CC$) group actions. There are well-known algorithms for this problem that follow the classical higher-order singular value theory approach. However, no complexity analysis of the method has been obtained, and the average-case efficiency in particular is unknown. This situation could be compared to the understanding of the simplex algorithm before the work of Spielman and Teng \cite{spielman2004smoothed}.
The purpose of this paper is to develop the existing algorithmic ideas into average-case efficient algorithms.

In more formal terms, the \emph{unitary tensor isomorphism problem} ($\UTI$) asks whether two tensors $A, B \in \CC^{\ell} \otimes \CC^m \otimes \CC^n$ are equivalent under the natural action of the product of unitary groups $\Unitary(\ell, \CC)\times \Unitary(m, \CC)\times \Unitary(n, \CC)$. Replacing $\CC$ with $\RR$ and $\Unitary$ with $\Orth$, we obtain the \emph{orthogonal tensor isomorphism} problem ($\OTI$). Replacing $\Unitary$ with $\GL$, we obtain the \emph{general-linear tensor isomorphism} problem ($\GLTI$). In this paper, orthogonal groups are always over $\RR$ and unitary groups are always over $\CC$. For the purpose of exposition, we will mostly focus on $3$-tensors. 

\subsection{Motivations for orthogonal and unitary tensor isomorphism}\label{subsec:motivation}

\paragraph{Local unitary equivalence of quantum states.} Unitary tensor isomorphism appears naturally in quantum information. A tripartite pure quantum state is a unit tensor in $\C^\ell\otimes\C^m\otimes\C^n$. Two quantum states are locally unitary (LU) equivalent if and only if they are isomorphic under the unitary group action. Local unitary equivalence of quantum states is a basic equivalence relation of quantum states \cite{WGE16}, and captures the equivalence of two quantum states under Local Operations and Classical Communications (LOCC) \cite{dur2000three}. To decide whether two quantum states are LU equivalent has been studied in \cite{kraus2010local,LLLQ12}.

\paragraph{Cartesian tensors from data analysis.} The Singular Value Theorem states that the orbits of $\Orth(n, \RR)\times\Orth(m, \RR)$ on $\RR^n\otimes \RR^m$ are characterized by the singular values. In the study of data analysis, a central problem is to explore the generalization of the Singular Value Theorem to the statistical analysis of multiway data \cite{DL08}, and hence the orthogonal tensor isomorphism is of interest \cite{de2000multilinear,DL08,HU17,Sei18}. Tensors over $\RR$ with the orthogonal equivalence are sometimes referred to as Cartesian tensors \cite{CartesianTensor}.

\paragraph{Orbit closure intersection problems.} Orthogonal and unitary groups are compact, so the orbits of orthogonal and unitary group actions on tensors are closed (in Zariski or Euclidean topologies). Therefore, unitary and orthogonal tensor isomorphism are also instances of the family of orbit closure intersection problems. Given a group $G$ acting on a topological space $S$, the corresponding orbit closure intersection problem asks whether the orbit closures of $s, s'\in S$ intersect. 

With origins in the celebrated works of Hilbert and Mumford on invariant theory \cite{Hilbert,mumford1994geometric}, a complexity-theoretic study of orbit closure intersection problems received considerable attention with an algebraic complexity motivations \cite{HW15,GCT5}, while their significance was realized in diverse areas including combinatorics \cite{GGOW20,IQS18}, tensor networks \cite{acuaviva2023minimal}, matrix multiplication \cite{BI13}, statistics \cite{derksen2021maximum}, and counting problems \cite{cai2026vanishing}.

A major result in this research line is to settle the orbit closure intersection problem for the left-right action on matrix tuples \cite{GGOW20,IQS18,HH21,AGLOW18,IQ23}. After this, the next natural target is the orbit closure problems for tensors under the special linear group action \cite{burgisser2018efficient,AGLOW18}. The algorithmic technique in \cite{GGOW20,AGLOW18}, called operator scaling, leads to the development of the theory of non-commutative optimization \cite{BFGOWW19}. Briefly speaking, based on the Kempf--Ness theory, operator scaling reduces orbit closure problems for general or special linear group actions to those for the unitary or orthogonal group actions. As a result, unitary and orthogonal matrix tuple or tensor isomorphism arise naturally. Indeed, a key technical component of the algorithm for orbit closure intersection of the left-right action on matrix tuples in \cite{AGLOW18} is an algorithm for unitary matrix tuple equivalence.

\paragraph{Graph and tensor isomorphism problems.} Unitary and orthogonal tensor isomorphism are closely related to the celebrated graph isomorphism problem, and the general linear tensor isomorphism problem ($\GLTI$). $\GLTI$ received considerable attention due to its connections with group, polynomial, and algebra isomorphism \cite{TI1,TI2,TI3,TI4,TI5,JQSY19}.

Briefly speaking, unitary and orthogonal tensor isomorphism are ``sandwiched'' between graph isomorphism and GL tensor isomorphism. By \cite{TI3}, graph isomorphism poly-time reduces to unitary and orthogonal tensor isomorphism, and orthogonal tensor isomorphism poly-time reduces to GL tensor isomorphism. Answering an open question in \cite{TI3}, Lysikov and Walter showed that unitary tensor isomorphism reduces to GL tensor isomorphism in polynomial time \cite{lysikov2024complexity}. 

There has been extensive literature on graph isomorphism, with recent breakthroughs including Babai's quasipolynomial-time algorithm \cite{Bab16} and smoothed analysis of graph isomorphism by Anastos, Kwan, and Moore \cite{anastos2025smoothed}. There is also impressive progress on GL tensor isomorphism algorithms, thanks to Sun's breakthrough on $p$-group isomorphism \cite{Sun23,IMQSZ24}. By comparison, unitary and orthogonal tensor isomorphism, though closely related problems and interesting in their own right, remain relatively underexplored from an algorithmic perspective.

\subsection{Average-case complexities of isomorphism problems}\label{subsec:average-case}

Especially in their applied contexts, it is vital to understand the \emph{average-case complexity} of isomorphism problems. Recall that for an isomorphism problem, the input consists of two objects for which we wish to decide isomorphism. The average-case setting is to sample one object from some random model and to leave the other arbitrary. In particular, the second object may be chosen depending on the outcome of the random sample. The goal is to devise an algorithm testing isomorphism that works with high probability over the sampling of the first object. 

\paragraph{Average-case algorithms for graph isomorphism.}
Graph isomorphism is a classical example to demonstrate the difference between the worst-case and average-case complexities.
Despite the considerable attention the graph isomorphism problem (GI) has received, the worst-case complexity of the full GI remained exponential \cite{BL83} until Babai's breakthrough, a quasipolynomial-time algorithm for GI \cite{Bab16}. However, GI has long been regarded as easy to solve in practice \cite{mckay1981practical}, as evidenced by the software Nauty and Traces \cite{mckay2014practical}.

Such a contrast between worst-case and practical algorithms could be reconciled in theory by considering average-case complexity. Babai, Erd\H{o}s and Selkow first showed an average-case linear-time algorithm for graph isomorphism \cite{BES80} in the Erd\H{o}s--R\'enyi model $\mathcal{G}(n, p=1/2)$. This led to a series of works by e.g. Karp, Lipton, and Bollob\'as \cite{karp1975fast,lipton1978beacon,DBLP:conf/focs/BabaiK79,bollobas1982distinguishing,czajka2008improved,linial2017rigidity}, culminating in the recent work of Anastos, Kwan and Moore \cite{anastos2025smoothed}, who performed smoothed analysis on graph isomorphism and covered the canonical labelling of graphs sampled from the Erd\H{o}s--R\'enyi model $\mathcal{G}(n, p)$ for all $p$. 

\paragraph{Average-case algorithms for GL-tensor isomorphism over finite fields.} General-linear tensor isomorphism ($\GLTI$) over finite fields was proposed as a candidate in post-quantum cryptography in \cite{JQSY19}. Digital signatures based on $\GLTI$ or its variants were devised and implemented \cite{ALTEQ-paper,MEDS-paper}. 

Average-case algorithms play an important role in understanding the security of these cryptographic schemes. The first average-case algorithm for $\GLTI$ over finite fields was devised by Li and Qiao \cite{LQ17}, with improvement in \cite{BLQW20}. For $n\times n\times n$ tensors over $\F_q$, the average-case algorithm runs in time $q^{O(n)}$. In contrast, the best worst-case algorithm runs in time $q^{\tilde O(n^{1.5})}$ \cite{IMQSZ24} which is an improvement over Sun's breakthrough on $p$-group isomorphism \cite{Sun23}.

\paragraph{Spectral algorithms for random graph isomorphism.} There exist algorithms to test isomorphism of graphs with multiplicity free or bounded eigenvalues, by Leighton and Miller \cite{leighton1979certificates}\footnote{A nice exposition of \cite{leighton1979certificates} was recorded by Spielman \cite{spielman2018testing}.} and Babai, Grigoryev and Mount \cite{babai1982isomorphism,Bab86}. These algorithms motivated Babai to pose the question of whether the adjacency matrix of a random graph has a simple spectrum with high probability in the 1980s (cf. \cite{tao2017random}). This question was solved after three decades by Tao and Vu \cite{tao2017random}, achieved in the context of several breakthroughs in the inverse Littlewood--Offord theorems and universality in random matrix theory \cite{tao2009inverse,tao2011random}. As a result, the algorithms in \cite{leighton1979certificates,babai1982isomorphism,Bab86} can be interpreted as efficient algorithms for random graph isomorphism.

\subsection{Our results}\label{subsec:results}

Our main results are average-case efficient algorithms for the orthogonal and unitary tensor isomorphism problems. Before going into the details, we illustrate the position of our algorithms in the context of closely related isomorphism problems in Table~\ref{tab:comparison}.
\begin{table}[H]
\centering
\small
\setlength{\tabcolsep}{5pt}
\renewcommand{\arraystretch}{1.2}

\begin{tabularx}{\textwidth}{|
  >{\raggedright\arraybackslash}p{0.17\textwidth}|
  >{\raggedright\arraybackslash}p{0.27\textwidth}|
  >{\raggedright\arraybackslash}p{0.24\textwidth}|
  >{\raggedright\arraybackslash}p{0.23\textwidth}|}
\hline
 & \textbf{Orbit worst-case} & \textbf{Orbit average-case} & \textbf{Orbit closure intersection} \\
\hline
Graph Iso (GI) &
Quasipolynomial time \cite{Bab16} &
Polynomial time \cite{BES80,DBLP:conf/focs/BabaiK79,anastos2025smoothed} &
The same as orbit (in point-set topology) \\
\hline
Unitary ($\CC$) and Orthogonal ($\RR$) Tensor Iso &
$\ccGI$-hard \cite{TI3}; $\ccPSPACE$ &
\textbf{Polynomial time}\newline [This paper] &
The same as orbit (in Zariski or Euclidean topology) \\
\hline
General-linear Tensor Iso over $\C$ &
$\ccGI$-hard \cite{TI1};  $\ccAM$ \cite{koiran1996hilbert} or counting hierarchy \cite{AGS26} &
Open; related to orbit closure intersection &
$\ccGI$-hard \cite{lysikov2024complexity} \\
\hline
General-linear Tensor Iso over $\F_q$ &
$\ccGI$-hard \cite{TI1}; $\ccNP\cap\cccoAM$;
$q^{\tilde O(n^{1.5})}$-time algorithm \cite{Sun23,IMQSZ24}&
$q^{O(n)}$-time algorithm \cite{LQ17,BLQW20} &
Open (in Zariski topology) \\
\hline
\end{tabularx}
\caption{A comparison of several isomorphism problems in different perspectives (see Section~\ref{subsec:previous}).}
\label{tab:comparison}
\end{table}

\paragraph{Our random model of tensors.} A natural random model of tensors from $\CC^\ell\otimes\CC^m\otimes\CC^n$ is to sample each of the $\ell m n$ entries independently from a fixed---mean zero, unit variant---Gaussian distribution. This is in analogy with the classical Ginibre emsembles in the random matrix theory. Furthermore, this has a clear quantum information theoretic interpretation: after normalizing, this corresponds to a Haar-random pure state \cite{mezzadri2007generate}. 

Motivated by wider applications and universality results in random matrix theory (see Section~\ref{subsec:previous}), we actually adopt the random model of random tensors from $\RR^\ell\otimes\RR^m\otimes\RR^n$ where every entry is drawn independently from a fixed---mean zero, unit variance---sub-Gaussian distribution $\xi$. In the case of $\CC^\ell\otimes\CC^m\otimes\CC^n$, each entry is drawn independently from a fixed---mean zero, unit variance---sub-Gaussian distribution $\xi$ over the Gaussian rationals $\QQ(i) = \{s + it : s, t \in \QQ\}$. Recall that a real valued random variable $\xi$ is sub-Gaussian if it satisfies the Gaussian-type tail bound $\Prob[|\xi| > t] \leq 2\exp(-t^2/K^2)$ for some $K < \infty$. 

Sub-Gaussian distributions encompass many important distributions. Besides the Gaussian distribution, it also includes \emph{the Rademacher distribution}, i.e. the distribution on $\pm 1$ with equal probability. In this case, a random tensor in $\RR^\ell\otimes\RR^m\otimes\RR^n$ can be viewed as recording a random 3-uniform tripartite hypergraph $H=(L\cup M\cup N, E)$ where $|L|=\ell$, $|M|=m$, $|N|=n$, $E\subseteq L\times M\times N$, and each hyperedge $(i, j, k)$ is present in $H$ with probability $1/2$.

\paragraph{Our main results.}
To ease the statements of our results, we shall focus on tensors in $\CC^{n\times n\times n}=\CC^n\otimes\CC^n\otimes\CC^n$ (and similarly over $\RR$).
Formally, we begin by studying the following problem.

\begin{problem}\label{prob-1}
Provided tensors $A, B \in \CC^{n \times n \times n}$, are there $n \times n$ unitary matrices $L, R, T$ such that $(L, R, T) \curvearrowright A = B$?
\end{problem}

Our first result toward \cref{prob-1} is an average-case (in our random model) polynomial-time algorithm solving the \emph{exact} unitary and orthogonal TI problems:

\begin{theorem}[Global average-case analysis of the unitary tensor isomorphism problem]\label{thm:exactUnitIso}
There exists a deterministic algorithm which, given $A=(a_{i,j,k})\in\QQ(i)^{n\times n\times n}$ with i.i.d.\ entries $a_{i,j,k} \sim \xi$ for a fixed---mean zero, unit variance---sub-Gaussian distribution $\xi$, and $B\in\QQ(i)^{n\times n\times n}$ arbitrary and deterministic, runs in polynomial time on the input of $A$ and $B$, and with high probability over the choice of $A$
decides whether
\[
B\in\Orb_\CC(A)\coloneqq \{(L,R,T)\curvearrowright A:\ L,R,T\in \Unitary(n)\},
\]
with implied constants depending only on the distribution of $\xi$.
\end{theorem}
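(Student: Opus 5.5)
The plan is to use the higher-order SVD approach: flatten $A$ along its three modes and study the associated Gram (Wishart-type) matrices. For mode $1$, let $A_{(1)}\in\CC^{n\times n^2}$ be the flattening and set $W_1(A)=A_{(1)}A_{(1)}^*$, with $W_2,W_3$ defined analogously. Under $(L,R,T)\curvearrowright A$ we have $A_{(1)}\mapsto L A_{(1)}(R\otimes T)^{\top}$, so $W_1\mapsto L W_1 L^*$, and similarly $W_2\mapsto RW_2R^*$ and $W_3\mapsto TW_3T^*$. The argument then has three parts.

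\textbf{Step 1 (simple spectrum whp).} Each $W_i(A)$ is an $n\times n$ sub-Gaussian Wishart matrix with $n^2$ columns, so the side lengths are polynomially related. The key probabilistic input is that, with high probability, each $W_i(A)$ has simple spectrum. This is the eigenvalue-repulsion result for sub-Gaussian Wishart matrices (Christoffersen--Luh--O'Rourke--Shearer and Han), in the extension to polynomially related side lengths that the paper announces. For the exact algorithm we only need distinctness of eigenvalues, not a quantitative gap. Distinctness is the algebraic condition $\mathrm{disc}(\chi_{W_i})\neq 0$, which can be tested exactly over $\QQ(i)$.

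\textbf{Step 2 (reduction to diagonal phases).} Suppose $B=(L,R,T)\curvearrowright A$ and all $W_i(A)$ have simple spectrum. Then the $W_i(B)$ have the same spectra, so the algorithm first checks the characteristic polynomials $\chi_{W_i(A)}=\chi_{W_i(B)}$ and rejects if they differ. Write $W_i(A)=U_iD_iU_i^*$ and $W_i(B)=V_iD_iV_i^*$ with eigenvalues ordered identically. Since each eigenspace is one-dimensional, $L=V_1\Phi_1U_1^*$ with $\Phi_1$ a diagonal unitary, and likewise for $R$ and $T$. Replace $A$ by $A'=(U_1^*,U_2^*,U_3^*)\curvearrowright A$ and $B$ by $B'=(V_1^*,V_2^*,V_3^*)\curvearrowright B$. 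The question then becomes whether there exist phases with $b'_{ijk}=\alpha_i\beta_j\gamma_k a'_{ijk}$ and $|\alpha_i|=|\beta_j|=|\gamma_k|=1$.

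This is a torus-orbit problem, and we solve it as follows.
\begin{itemize}
\item Check that $|a'_{ijk}|=|b'_{ijk}|$ for all $i,j,k$ and that the two tensors have the same support.
\item On the support, take arguments. This gives a linear system over $\R/2\pi\Z$ of the form $\theta_i+\phi_j+\psi_k=\arg(b'_{ijk}/a'_{ijk})$.
\item Equivalently, it is a multiplicative system: choose a spanning structure of the tripartite support hypergraph, fix phases along it, and verify every remaining entry. This runs in polynomial time.
\end{itemize}
Whp $A'$ has full support (a generic condition), so the phases are determined after fixing, say, $\alpha_1\beta_1\gamma_1$ consistently. The reduction is thus correct: acceptance yields an explicit isomorphism, and rejection is justified by the uniqueness of eigenvectors up to phase.

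\textbf{Step 3 (exactness and bit complexity).} The eigenvalues are algebraic numbers, so we cannot compute $U_i$ over $\QQ(i)$ directly. Instead we work in a number field, or better, avoid explicit eigenvectors. The projectors $P_\lambda=\prod_{\mu\neq\lambda}(W-\mu)/(\lambda-\mu)$ are polynomials in $W$, and the phase-invariant quantities can be expressed in them. For instance, $A$ and $B$ are isomorphic iff certain invariants agree, namely contractions of $A\otimes\bar A$ with eigenprojector-based triples and cycles that are determined by the phase classes. Alternatively, we can compute in $\QQ(i)[x]/(\chi)$ using exact root isolation together with a polynomial-time test of the resulting equalities. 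Of these two routes, the primary plan is to express the torus-orbit test via finitely many polynomial invariants of bounded degree: moduli $|a'_{ijk}|^2$ and cycle products $a'_{ijk}\bar a'_{ij'k'}\cdots$, with indices closing up. Each invariant can then be compared as algebraic numbers using traces over $\QQ(i)$-algebras built from $W_i$.

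\textbf{Main obstacle.} I expect Step 1 to be the main obstacle: proving simple spectrum whp for $n\times n^2$ sub-Gaussian Wishart matrices, which requires the polynomial-aspect-ratio extension of the eigenvalue-repulsion results. A secondary difficulty is keeping Step 3 genuinely polynomial in the bit-length with exact arithmetic. Here the conditions we need whp (discriminant nonzero and full support of $A'$) fail only on proper algebraic subvarieties. For discrete $\xi$ one needs anti-concentration on these varieties, and this is again supplied by the random-matrix input.
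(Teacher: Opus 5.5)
The overall architecture is the paper's: Gram matrices of the three flattenings, simple spectrum from \cref{thm-repulsion-unscaled}, and the reduction of \cref{lem:fundamentalCondition} to a diagonal-phase (compact torus) problem on the HOSVD cores. The gap is in how you solve that torus problem.

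Your solver (fix phases along a spanning structure and verify the rest, or compare bounded-degree ``cycle'' invariants) is only a valid decision procedure when the core $A'=(U_1^*,U_2^*,U_3^*)\curvearrowright A$ has full, or suitably well-connected, support. You assert full support with high probability without proof, and your justification does not hold up:
\begin{itemize}
\item Since the entries lie in $\QQ(i)$, $\xi$ is atomic. So ``fails only on a proper algebraic subvariety'' gives no probability bound; Schwartz--Zippel, for instance, gives nothing for Rademacher entries.
\item The random-matrix input you invoke controls only the eigenvalue gaps of $G_d(A)$. It says nothing about whether the core entries $\langle u^{(1)}_i\otimes u^{(2)}_j\otimes u^{(3)}_k, A\rangle$ vanish. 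These are not linear forms in independent variables, because the eigenvectors themselves depend on $A$.
\end{itemize}

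Without the full-support claim you need a general solver, and the spanning-structure method is not one. The exponent vectors $m_{ijk}=e_i+e_{n+j}+e_{2n+k}$ on a support need not span a saturated lattice. For example, $m_{122}+m_{212}+m_{221}=m_{111}+2m_{222}$, with $m_{111},m_{122},m_{212},m_{221}$ linearly independent. The torus element $(\alpha_1,\alpha_2,\beta_1,\beta_2,\gamma_1,\gamma_2)=(1,-1,1,-1,1,-1)$ fixes the monomials for $111,122,212,221$ but negates the one for $222$. Consequently:
\begin{itemize}
\item phases chosen consistently on a ``basis'' determine the remaining monomials only up to roots of unity;
\item ``fix and verify'' can wrongly reject an isomorphic pair;
\item in general, consistency is governed by integer relations with possibly large coefficients.
\end{itemize}

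This is exactly what the paper delegates to the B\"urgisser--Do\u{g}an--Makam--Walter--Wigderson orbit-equality algorithm for compact tori (their Corollary~1.6, extended to number fields via Ge). That algorithm runs after the modulus and support check justified by their Proposition~8.1 and works for arbitrary support. So the paper needs no probabilistic input beyond simple spectrum.

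To repair your proof, either call that algorithm (or Ge's test for multiplicative relations) instead of the spanning-structure shortcut, or give an actual proof that the HOSVD core has no zero entries with high probability for atomic sub-Gaussian $\xi$. In the full-support case your bounded-degree invariants remain attractive, since they keep all exact arithmetic in number fields of polynomial degree.
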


On the other hand, it is natural to consider the numerical, or approximation analogs of such problems. That is, we are also concerned with deciding the associated \emph{orbit distance approximation problems} (also referred to as robust orbit problems) as discussed in \cite{burgisser2024complexity} for instance.
Let
\[
\dist(X, Y)\coloneqq \inf_{x \in X, y \in Y} \|x-y\|_F
\]
denote the usual set distance using the Frobenius norm $\|\cdot\|_F$. 
Put
\[
\dist_\C(A,B) \coloneqq \dist(\Orb_\C(A), \Orb_\C(B)) = \inf_{L,R,T\in \Unitary(n)}\|(L,R,T)\curvearrowright A-B\|_F,
\]
where the equality holds since unitary actions are invertible, and the Frobenius norm is invariant under such actions.

We also study the following problem, following \cite[Theorem M3]{AGLOW18} which tackles a related problem (see Remark~\ref{remark:M3} for a detailed explanation).

\begin{problem}\label{prob-2}
The \emph{(gapped) unitary tensor orbit approximation problem} is defined as follows: Given $0<\varepsilon<\eta$ and tensors  $A,B \in \C^{n\times n \times n}$, distinguish between the following two cases: \[
\dist_\CC(A,B) \leq \varepsilon \; \text{ and }\; \dist_\CC(A,B) \geq \eta.
\]
\end{problem}

Of course, for a finite precision problem, we cannot actually draw the entries of $A$, nor can 
we even necessarily read in $B$ when considered true elements of $\C^{n\times n \times n}$. As such, we consider the model in which the algorithm is provided 
$\widetilde{A}, \widetilde{B}$, which are treated as truncated versions of $A$ and $B$ respectively, taken to the precision of our choice based on the provided $\varepsilon$.

The following theorem provides an average-case polynomial-time algorithm solving \cref{prob-2}.

\begin{theorem}[Global average-case analysis of the unitary orbit approximation problem]\label{thm:robustOrbit}
There exists a deterministic algorithm which, given $\varepsilon > 0$ such that $\varepsilon \le O(n^{-3})$, $A=(a_{i,j,k})\in \C^{n\times n\times n}$ with i.i.d.\ entries $a_{ijk} \sim \xi$ for a fixed---mean zero, unit variance---sub-Gaussian distribution $\xi$, and $B\in \C^{n\times n\times n}$ arbitrary, each specified to $O(\log(n/\varepsilon))$-bits of precisions, runs in polynomial-time in $n,\log(1/\varepsilon)$ and, with high probability over the choice of $A$, distinguishes 
$$\dist_\C(A,B)<\varepsilon\text{ from }\dist_\C(A,B)>\gamma\varepsilon\text{ for }\gamma = O(n^8).$$ In the former case, the algorithm also outputs a witness $(\widetilde{L},\widetilde{R},\widetilde{T})\in \Unitary(n)^3$ such that 
$$\|(\widetilde{L},\widetilde{R},\widetilde{T})\curvearrowright A-B\|_F< \gamma\varepsilon.$$
\end{theorem}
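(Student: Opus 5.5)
The plan is to follow the higher-order SVD (HOSVD) approach. Use the three flattenings $A_{(1)},A_{(2)},A_{(3)}\in\C^{n\times n^2}$ and the Gram matrices $W_i(A)=A_{(i)}A_{(i)}^*$, which are sub-Gaussian Wishart matrices of side lengths $n$ and $n^2$. Under $(L,R,T)\curvearrowright A$, the Gram matrix $W_1$ transforms as $LW_1L^*$, and similarly for $W_2,W_3$. Hence if $B=(L,R,T)\curvearrowright A$, the spectra of $W_i(A)$ and $W_i(B)$ coincide. Moreover, the eigenbases of $W_i(B)$ are the images of those of $W_i(A)$, up to a diagonal phase per eigenvector, provided the spectra are simple.

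The first probabilistic input is the eigenvalue repulsion result for sub-Gaussian Wishart matrices, in the form extended to polynomially related side lengths ($n$ versus $n^2$). It gives that with high probability every $W_i(A)$ has minimum eigenvalue gap at least $n^{-c}$ for a constant $c$. Together with standard operator-norm bounds $\|W_i(A)\|\le O(n^2)$, this yields a relative gap of polynomial size.

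The algorithm then proceeds in three steps.
\begin{enumerate}
\item Compute approximate eigendecompositions $W_i(\widetilde A)=U_i\Lambda_iU_i^*$ and $W_i(\widetilde B)=V_iM_iV_i^*$. If some $\|\Lambda_i-M_i\|$ exceeds a threshold, reject.
\item Otherwise bring both tensors into a normalized HOSVD form, $A'=(U_1^*,U_2^*,U_3^*)\curvearrowright A$ and $B'=(V_1^*,V_2^*,V_3^*)\curvearrowright B$. The remaining freedom is a diagonal unitary torus $(D_1,D_2,D_3)$, acting by $a'_{ijk}\mapsto d_{1i}d_{2j}d_{3k}a'_{ijk}$.
\item Fix the phases by a canonical rule. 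Because $A$ is random, there is a large entry pattern available: for instance, choose phases making the entries $a'_{i11}$, $a'_{1j1}$, $a'_{11k}$ real positive. This requires anti-concentration showing these entries have modulus at least $n^{-c'}$ with high probability; I would try first to reduce this to small-ball estimates for the singular vectors. Then compare the normalized tensors in Frobenius norm, accept if the distance is at most $\gamma\varepsilon/2$, and output the composed unitaries as the witness.
\end{enumerate}

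For correctness, the main step is perturbation. If $\dist_\C(A,B)<\varepsilon$, then $\|W_i(A)-W_i(\tilde B)\|\le O(n\varepsilon)$ for some $\tilde B$ in the orbit of $B$. Davis--Kahan (or $\sin\Theta$) with the gap $n^{-c}$ then bounds each eigenvector perturbation by $O(n^{1+c}\varepsilon)$, modulo phases. The phase normalization is Lipschitz, with constant $n^{c'}$ from the anti-concentration bound. Together these give a distance of the normalized forms of $O(n^{8}\varepsilon)$ once the exponents are tracked; this is where $\gamma=O(n^8)$ and the requirement $\varepsilon\le O(n^{-3})$, which keeps perturbations below the gap, come from. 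Conversely, the output witness certifies closeness via orbit invariance of $\|\cdot\|_F$. So if $\dist_\C(A,B)>\gamma\varepsilon$, no witness can pass the test, and the algorithm rejects. Finite precision with $O(\log(n/\varepsilon))$ bits adds errors that can be absorbed into $\varepsilon$, using polynomial-time numerically stable Hermitian eigensolvers, which converge in time polynomial in $\log(1/\varepsilon)$ given the gap.

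I expect the main obstacle to be the quantitative bookkeeping. The exponent bounds have to combine the eigenvalue gap at polynomial aspect ratio with the phase-fixing anti-concentration, which depends on eigenvectors of $W_2$ and $W_3$ in a correlated way. If that correlation is a problem, I would instead choose the phase-fixing entries adaptively as the maximal-modulus entries along rows, which only need a $1/\poly(n)$ lower bound on the largest entry.
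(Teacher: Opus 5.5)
\paragraph{Verdict.} Your reduction to the HOSVD core and your alignment step are sound. The alignment step uses Davis--Kahan with the repulsion gap to show that a near-optimal $(L,R,T)$ is close to diagonal phase matrices, and it plays the role of \cref{lem:orthonormalvectors,lem:diagonalApproximation}. The genuine gap is in step 3.

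\paragraph{The star normalization needs an unproven lower bound.} A canonical phase normalization is Lipschitz only if the pinned entries are bounded below. Nothing available bounds $|a'_{i11}|$, $|a'_{1j1}|$, $|a'_{11k}|$ from below. For instance,
\[
a'_{i11}=\sum_{p,q,r}\overline{U_1(p,i)\,U_2(q,1)\,U_3(r,1)}\,a_{pqr}
\]
is a trilinear expression in eigenvectors that are themselves functions of $A$. This does not reduce to the small-ball estimates of \cref{sec-repulsion}, which all concern $\hat w^\top X$ with $X$ independent of $\hat w$ (obtained by deleting a column). You would need a new anti-concentration theorem for HOSVD cores.

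\paragraph{The adaptive fallback trades size for rigidity.} The fallback does remove the size problem. On the high-probability event that the spectrum of $G_1(A)$ lies near $n^2$, each mode-$1$ slice of the core has squared norm $\lambda_i(G_1(A))\approx n^2$, so its largest entry is $\Omega(1)$. But the selected triples must pin down $(D_1,D_2,D_3)$ modulo the trivial $2$-dimensional subtorus. That is, the constraints $\theta_{1i}+\theta_{2j}+\theta_{3k}=\mathrm{const}$ must have rank $3n-2$, and must be unimodular (as your star pattern is) to avoid a finite residual ambiguity mod $2\pi$. Slice-wise maxima need not do this. If they all lie at $(i,i,i)$ you get only $n$ independent constraints, and all-orthogonality of the core does not exclude such configurations. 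Ruling this out for a random core again needs a probabilistic statement about the dependent core entries. Without rigidity, the normalized forms of two nearby tensors can differ by a residual torus element, and your Frobenius comparison then rejects YES instances.

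\paragraph{How the paper avoids this.} The paper never canonicalizes, so it needs no lower bound on any individual core entry. After the alignment step it imposes phase constraints only on entries where $|\widetilde{S_A}(i,j,k)|+|\widetilde{S_B}(i,j,k)|$ exceeds a threshold of order $\varepsilon n^2\delta^{-1}(\|A\|_F+\|B\|_F)$. The remaining entries contribute at most that much each, which is absorbed into $\gamma$. It then decides by a feasibility check whether some common choice of phases satisfies all the imposed constraints.

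\paragraph{What would complete your route.} You would need either the missing anti-concentration and rigidity results for the random core, or to replace the canonical form by such a search over phases.
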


\begin{remark}\label{remark:M3}
It is instructive to compare Theorem~\ref{thm:robustOrbit} with \cite[Theorem M3]{AGLOW18}. 

In \cite{AGLOW18}, motivated by the orbit closure intersection problem of the left-right action on matrix tuples, the operator scaling technique reduces that problem to the orbit problem of the action of $\Unitary(n, \CC)\times\Unitary(n, \CC)$ on $\CC^n\otimes\CC^n\otimes\CC^m$. Note that there is no action on the $\CC^m$ component in the tensor product. For $A, B\in \CC^n\otimes\CC^n\otimes\CC^m$, we let $\widetilde\dist_\C(A,B)$ to denote the distance between the orbits of $A$ and $B$ under this action. \cite[Theorem M3]{AGLOW18} presented a polynomial-time algorithm to distinguish 
$$\widetilde \dist_\C(A,B)<\varepsilon\text{ from }\widetilde \dist_\C(A,B)>\gamma\varepsilon^{1/\poly(n, m)}\text{ for }\gamma = M^{\poly(n, m)},$$ 
where $M$ denotes the maximum magnitude over entries in $A$ and $B$. 

Compared with \cite[Theorem M3]{AGLOW18}, we note that Theorem~\ref{thm:robustOrbit} achieves better approximation ratio. On the other hand, note that the actions are different, and also our result is an average-case algorithm while \cite[Theorem M3]{AGLOW18} handles the worst-case setting.
\end{remark}

The average case analysis of \cref{thm:robustOrbit,thm:exactUnitIso} relies crucially on obtaining a subtle understanding of the spectra of random matrices. Specifically, we rely on the following repulsion result, proved in \cref{sec-repulsion}.

\begin{theorem}[Eigenvalue repulsion with polynomial gap]\label{thm-repulsion-unscaled}
    Fix a mean zero, variance one sub-Gaussian distribution $\xi$. Let $M$ be a $n \times p$ matrix with i.i.d.\ entries $m_{ij} \sim \xi$, and with $p = \Theta(n^{\zeta})$ for $\zeta \in (0, 1)$. Then, defining
    \begin{equation*}
        \MinEigenvalueGap(M^\top  M) \coloneqq \inf_{1 \leq i < p} (\lambda_i(M^\top  M) - \lambda_{i + 1}(M^\top  M))
    \end{equation*}
    and with $\beta > \zeta$, we have that
    \begin{equation*}
        \MinEigenvalueGap(M^\top  M) \geq \Omega((n^{(1 - \zeta)/2} - 1)n^{- \beta})
    \end{equation*}
    with probability at least $1 - O(n^{\zeta - \beta})$. Additionally, letting $M$ have complex entries $m_{ij} = a_{ij} + ib_{ij}$ where $a_{ij}, b_{ij}$ are i.i.d.\ real variables $\sim \xi$ then an identical bound holds for the eigenvalues of $M^\top M$.
\end{theorem}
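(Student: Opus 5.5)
We follow the strategy of Christoffersen--Luh--O'Rourke--Shearer and Han for eigenvalue repulsion of sub-Gaussian Wishart matrices, now in the regime $p=\Theta(n^\zeta)$. First we rescale. Put $W=M^\top M$, a $p\times p$ matrix. In this regime $W$ concentrates around $nI_p$. By standard sub-Gaussian singular value estimates (e.g.\ Vershynin's bound $\|M^\top M - nI\| \lesssim \sqrt{np}+p$ with probability $1-e^{-cp}$), every eigenvalue lies in $[n - C\sqrt{np},\, n+C\sqrt{np}]$. Since the $p$ eigenvalues sit in a window of width about $\sqrt{np}$, the typical spacing is $\sqrt{np}/p = \sqrt{n/p}\asymp n^{(1-\zeta)/2}$. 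This explains the factor $n^{(1-\zeta)/2}$ in the bound, and the $-1$ only absorbs lower-order corrections. So the target is: no gap is smaller than the typical spacing by more than a factor $n^{-\beta}$, outside an event of probability $O(n^{\zeta-\beta})$.

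The main step is a union bound over the $p-1$ consecutive gaps. It needs a single-gap estimate of the form $\Prob[\lambda_i-\lambda_{i+1}\le \delta\sqrt{n/p}]\lesssim \delta + e^{-cn^{c'}}$, applied with $\delta=n^{-\beta}$. Summing over $p\asymp n^{\zeta}$ indices then gives failure probability $O(n^{\zeta-\beta})$, as claimed.

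To prove the single-gap estimate we use the standard reduction. Write $M=[M'\ x]$, where $x\in\R^n$ is the last column. Cauchy interlacing between $W$ and the minor $W'=M'^\top M'$, together with the Schur complement, says the following: if $W$ has two eigenvalues within $\delta$ of each other near $\lambda$, then for a unit eigenvector $v=(u,t)$ of $W$ with $|t|\gtrsim p^{-1/2}$ (delocalization of the last coordinate, which we establish via a no-gaps/incompressibility argument), we get
$$\bigl|\langle M'^\top x, w\rangle\bigr| \lesssim \delta\,\|\text{stuff}\|,$$
where $w$ is an eigenvector of $W'$ determined by $M'$ and independent of $x$. Hence the event reduces to a small-ball probability for $\langle x, M'w\rangle$, which is a linear form in the independent sub-Gaussian vector $x$ with coefficient vector $y=M'w/\|M'w\|\in S^{n-1}$. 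By the Rudelson--Vershynin Littlewood--Offord theory, $\Prob[|\langle x,y\rangle|\le \varepsilon]\lesssim \varepsilon + 1/\LCD(y)$. We apply this conditionally on $M'$ and then symmetrize over which column is removed; averaging over the $p$ choices recovers the loss in the index.

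The main obstacle is the structural bound $\LCD(M'w/\|M'w\|)\ge e^{cn^{c'}}$ with overwhelming probability, uniformly over eigenvectors $w$ of $W'$. The existing works prove this when $p/n$ is constant, where the net arguments balance exactly. With $p=n^{\zeta}$ there is more room on the entropy side: the vectors live in the $p$-dimensional range of $M'$, so nets have size $e^{O(p\log)}$, while single-vector small-ball bounds decay like $e^{-cn}$. The delicate part is handling level sets of the LCD at scale up to $e^{cn^{c'}}$. We would first try the Rudelson--Vershynin level-set net argument, parametrized by $y=M'w$ with $w\in S^{p-1}$, and track the polynomial dependence in place of the linear one. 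The complex case runs in parallel by treating real and imaginary parts as a $2n$-dimensional real vector, using the complex small-ball inequality.
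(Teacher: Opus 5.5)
Your reduction is the same as the paper's proof of \cref{theorem-repulsion-singular-vals-squared-real}:
\begin{enumerate}
\item Remove the last column $x$.
\item Use interlacing and the eigenvector equation to get $|\langle x,y\rangle|\lesssim \delta p^{-1/2}/|t|$, where $y=M'w/\|M'w\|$ is a left singular vector of $M'$.
\item Replace $|t|$ by $p^{-1/2}$ using incompressibility of right singular vectors and averaging over the removed column. Note that incompressibility is also only asserted in your proposal; it is the routine half, and the paper proves it by a support-union bound in \cref{lem-incompressible-fixed-val-right}.
\item Bound the small-ball probability through the regularized LCD of $y$.
\item Union bound over the $p$ gaps.
\end{enumerate}

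The genuine gap is the step you flag as the main obstacle: the LCD lower bound for $y$. You give no proof, and the route you sketch does not close. The range of $M'$ is random, so there is no fixed $e^{O(p\log n)}$-size net of it to union bound over. If you instead net over $w\in S^{p-1}$, the event that $M'w/\|M'w\|$ has small LCD is not rare for a fixed $w$. For example, take Rademacher entries and the incompressible vector $w=\mathbf{1}/\sqrt{p-1}$. Then $\sqrt{p-1}\,M'w\in\Z^n$, so $\LCD(M'w/\|M'w\|)\le \sqrt{p-1}\,\|M'w\|\approx\sqrt{np}$ with probability one.

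The underlying problem is a mismatch of exponents. The small-ball bound with exponent $n$ comes from $\|M'w_0-\sigma_0 y_0-a\|$, and it is governed by the LCD of the \emph{right} vector $w_0$. The bound governed by the LCD of $y_0$ comes from $M'^\top y_0$ and has exponent only $p$. Meanwhile, the net of the low-LCD level set of $y_0\in S^{n-1}$ has exponent $n$. So your bookkeeping of ``$e^{O(p\log n)}$ entropy versus $e^{-cn}$ probability'' does not correspond to any actual union bound.

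The paper's \cref{lem-lcd-fixed-val} handles this by netting over pairs $(u,v)$:
\begin{itemize}
\item The right vector is netted over $S^{p-1}$, with entropy $e^{O(p\log n)}$, which is negligible.
\item The left vector is netted over the slices $S_{D,k}$ of its regularized-LCD level set, via the refined bound of \cref{lem-lcd-helper-level-size-refined}. Its factor $2^{-kn}$ cancels the $2^{kn}$ in the small-ball term.
\item The probability comes from the exponent-$n$ bound of \cref{lem-approx-singular} for $Mu_0$.
\item There is a case split on whether $\rLCD_{\kappa,\gamma}(v,\alpha)\le\rLCD_{\kappa,\gamma}(u,\alpha)$, so that the right vector's LCD is at least the left slice level. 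Otherwise the right level sets are sliced as well.
\end{itemize}

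You are also aiming higher than needed. The small-ball step only requires $\rLCD_{\kappa,\gamma}(y,\alpha)\gtrsim\sqrt{\alpha}\,n^{\beta}$. The paper proves $n^{c/\alpha}$ with $\alpha$ a small constant chosen according to $\beta$, which suffices for every fixed $\beta$. An $e^{cn^{c'}}$ bound is costly in this framework. After cancellation, a factor $D^{\lfloor \alpha n\rfloor}$ remains and must be absorbed by the $(\alpha n)^{-\Theta(n)}$ factor of the net bound, which is why the paper restricts to $D\le\kappa^{1/\alpha}$. An exponential $D$ would force $\alpha\to 0$ and degrade the $O(\delta/\sqrt{\alpha})$ bound.
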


This result is easily converted from the $n \times n^{1/k}$ regime to $n^k \times n$ with $k \in \R_+$ by substitution, yielding a gap
\begin{equation}\label{eq-gap-useable}
    \MinEigenvalueGap(M^\top  M) \geq \Omega((n^{(k - 1)/2} - 1)n^{- k\beta})
\end{equation}
with probability at least $1 - O(n^{1 - k\beta})$ where $\beta > 1/k$, where $k$ denotes the order of the tensors.

\begin{remark}[Probability versus precision tradeoff]
    In \cref{thm-repulsion-unscaled} the target gap appears in both the repulsion event and upper bound on the probability. This is a common feature of inverse Littlewood--Offord based repulsion results (see \cite{nguyen2015randommatricestailbounds, christoffersen2025gaps, han2025simplicity}) and in our setting enables a clean tradeoff between success probability and the precision demands of our numerical algorithms.
\end{remark}

A consequence of the analysis of \cref{thm:exactUnitIso,thm:robustOrbit,thm-repulsion-unscaled}, as well as the spectral stability corollary of the Weyl inequality, is that we are able to make the following smoothed analysis/local average-case complexity statements. Their proofs are trivial, so we omit them.

\begin{theorem}[Local analysis of the unitary tensor isomorphism problem]\label{thm:exactUnitIsoLocal}
There exists a deterministic algorithm which, given $E=(e_{ijk})\in\QQ(i)^{n\times n\times n}$ with i.i.d.\ entries $e_{ijk} \sim \xi$ for a fixed---mean zero, unit variance---sub-Gaussian distribution $\xi$, a constant $\beta \in (1/3, 1/2)$, a perturbation magnitude $\eta \geq 2 \cdot \|A\|(\|A\| + \sqrt{n} + K\sqrt{p})\cdot n^{3\beta}/(n + 1)$ and $A, B\in\QQ(i)^{n\times n\times n}$ arbitrary and deterministic, runs in polynomial-time on the input of $A + \eta E$ and $B$ and decides with high probability whether
\[
B\in\Orb(A)\coloneqq \{(L,R,T)\curvearrowright A:\ L,R,T\in \Unitary(n)\},
\]
with implied constants depending only on the distribution of $\xi$.
\end{theorem}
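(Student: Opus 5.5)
The plan is to reduce \cref{thm:exactUnitIsoLocal} to the global analysis behind \cref{thm:exactUnitIso}. The algorithm of \cref{thm:exactUnitIso} (higher-order singular value approach) needs only one property of its first input $X$ with high probability: each flattening $X_{(i)}\in\CC^{n\times n^2}$ ($i=1,2,3$) has a Gram matrix $X_{(i)}X_{(i)}^*$ with simple spectrum, separated by a quantifiable gap. Under this property the algorithm diagonalizes the three Gram matrices. It then fixes the unitaries $L,R,T$ up to diagonal phases, determined by the eigenbases, and reduces the residual phase ambiguity to a polynomial-time computation. We run the same deterministic algorithm on input $(A+\eta E, B)$. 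The proof then only has to show that $X=A+\eta E$ has simple flattening spectra with high probability, with the same quantitative gap used in the global analysis.

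\textbf{Step 1: the random part.} Write $X/\eta = E + A/\eta$. For $E$ alone, apply \cref{thm-repulsion-unscaled} in the form \eqref{eq-gap-useable} with $k=3$ to each flattening $E_{(i)}$, which is an $n\times n^2$ i.i.d.\ sub-Gaussian matrix, i.e.\ $M^\top$ for $M$ of size $n^2\times n$ in the $n^k\times n$ regime after rescaling exponents. For $\beta\in(1/3,1/2)$ this gives, with probability $1-O(n^{1-3\beta})=1-o(1)$,
\[
\MinEigenvalueGap(E_{(i)}E_{(i)}^*)\ \ge\ c\,(n-1)n^{-3\beta}.
\]
A union bound over $i=1,2,3$ costs only a constant factor in the failure probability.

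\textbf{Step 2: stability via Weyl.} We compare the Gram matrices:
\[
X_{(i)}X_{(i)}^*/\eta^2 - E_{(i)}E_{(i)}^*
= \bigl(A_{(i)}E_{(i)}^* + E_{(i)}A_{(i)}^*\bigr)/\eta + A_{(i)}A_{(i)}^*/\eta^2 .
\]
The operator norm of the right side is at most $(2\|A\|\,\|E_{(i)}\| + \|A\|^2/\eta)/\eta$. Standard sub-Gaussian bounds give $\|E_{(i)}\|\le \sqrt{n}+K\sqrt{p}$ with high probability, with $p=n^2$, and this is where the term $\|A\|+\sqrt n+K\sqrt p$ in the hypothesis comes from. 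By Weyl's inequality each eigenvalue moves by at most this norm, so the minimal gap drops by at most twice it. The hypothesis on $\eta$ is designed so that this perturbation is at most half of the Step 1 gap. One $\|A\|$ factor is absorbed into $\|A\|^2/\eta\le\|A\|^2$ once $\eta\ge1$, or is handled directly by the stated bound; the factor $2$ and the $(n+1)$ in the denominator match the Step 1 gap up to constants. Hence the gap of $X_{(i)}X_{(i)}^*/\eta^2$ is at least half of the random gap, and the spectrum is simple.

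\textbf{Step 3: conclusion.} Rescaling by $\eta^2$ does not affect simplicity, and the exact algorithm needs only simplicity and rational input. The perturbed input $A+\eta E$ has Gaussian-rational entries provided $\eta$ is taken rational. So the algorithm of \cref{thm:exactUnitIso} decides whether $B\in\Orb(A+\eta E)$ in polynomial time, with high probability over $E$. Here the statement's ``$\Orb(A)$'' should be read as the orbit of the observed input.

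The main obstacle is Step 2: matching constants and exponents so that the Weyl perturbation bound is dominated by the repulsion gap. This requires keeping track of which flattening dimension plays the role of $p$, and of the scaling $n^{(k-1)/2}$ versus $n-1$ in \eqref{eq-gap-useable}. I would first check the bookkeeping with $k=3$ and $p=n^2$. If the exponents do not close, I would strengthen $\eta$ by a polynomial factor, which the stated lower bound permits.
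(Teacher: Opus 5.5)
You follow the route the paper intends. The paper omits the proof, saying only that $\eta$ is chosen so that Weyl-type spectral stability carries the simple spectrum over to the Gram matrices of $A+\eta E$, after which \cref{alg:1} runs on $A+\eta E$. Reading $\Orb(A)$ as the orbit of the observed input is also the right interpretation. The gap is in Step 1, and it is exactly why Step 2 does not close.

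In the paper's convention $p$ is the \emph{short} side. Here $E_{(i)}^\top$ is $n^2\times n$, so the long side is $n^2$, $p=n=(n^2)^{1/2}$ and $\zeta=1/2$. That is the $k=2$ case of \eqref{eq-gap-useable}. The $k=3$ case describes $n^3\times n$ matrices, i.e.\ flattenings of order-$4$ tensors, and no ``rescaling of exponents'' turns one regime into the other. Applied correctly, \cref{thm-repulsion-unscaled} gives
$\MinEigenvalueGap(E_{(i)}E_{(i)}^*)\ge\Omega((\sqrt n-1)n^{-2\beta'})$ with failure probability $O(n^{1-2\beta'})$ and $\beta'>1/2$. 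So it never certifies a gap larger than order $n^{-1/2}$. You instead use $c(n-1)n^{-3\beta}$, which is $\gg n^{-1/2}$ for $\beta<1/2$. At equal failure probability (take $\beta'=3\beta/2$), your gap is too large by a factor $\sqrt n$.

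Now use the honest gap $g\approx c\sqrt n\,n^{-3\beta}$ and $\|E_{(i)}\|\le n+O(\sqrt n)$. The Weyl condition $\eta^2 g>2\bigl(2\eta\|A\|\|E_{(i)}\|+\|A\|^2\bigr)$ forces $\eta\gtrsim\|A\|\,n^{1/2+3\beta}$. The stated threshold is of order $\|A\|(\|A\|+n)n^{3\beta-1}$. That is too small unless $\|A\|\gtrsim n^{3/2}$, and off by a full factor $\sqrt n$ when $\|A\|=O(n)$.

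Your fallback does not repair this. The theorem asserts the conclusion for \emph{every} $\eta$ at or above the threshold, so you may not assume a larger $\eta$; doing so proves a weaker statement. Likewise, ``matching up to constants'' does not handle the explicit factor $2$ in the threshold. The Weyl step needs a constant that depends on the unspecified constant in the $\Omega(\cdot)$ of the gap.

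The statement's $n^{3\beta}$ and $\beta\in(1/3,1/2)$ appear to come from the same $k=3$ substitution in the paper. So your argument faithfully reproduces the intended proof. What it rigorously establishes, though, is the result with the threshold inflated by $\sqrt n$ and a $\xi$-dependent constant in place of $2$.
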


\begin{theorem}[Local analysis of the unitary orbit approximation problem]\label{thm:robustOrbitLocal}
There exists a deterministic algorithm which, given $\varepsilon > 0$ such that $\varepsilon =O(n^{-3})$, $E=(e_{ijk})\in\R^{n\times n\times n}$ with i.i.d.\ entries $e_{ijk} \sim \xi$ for a fixed---mean zero, unit variance---sub-Gaussian distribution $\xi$, a constant $\beta \in (1/3, 1/2)$, a perturbation magnitude $\eta \geq 2 \cdot \|A\|(\|A\| + \sqrt{n} + K\sqrt{p})\cdot n^{3\beta}/(n + 1)$ and $A, B \in \C^{n\times n\times n}$ arbitrary and deterministic, each specified to $O(\log(n/\varepsilon))$-bits of precisions, runs in polynomial-time in $n,\log(1/\varepsilon)$ and distinguishes $\dist_\C(A + \eta E,B)<\varepsilon$ from $\dist_\C(A + \eta E,B)>\gamma\varepsilon$ for $\gamma = O(n^8)$ with high probability, and in the former case outputs a witness $(\widetilde{L},\widetilde{R},\widetilde{T})\in \Unitary(n)^3$ such that
\[
\|(\widetilde{L},\widetilde{R},\widetilde{T})\curvearrowright (A + \eta E)-B\|_F< \gamma\varepsilon.
\]
\end{theorem}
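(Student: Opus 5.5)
The plan is to run the algorithm of \cref{thm:robustOrbit} unchanged on the input $A+\eta E$ and $B$. The only new ingredient is a local version of the spectral gap estimate. Namely, we need to show that, with high probability over $E$, the Gram matrices of the three flattenings of $A+\eta E$ still have a minimum eigenvalue gap of the size assumed in the analysis of \cref{thm:robustOrbit}. Once this is known, that analysis goes through verbatim: singular-vector perturbation bounds, recovery of the witness $(\widetilde L,\widetilde R,\widetilde T)$, and the $\gamma=O(n^8)$ approximation ratio. The reason is that the analysis uses $A$ only through this gap and through an operator-norm bound.

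Step one is to rescale. The orbit distance is homogeneous, so $\dist_\C(A+\eta E,B)=\eta\,\dist_\C(\eta^{-1}A+E,\eta^{-1}B)$, and we may equivalently study $E+\eta^{-1}A$. Fix a flattening, writing $M_E$ for the $n\times n^2$ matrix of $E$ and $M_A$ for that of $A$. Then
\[
(M_E+\eta^{-1}M_A)^*(M_E+\eta^{-1}M_A)=M_E^*M_E+\Delta,\qquad \|\Delta\|\le 2\eta^{-1}\|A\|\,\|M_E\|+\eta^{-2}\|A\|^2 .
\]
By the standard sub-Gaussian operator norm bound, with high probability $\|M_E\|\le \sqrt{n}+K\sqrt{p}$, where $p$ is the other side length. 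Since $\eta\ge1$ in the regime of interest, this gives $\|\Delta\|\le 2\eta^{-1}\|A\|(\|A\|+\sqrt n+K\sqrt p)$.

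Step two is to combine this with \cref{thm-repulsion-unscaled} in the form \eqref{eq-gap-useable} with $k=3$, and with Weyl's inequality. Weyl's inequality moves each eigenvalue by at most $\|\Delta\|$, so
\[
\MinEigenvalueGap(M^*M)\ge \MinEigenvalueGap(M_E^*M_E)-2\|\Delta\|,
\]
where $M=M_E+\eta^{-1}M_A$. Applying \eqref{eq-gap-useable} with $\beta\in(1/3,1/2)$, the first term is $\Omega(n\cdot n^{-3\beta})$ with probability $1-O(n^{1-3\beta})$. The hypothesis on $\eta$ is tailored to this: it gives $\|\Delta\|\le (n+1)n^{-3\beta}/2$, so the perturbed gap stays of order $n^{1-3\beta}$, possibly up to a constant halving. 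A union bound over the three flattenings then gives the required high-probability statement.

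Step three is to check that the constants in \cref{thm:robustOrbit} tolerate this. If the gap is only a constant factor smaller, $\gamma$ changes only by a constant. The precision requirement $O(\log(n/\varepsilon))$ is also unaffected, since $\|A+\eta E\|$ is polynomially bounded given the inputs. The main obstacle is exactly this constant bookkeeping: the gap $\Omega((n-1)n^{-3\beta})$ from \eqref{eq-gap-useable} has an unspecified implied constant. We must verify that the factor $2$ and the $(n+1)$ denominator in the hypothesis on $\eta$ absorb $2\|\Delta\|$. If they do not, we would enlarge the constant in the hypothesis on $\eta$, which is harmless for the statement. The rest is a direct reuse of the proof of \cref{thm:robustOrbit}. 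The same argument, with exact arithmetic in place of the numerical steps, gives \cref{thm:exactUnitIsoLocal}.
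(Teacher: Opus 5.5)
Your proposal follows the paper's intended argument: the paper omits the proof as trivial, remarking only that $\eta$ is chosen so that Weyl-type spectral stability carries the repulsion gap of $E$ over to the Gram matrices of $A+\eta E$, after which the analysis behind \cref{thm:robustOrbit} is reused unchanged. The paper leaves the constant bookkeeping you flag just as implicit; note that the hypothesis on $\eta$ actually gives $\|\Delta\|\le (n+1)n^{-3\beta}$, not half of that, and you also inherit the paper's use of \eqref{eq-gap-useable} with $k=3$, although the Gram matrices of an $n\times n\times n$ tensor come from $n^2\times n$ flattenings (so $k=2$ there), meaning the exponents in the hypothesis on $\eta$ stand or fall with that convention.
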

The perturbation size $\eta$ is chosen such that through spectral stability one can deduce the preservation of the simplicity of the spectrum of $(A + \eta E)^T(A + \eta E)$ under the additive perturbation from $A$. These simple theorems are intended to serve as a baseline for more sophisticated smooth analysis (see \cref{sec-open-problems}).

Indeed, notice that for $k$ sufficiently large (with $k$ again denoting the order of the tensors), and a sufficiently small target success probability, by \cref{thm-repulsion-unscaled} the spectral gap can grow, in which case the perturbation size $\eta$ in \cref{thm:exactUnitIsoLocal,thm:robustOrbitLocal} can be taken to shrink. On the other hand, when $k$ is small, the perturbation $\eta$ is quite large. This indicates a qualitative distinction between the smoothed analysis setups for $k$-TI as $k$ varies, and indicates a limitation of our smoothed analysis.

\paragraph{Applications of our results.} Our main results have the following interpretations. 

First, we instantiate the sub-Gaussian distribution in Theorem~\ref{thm:exactUnitIso} with the Gaussian distribution, and view unit tensors in $\CC^n\otimes \CC^n\otimes \CC^n$ as pure quantum states. Suppose we are given a quantum state as an $n\times n\times n$ tensor. Then we have: 
\begin{corollary}
    There is a polynomial-time algorithm that decides whether a Haar-random quantum state and an arbitrary state in $\CC^n\otimes \CC^n\otimes \CC^n$ are locally unitary equivalent. 
\end{corollary}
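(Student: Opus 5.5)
The plan is to derive the corollary from \cref{thm:exactUnitIso} (or, where exact arithmetic is problematic, from \cref{thm:robustOrbit}), instantiated with $\xi$ the standard complex Gaussian. We use the standard fact \cite{mezzadri2007generate} that if $G\in\CC^n\otimes\CC^n\otimes\CC^n$ has i.i.d.\ standard complex Gaussian entries, then $\psi=G/\|G\|_F$ is a Haar-random pure state. Local unitary equivalence of $\psi$ and an arbitrary unit state $\phi$ means exactly $\phi\in\Orb_\CC(\psi)$. Unitary actions preserve the Frobenius norm, so for any scalar $c>0$ we have $\phi\in\Orb_\CC(\psi)$ if and only if $c\phi\in\Orb_\CC(c\psi)$. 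Taking $c=\|G\|_F$, the question becomes whether $\|G\|_F\,\phi\in\Orb_\CC(G)$.

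The algorithm is as follows. Given $\psi$, we cannot recover $G$ itself, but we do not need to. The law of $G$ factors as an independent radius $\|G\|_F$ times the direction $\psi$, so the algorithm can sample a fresh radius $r$ with the $\chi$-distribution of $\|G\|_F$, independently of $\psi$. Then $r\psi$ has exactly the law of $G$, with i.i.d.\ complex Gaussian entries. We run the algorithm of \cref{thm:exactUnitIso} on the pair $A=r\psi$, $B=r\phi$. A cleaner alternative is to note that the correctness of that algorithm only depends on $A$ through scale-invariant spectral data, namely simple spectra of the flattening Gram matrices and the repulsion from \cref{thm-repulsion-unscaled}. In that case it can be run directly on $(\psi,\phi)$ with $B$ normalized.

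The hypotheses of \cref{thm:exactUnitIso} are satisfied:
\begin{itemize}
\item $B$ is arbitrary and may depend on $A$, which the theorem allows.
\item The complex Gaussian is sub-Gaussian with mean zero and unit variance, and the success probability over $A$ transfers to success over Haar-random $\psi$.
\end{itemize}

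The main obstacle is the input model. \cref{thm:exactUnitIso} requires entries in $\QQ(i)$, while Gaussian entries and Haar states are continuous. I would handle this in one of two ways. The first is to pass to \cref{thm:robustOrbit}: truncate $\psi$ and $\phi$ to $O(\log(n/\varepsilon))$ bits with $\varepsilon=n^{-C}$, and interpret "decides" in the standard finite-precision sense of distinguishing equivalence from being $\gamma\varepsilon$-far from equivalent. The second is to take a Gaussian-like distribution on $\QQ(i)$ discretized finely enough that the repulsion bounds still apply. Either way, the remaining step is to check that the scaling by $r$ and the truncation do not destroy the polynomial gap from \cref{thm-repulsion-unscaled}. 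This holds because $\|G\|_F=\Theta(n^{3/2})$ with overwhelming probability, which only changes the gaps by polynomial factors.
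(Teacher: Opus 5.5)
Your proposal is correct and follows the paper's route. The paper obtains this corollary simply by instantiating \cref{thm:exactUnitIso} with the complex Gaussian and identifying normalized Gaussian tensors with Haar-random states via \cite{mezzadri2007generate}. Your further points fill in details the paper leaves implicit: that orbit membership and simplicity of the Gram spectra are invariant under scaling by $c>0$ (this comes from linearity of the action, not norm preservation), and that continuous Gaussian or Haar inputs must be handled, either by truncating and using \cref{thm:robustOrbit} or by switching to a discretized distribution over $\QQ(i)$, each of which slightly reinterprets the statement.
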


Second, we instantiate the sub-Gaussian distribution with the Rademacher distribution, and view $\pm 1$ tensors in $\RR^n\otimes \RR^n\otimes \RR^n$ as adjacency tensors of 3-uniform tripartite hypergraphs. Combining our Theorem~\ref{thm-repulsion-unscaled} with the Leighton and Miller's spectral algorithm for graph isomorphism \cite{leighton1979certificates} (see \cite{spielman2018testing}), we have the following result, whose proof sketch is in Appendix~\ref{app:hypergraph-iso}.  

\begin{corollary}\label{cor:hypergraph-iso}
    There is a polynomial-time algorithm that decides whether a random 3-uniform tripartite hypergraph and an arbitrary 3-uniform tripartite hypergraph are isomorphic.
\end{corollary}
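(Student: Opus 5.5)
Write the random hypergraph as a Rademacher tensor $A\in\{\pm1\}^{n\times n\times n}$ with entry $+1$ on hyperedges and $-1$ otherwise, and the arbitrary hypergraph likewise as $B$. An isomorphism of tripartite hypergraphs is a triple of permutations $(P,Q,S)$ acting on the three parts; it is a special orthogonal isomorphism, so $(P,Q,S)\curvearrowright A=B$. The plan is to reduce to isomorphism of weighted graphs with simple spectrum along each mode and then apply the Leighton--Miller algorithm \cite{leighton1979certificates,spielman2018testing}.

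\textbf{Step 1: flattenings and Gram matrices.} For each mode $r\in\{1,2,3\}$, form the flattening $A_{(r)}\in\RR^{n\times n^2}$ and the Gram matrix $G_r(A)=A_{(r)}A_{(r)}^\top$. This is an $n\times n$ integer matrix, so it is a weighted complete graph on the $r$-th part. If $(P,Q,S)\curvearrowright A=B$, then $G_1(B)=PG_1(A)P^\top$, and similarly for the other modes. So $P$ is an isomorphism of the weighted graphs $G_1(A)\to G_1(B)$, and likewise $Q$ for $G_2$ and $S$ for $G_3$.

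\textbf{Step 2: simple spectrum.} Apply \cref{thm-repulsion-unscaled} in the form \eqref{eq-gap-useable} with $k=2$ to the $n^2\times n$ matrix $A_{(r)}^\top$. Its Gram matrix is $G_r(A)$. Taking $\beta$ slightly above $1/2$ gives a strictly positive gap with probability $1-o(1)$. A union bound over the three modes then shows that every $G_r(A)$ has simple spectrum with high probability.

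Before proceeding, the algorithm checks that $G_r(B)$ has the same characteristic polynomial; if not, it rejects. This check is exact integer arithmetic.

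\textbf{Step 3: Leighton--Miller for each mode.} Leighton--Miller decides isomorphism of graphs with simple spectrum in polynomial time, and it works verbatim for symmetric integer matrices. More importantly, it returns the full set of isomorphisms $G_r(A)\to G_r(B)$ as a coset of the automorphism group. With simple spectrum the automorphisms act diagonally in the eigenbasis as sign changes, so this group is an elementary abelian $2$-group.

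The algorithm then needs a triple $(P,Q,S)$ from the product of the three cosets with $(P,Q,S)\curvearrowright A=B$.

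\textbf{Step 4 (main obstacle): combining the three modes.} The three per-mode cosets must be glued into a single tensor isomorphism. I would first try to show that, with high probability, each $G_r(A)$ has trivial automorphism group. Then each coset is a single permutation or empty, and the algorithm simply checks the unique candidate triple directly.

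To prove triviality, I would argue as follows. An automorphism is a permutation commuting with $G_r(A)$, so it fixes every eigenvector up to sign. With high probability no eigenvector of a random Gram matrix is invariant up to sign under any nontrivial permutation: such invariance would force two coordinates to have equal absolute value. That event could be excluded by a no-gaps or anti-concentration delocalization estimate, which I would take as an input rather than prove here.

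If triviality is hard to establish, the fallback is structural. Each coset is a coset of a group of the form $\ZZ_2^d$ acting through eigenspace signs. For fixed $P$, the tensor equation $(P,Q,S)\curvearrowright A=B$ is linear in the sign choices for the other two modes after projecting onto the eigenbases. This reduces the search to solving a system over $\F_2$ by Gaussian elimination, which runs in polynomial time.
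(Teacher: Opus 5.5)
Steps 1--3 match the paper's sketch: mode flattenings, \cref{thm-repulsion-unscaled} with $k=2$ for simple spectrum, and Leighton--Miller giving each coset $C_r=\sigma_r B_r$ with $B_r$ a $2$-group. The problem is in Step 4.

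\textbf{Primary plan.} Triviality of $B_r$ with high probability is an input you do not prove, and the paper does not need it. Your sketched justification also targets the wrong event. An automorphism with $a\mapsto b$ forces $|v_a|=|v_b|$ for \emph{every} eigenvector $v$. So what you would need is that each pair $a\neq b$ is separated by some eigenvector, and no-gaps delocalization does not give that.

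\textbf{Fallback, as written.} It is not polynomial-time. ``For fixed $P$'' means enumerating $P\in C_1$. But $|C_1|=|B_1|$ can be exponential for simple-spectrum matrices, e.g.\ $2^{n/2}$ for a direct sum of $2\times 2$ blocks with equal diagonal entries and distinct eigenvalues. Nothing in your argument bounds $|B_1|$ for random $A$.

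\textbf{The repair.} Do not fix $P$ at all; the system is jointly linear in all three modes.
\begin{itemize}
\item Put $A'=(\sigma_1,\sigma_2,\sigma_3)\curvearrowright A$, so that $G_r(A')=G_r(B)$, and let $U_r$ be an orthonormal eigenbasis of $G_r(B)$.
\item Each automorphism of $G_r(B)$ has the form $U_r\,\mathrm{diag}((-1)^{x_r(1)},\dots,(-1)^{x_r(n)})\,U_r^\top$, with $x_r$ ranging over the subspace $W_r\le\F_2^n$ spanned by the sign patterns of the Leighton--Miller generators.
\item With $S_X=(U_1^\top,U_2^\top,U_3^\top)\curvearrowright X$, the condition $(p,q,s)\curvearrowright A'=B$ reads $(-1)^{x_1(i)+x_2(j)+x_3(k)}S_{A'}(i,j,k)=S_B(i,j,k)$ for all $(i,j,k)$.
\item Equivalently, the moduli must agree, and $x_1(i)+x_2(j)+x_3(k)=b_{ijk}$ over $\F_2$ on the support. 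This is jointly linear in $(x_1,x_2,x_3)$, so one Gaussian elimination decides it.
\end{itemize}

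\textbf{Comparison with the paper.} Once corrected, your route is valid and genuinely different. The paper stays in the permutation picture. After applying the coset representatives, what remains is the setwise transporter problem for the $2$-group $B_L\times B_M\times B_N$ acting on $[n]^3$, mapping one hyperedge set to the other. It hands this to Luks' bounded-valence machinery, which needs only that the $B_r$ are $2$-groups and no algebraic-number computation beyond Leighton--Miller. Your version replaces Luks by linear algebra over $\F_2$. It exploits that the $B_r$ are elementary abelian and act diagonally by signs in the eigenbases. The cost is computing the core tensors $S_{A'},S_B$ exactly over the algebraic numbers (signs and zero tests), as in \cref{alg:1}.
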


Note that Corollary~\ref{cor:hypergraph-iso} can be viewed as a hypergraph analogue of the spectral algorithms for random graph isomorphism discussed at the end of Section~\ref{subsec:average-case}. It cannot be deduced from these algorithms because the reduction from hypergraph isomorphism to graph isomorphism \cite{KoblerEtAl1993} does not preserve the distributions underlying the average-case analyses.

\subsection{Overview of techniques}\label{subsec:techniques}

Here we illustrate the principal tools and techniques used in proving the main results of this paper. 

\subsubsection{Algorithmic techniques}

\paragraph{An Isomorphism Invariant.} Let $A=(a_{i,j,k})\in \CC^{n\times n\times n}$. A classical isomorphism invariant of $A$ under action of $\Unitary(n)^3$ is the following. First, slicing along the third index, we obtain a matrix tuple $(A_1, \dots, A_n)$, where $A_k(i, j)=a_{i,j,k}$; this is treated as an $n \times n^2$ matrix $\hat{A}$. Construct the Gram matrix $G_3(A)=\hat{A}\hat{A}^*$ where $\hat{A}^*$ is the conjugate transpose of $\hat{A}$, and $3$ in the subscript indicates that the Gram matrix comes from slicing along the third direction. It is clear that $G_3(A)$ is invariant under the action of the first two components of $\Unitary(n)^3$ (i.e.\ in $(L, R, T) \curvearrowright A$). The third component $T\in \Unitary(n)$ sends $G_3(A)$ to $TG_3(A)T^*$. As a result, the spectrum of $G_3(A)$ is an isomorphism invariant of $\Unitary(n)^3$ on $A=(a_{i,j,k})\in \CC^{n\times n\times n}$.

If the spectrum of $G_3(A)$ is simple, that is every eigenvalue appears with multiplicity $1$, then these $1$-dimensional eigenspaces need to be matched accordingly in an isomorphism from $A$ to an isomorphic $B$. Note that eigenvalues of $G_3(A)$ correspond to singular values of the matrix $\hat{A}$. Suppose that this also holds for $G_1(A)$ and $G_2(A)$, the Gram matrices obtained by slicing along two other directions. Then it is not hard to see that testing isomorphism between $A$ and another $B\in \CC^{n\times n\times n}$ can be done similarly; that the spectrum align of the Gram matrices $G_i(A)$ and $G_i(B)$ is a necessary albeit not sufficient condition for $A$ and $B$ to be unitarily isomorphic.

\paragraph{Exact Unitary Tensor Isomorphism.}
We now outline the algorithm used to establish \cref{thm:exactUnitIso}. First, compute the eigenvalues and their eigenspaces for $G_{d}(A)$, $d=1, 2, 3$, defined above. For $A$ and $B$ to be isomorphic, $G_{d}(B)$ needs to have the same eigenvalues with the same multiplicities as $G_{d}(A)$ for every $d=1, 2, 3$. Then, the problem reduces to that of deciding whether $A$ and $B$ are isomorphic under invertible diagonal matrices. Finally, we solve the corresponding tensor diagonal isomorphism problem.

To implement the above algorithm outline requires some care. For example, while computing eigenvalues and eigenspaces is a basic task in computational linear algebra, rigorous analysis with provable guarantees can be tricky. An exact algorithm in the algebraic model was obtained by Cai \cite{cai1994computing}, and a numerical algorithm with a rigorous analysis was only recently shown by Dey, Kannan, Ryder and Srivastava \cite{dey2023bit}. Depending on the computing model (algebraic or numerical), we can make use of \cite{cai1994computing} or \cite{dey2023bit}. As another example, to solve tensor diagonal isomorphism, we resort to the recent work on torus actions \cite{burgisser_et_alCCC.2021.32}. These are put together to achieve Theorem~\ref{thm:exactUnitIso}.

\paragraph{Unitary Orbit Approximation.}
A central piece of the proof is the application of the Higher-Order Singular Value Decomposition (HOSVD), as in the exact unitary isomorphism case. However, we must develop additional techniques to modify the algorithm for the ``robust'' setting. The existing HOSVD algorithm determines whether two tensors are exactly isomorphic. Answering the robust version of the problem requires non-trivial modifications to the algorithm.

In essence, the HOSVD algorithm computes the mode-$d$ flattening of each tensor for each mode $d=1,2,3$, and then computes the mode-$d$ Gram matrix by multiplying each flattened tensor by its conjugate transpose. Then, the algorithm computes the spectra of these Gram matrices and compares them. In the exact version of the problem, we also require that these spectra are simple, and the two tensors are isomorphic only if the simple spectra match for corresponding Gram matrices of the tensors. If the spectra do match, we establish a system of linear equalities which has a solution corresponding to a set of ``phase-matrices'' if and only if the tensors are isomorphic. 

In our robust version of the algorithm, we again compute the Gram matrices, and this time require that the spectra have a sufficiently large eigenvalue gap (analogous to the simple spectra condition in the exact setting). Now, instead of checking whether or not the spectra are exactly equal, we check whether they are \emph{sufficiently close}. Then, instead of establishing a system of equalities, we establish a system of \emph{inequalities} corresponding to a set of approximate phase-matrices. While these modifications are natural, proving correctness of the new algorithm is the main hurdle. 

To prove correctness we must show that our approximate solution is not too far from the true orbit distance. We first use Weyl's inequality to relate the distance bound on the tensors to a distance bound on the Gram matrices. Then we use a probability trick, interpreting the diagonal entries of a Gram-matrix as the values of a random variable and a row of a unitary matrix as the corresponding probabilities of those values. In this view, we can bound the distance from the optimal solution using the variance of this imagined random variable. Note that we do not rely on the results of \cite{burgisser_et_alCCC.2021.32} for the unitary orbit approximation.

\subsubsection{Probabilistic techniques}
\paragraph{Average-case analysis.} The key to our algorithm is to assume that $G_{d}(A)$ has a simple spectrum with some quantitative repulsion. The proof strategy we adopt, initially established in \cite{nguyen2015randommatricestailbounds} in the context of Hermitian random matrices and then adapted to sub-Gaussian Wishart matrices in \cite{christoffersen2025gaps}, follows largely the structure of \cite{christoffersen2025gaps}. Both papers tackle the minimal eigenvalue gap problem by reducing it to producing a lower bound on a certain small ball probability associated with the random matrix $M$. In \cite{nguyen2015randommatricestailbounds} this amounts to controlling the concentration properties of the inner product of an eigenvalue $v$ of a submatrix $M'$ of $M$ with an independent sub-Gaussian vector, but in \cite{christoffersen2025gaps} the dependence structure of $M = A^\top A$ with $A$ populated by i.i.d.\ entries is more complex and so the authors control properties of both the left and right singular vectors associated with $M$. In both cases, the tools used to control the inner product are from an area known as ``inverse Littlewood--Offord theory''. The primary technical hurdle our regime ($p = o(n)$) is that much of the symmetry of the argument in \cite{christoffersen2025gaps} when dealing with left and right singular vectors breaks. We discuss the results of \cite{christoffersen2025gaps}, and the way in which we build on them, in \cref{sec-clos-proof-structure}.

There are two ways in which the quantity $|v'X|$ can be small: 1) the singular vector has relatively many zeroes (or entries close to zero), which is captured by the notion of \textit{compressibility}, or 2) the singular vector has substantial additive structure, and therefore a possibility of cancellation in the inner product, which is captured by the notion of \textit{low LCD}, where LCD here stands for Least Common Denominator. The aforementioned papers work by ruling out conditions 1) and 2) successively. We rule out both properties for the left singular vectors, but for the right singular vectors we need only establish 1).

\paragraph{Heuristics for the repulsion.}
We briefly discuss a pair of heuristics for why we should expect the particular quantitative repulsion seen in \cref{thm-repulsion-unscaled}.
In particular, one may notice that in \cref{thm-repulsion-unscaled} that if $p = n^{\zeta}$ for $\zeta > 0$ sufficiently small, that there exist choices of the target gap (configured by sufficiently small $\beta$) such that the extremal eigenvalue gap of $M^\top M$ actually \textit{grows}. At first glance this may be somewhat surprising, so we provide a brief explanation.

First, for an $n \times p$ sub-Gaussian matrix $M$ with $n > p$, we have that with overwhelming probability the singular values of $M$ sit in a window $[\sqrt{n} - \Theta(\sqrt{p}), \sqrt{n} + \Theta(\sqrt{p})]$ (see \cref{subsec-prob-prelims}). As there are $p$ singular values, a \textit{single} singular value gap is roughly on the order of $1/\sqrt{p}$. As the extremal singular gap is smaller, we take as our ansatz a singular value gap of the form $1/p^{\Delta}$ with $\Delta > 1/2$ and independent of $\zeta$. Now, the $i$'th gap between \textit{square} singular values can be written as the difference of squares $\sigma_i^2(M) - \sigma_{i + 1}^2(M) = (\sigma_i(M) - \sigma_{i + 1}(M))(\sigma_i(M) + \sigma_{i + 1}(M))$ which by our ansatz and overwhelming bound on the smallest singular value gives $\sigma_i^2(M) - \sigma_{i + 1}^2(M) \geq 2(\sqrt{n} - \Theta(\sqrt{p}))/(p^{\Delta}) = 2(n^{1/2} - \Theta(n^{\zeta/2}))/(n^{\zeta \cdot \Delta})$ for all $1 \leq i \leq p$. Clearly, when $\zeta$ is sufficiently small, the $\sqrt{n}$ term in the numerator dominates, and the extremal gap diverges as $n \to \infty$. Note though that to achieve the strongest repulsion probability in \cref{thm-repulsion-unscaled} that one takes $\beta$ arbitrarily large, and so naturally winds up studying a shrinking gap.

Secondly, we describe a limited case in which our results for the structure of the right singular vectors (\cref{lem-incompressible-fixed-val-right}) tall case can be deduced from the linearly related results of \cite{han2025simplicity, christoffersen2025gaps}. If the matrix $M$ has i.i.d.\ entries $\sim \mathcal{N}(0, 1)$, i.e. the classical Wishart ensemble (denoted $W(p, n)$), then the matrix ensemble $W(p, n)$ is orthogonally invariant and therefore has eigenvectors which are Haar distributed on $\Orth(p)$ and independent of the eigenvalues regardless of $n$ (see \cite{anderson2010introduction}), and the structural conclusions of the eigenvectors of the ensembles $W(p, p)$ and $W(p, n)$ with $n > p$ are identical. This also motivates the assumption that $\Delta$ and $\zeta$ are independent in the spacing heuristic.

\subsection{Open problems}\label{sec-open-problems}
Our main results, as well as the techniques we use to prove them, have led us to identify the following salient open problems. These problems are those most directly impacting our ability to obtain a more complete understanding of TI problems, especially as it pertains to their average-case complexity.

\begin{openproblem}[General linear group tensor isomorphism.]
Perhaps the principal problem our work leaves open to further study is the question of whether 3-tensor isomorphism under general linear group actions is easy on average. While $3$-tensor isomorphism remains exponentially hard over finite fields \cite{LQ17,BLQW20}, it may be the case that more tools and techniques could be developed for $\RR$ or $\CC$.
\end{openproblem}

\begin{openproblem}[Smoothed analysis for Tensor Orthogonal and Unitary Isomorphism.]
    In the study of Higher-Order Singular Value Decomposition, it is usually assumed that the marginal spectra are simple. This prompts a smoothed analysis \cite{spielman2004smoothed} study of this phenomena, namely a small perturbation of the entries of a fixed tensor would result in marginal spectra being simple. Our Theorem~\ref{thm:exactUnitIsoLocal} is motivated by such a consideration  and a first step in this direction.
\end{openproblem}

\begin{openproblem}[Average-case algorithms for Tensor Orbit Closure Intersection.] As described in Section~\ref{subsec:motivation}, Tensor Orbit Closure Intersection (TOCI) has been a major problem in non-commutative optimization, but some works provide certain barriers for existing techniques. However, to the best of our knowledge, there are no formal average-case algorithmic studies for such problems yet. Given that Tensor Unitary Isomorphism plays a key role in several algorithms there, it is expect that our average-case algorithms for Tensor Unitary Isomorphism will motivate research into this direction.
\end{openproblem}

\begin{openproblem}[Tensor Orthogonal and Unitary Isomorphism over finite fields.]
Tensor Orthogonal Isomorphism can also be studied over finite fields. While the spectra for random matrices have been studied \cite{NEUMANN_PRAEGER_1998,luh2021some}, singular values of Wishart matrices over finite fields do not appear to be studied yet, and it is unclear if there would be singular values of the original finite field.
\end{openproblem}

\begin{openproblem}[From simple spectra to bounded multiplicity.]
Another intriguing problem is to relax from simple spectra (each eigenvalue appearing with multiplicity $1$) to bounded multiplicity (each eigenvalue appearing with multiplicity at most $c$ for a constant $c$). In this setting, efficient algorithms based on invariant theory (following \cite{burgisser_et_alCCC.2021.32}) may be feasible, as the ``non-commutative parts'' are rather restricted so exponential upper bounds on the degree of generation \cite{derksen2001polynomial} could be used here.
\end{openproblem}

\subsection{A broader perspective on previous work}\label{subsec:previous}

\paragraph{$3$-tensors and $k$-tensors.} Tensor isomorphism problems naturally generalizes from order-$3$ tensors to tensors of arbitrary order. Interestingly, for any $k>3$, the general linear $k$-tensor isomorphism reduces to the general linear $3$-tensor isomorphism \cite{TI1} in polynomial time, as well as the corresponding unitary and orthogonal versions \cite{TI3}. For this reason we will restrict to the consideration of 3-tensor isomorphism problems, and abuse notation by letting TI denote both tensor isomorphism problems, as well as the complexity class (we will similarly abuse notation for the class GI).

\paragraph{Average-case algorithms for graph isomorphism.} After the first average-case algorithm for graph isomorphism by Babai, Erd\H{o}s, and Selkow \cite{BES80}, the failure probability of average-case algorithms was improved in several works by Karp \cite{karp1975fast}, Lipton \cite{lipton1978beacon}, and Babai and Ku\v{c}era \cite{DBLP:conf/focs/BabaiK79}. Random graph isomorphism in the general Erd\H{o}s--R\'enyi model $\mathcal{G}(n, p)$, $p$ not necessarily $1/2$, has been studied by Bollob\'as \cite{bollobas1982distinguishing}, Czajka and Pandurangan \cite{czajka2008improved}, and Linial and Mosheiff \cite{linial2017rigidity}. Random hypergraph isomorphism was discussed in \cite{chakraborti2021isomorphism}.

\paragraph{Random matrix theory.}
In random matrix theory, a productive line of modern research considers the understanding of global and local spectral statistics of a given random matrix ensemble. Particular quantities of interest are the invertibility property of a random matrix, the asymptotic behavior of the gap between adjacent eigenvalues, and the behavior of the smallest of all such gaps. As one varies the properties of the random matrix ensemble, such as the marginal distribution of an entry, the dependence between entries, and the dimensions of the matrix, these spectral properties can vary considerably.

The question of the invertibility likelihood of random matrices saw major progress with the line of work \cite{rudelson2008littlewood, tao2009inverse, tao2009littlewood, rudelson2009smallest, vershynin2014invertibility} addressing comprehensively the task for sub-Gaussian random matrices, and furnishing the random matrix theory literature with tools such as Littlewood--Offord theory and its inverse analogue.

Roughly, given a deterministic vector $x \in \R^n$ and a random vector $X \in \R^n$, the Littlewood--Offord problem asks about the impact of the structure of $x$ is on the probability of the event that $\langle x, X \rangle$ is large. Conversely, the inverse Littlewood--Offord problem asks about the structural implications for $x$ if $\langle x, X \rangle$ is only large with small probability \cite{tao2009inverse}. As is reflected in \cite{nguyen2015randommatricestailbounds, han2025simplicity, christoffersen2025gaps}, and indeed in our own work, the insights of inverse Littlewood--Offord theory provide a clean enumeration of properties of eigenvectors which one must rule out to deduce eigenvalue repulsion.

Studying the gaps between eigenvalues is typically easier in the ``bulk'' of the spectrum (i.e. away from the biggest and smallest eigenvalues) and results on local eigenvalue statistics for the bulk of Wigner and Wishart matrices were obtained in \cite{tao2011random} and \cite{Tao_2012} respectively, before being extended to the edge of the spectrum in \cite{Tao_2010} and \cite{wang2012random} respectively. Notably, these results are universal in that they apply generically to matrix ensembles with coordinate distributions satisfying mild regularity properties, typically pertaining to concentration of measure.

The control of the extremal gap has been naturally more elusive, but initially under strong distributional assumptions progress was made in papers such as \cite{ben2013extreme} which deals with the GUE matrix ensemble. While the unitarily and orthogonally Gaussian ensembles admitted significant analytic tractability, the qualitative properties of the extremal gap were expected to be universal, with control of the tail bounds of the extremal gap and consequently the simplicity likelihood of the spectrum in a universal setting being achieved in \cite{nguyen2015randommatricestailbounds} for the Wigner case. The prior result used the aforementioned inverse Littlewood--Offord theory, and very recently through inverse Littlewood--Offord techniques similar properties were proven concurrently in a universal setting by \cite{christoffersen2025gaps} and \cite{han2025simplicity} for Wishart matrices, achieving parity with the Wigner case.

It should be noted that while techniques using inverse Littlewood--Offord theory, and approaches from combinatorial random matrix theory overall \cite{Vu14, Vu21}, have provided widely-used and effective tools for studying local eigenvalue statistics, new ideas and tools are also required to make progress on extremal gaps, as seen in papers such as \cite{erdHos2010wegner, erdHos2015gap}.

\paragraph{Remarks on some entries in Table~\ref{tab:comparison}.} Some entries in Table~\ref{tab:comparison} deserve additional explanation.
\begin{itemize}
    \item Unitary and orthogonal tensor isomorphism is in $\ccPSPACE$ because we can formulate these problems as solvability of polynomial equations over $\RR$. This is achieved by first setting variable matrices for the transformation matrices; in the case of $\Unitary(n, \C)$, we need to record the real and imaginary parts separately. Then we can express two tensors being isomorphic as a set of polynomial equations over $\RR$. Solving polynomial equations over $\RR$ is known to be in $\ccPSPACE$ by \cite{Canny88}.
    \item General-linear tensor isomorphism over $\CC$ can also be formulated as solving a system of polynomial equations. Here we don't need to record the real and imaginary parts separately (as variables over reals) because the conjugate transpose is not required. The results in \cite{koiran1996hilbert} and \cite{AGS26} represent the best-known complexity-theoretic results for Hilbert's Nullstellensatz. Note that \cite{koiran1996hilbert} assumes Generalized Riemann Hypothesis and the tensors have integer entries. 
    \item Average-case general-linear tensor isomorphism over $\CC$ is related to its orbit closure intersection problem, because tensors whose orbits are closed form a Zariski open set. Therefore, for natural models of random tensors, a random tensor would have a closed orbit, and for such tensors, orbit and orbit closure intersection problems are the same.
    \item General-linear tensor isomorphism over $\F_q$ is in $\cccoAM$ by following the same interactive proof protocol for graph non-isomorphism \cite{GS86}. 
\end{itemize}

\section{Algorithms for Tensor Isomorphism}\label{sec:unitaryIso}

We begin by establishing notation used throughout this work.

We reserve $A$ and $B$ to denote $n\times n \times n$ tensors, and $L, R, T$ denote unitary and orthogonal $n\times n$ matrices, unless stated otherwise.
For $L, R, T \in \Unitary(n)$, write
\[
\left((L,R,T)\curvearrowright A\right)_{ijk} = \sum_{p,q,r}L_{ip}R_{jq}T_{kr}A_{pqr}.
\]
In order to avoid ambiguous notation, we let $\Orb(A)$ denote the \emph{orbit} of a tensor $A$, we let $\Unitary(n)$ denote the $n$-dimensional unitary group, $\Orth(n)$ denote the $n$-dimensional orthogonal group, and we let $O(\cdot)$ denote the big-``O" function. As such, $\Orb_\CC(A) \coloneqq \Orb_{\Unitary(n)^3}(A)$ denotes the orbit of $A$ by action of $\Unitary(n)^3$. Similarly, $\Orb_\RR(A) \coloneqq \Orb_{\Orth(n)^3}(A)$.

Let $\QQ(i) = \{s + it : s, t \in \QQ\}$ denote the field of Gaussian rationals, taken to be encoded in the usual way---for the usual floating point representation they take form $(s+it)2^p$ with $s, t \in \QQ$ and $p \in \ZZ$ encoded in binary in the standard way. Let $S^1 = \{z \in \CC^\times : |z|=1\}$. For $v$ in a vector space $V$, write
\[
\supp(v) \coloneqq \{j \in [n]: v_j \neq 0\}.
\]

When $A \in \CC^{n_1\times n_2\times \cdots \times n_k}$, the \emph{mode-$d$ flattening of $A$} is a matrix
\[
A_{d} \in \CC^{n_d\times (n_1\cdots n_{d-1}n_{d+1}\cdots n_k)}.
\]
Let $G_d(A)$ denote the Gram matrix $A_{d}A_{d}^*$; notice these are Hermitian and PSD, and hence have nonnegative real eigenvalues.
In this paper, the mode-$d$ flattening of $A \in \CC^{n\times n \times n}$ is $A \in \CC^{n \times n^2}$ and $G_d(A)$ is $n \times n$.

We let $\AA$ denote the algebraic numbers. There is a well-developed theory for the effective and efficient manipulation of algebraic numbers, see \cref{sec:algebraicNumbers} for discussion. One important comment is that for a matrix $M \in \AA^{n\times n}$, we can compute the Jordan Normal Form of $M$ exactly over the algebraic numbers using, for example, the polynomial-time algorithm of Cai \cite{cai1994computing}.

Let $\| \cdot \|$ denote the spectral norm of a matrix, and $\| \cdot \|_F$ the Frobenius norm.

\subsection{Exact unitary tensor isomorphism}

The purpose of this section is to prove the following theorem, which implies \cref{thm:exactUnitIso}.

\begin{theorem}\label{thm:UnitIsoPrelim}
Let $A \in\QQ(i)^{n\times n\times n}$ have entries drawn independently from some fixed subgaussian distribution, and let $B \in \QQ(i)^{n\times n \times n}$ be arbitrary.
There is a deterministic algorithm that on input $A, B$, decides with high probability (probability $1-\frac{1}{\poly(n)}$) whether
\[
B\in\Orb_\CC(A)\coloneqq \{(L,R,T)\curvearrowright A:\ L,R,T\in \Unitary(n)\} 
\]
in polynomial time, and outputs such $(L,R,T)$ in the affirmative case.
\end{theorem}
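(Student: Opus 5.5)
The plan follows the HOSVD outline of \cref{subsec:techniques}. The argument has a probabilistic part and a deterministic algorithmic part.

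\textbf{Probabilistic input.} By \cref{thm-repulsion-unscaled}, applied in the form \eqref{eq-gap-useable} to each flattening $A_d\in\QQ(i)^{n\times n^2}$ (so $G_d(A)=A_dA_d^*$ plays the role of $M^\top M$ for the $n^2\times n$ matrix $A_d^*$, with $k=2$), each $G_d(A)$ has simple spectrum with probability $1-O(n^{1-2\beta})$, for any fixed $\beta>1/2$. A union bound over $d=1,2,3$ gives simplicity of all three Gram spectra with probability $1-1/\poly(n)$. On the event where this fails, the algorithm may simply output ``fail''. Only simplicity, not the quantitative gap, is needed in the exact algebraic model.

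\textbf{Reduction to a torus action.} Compute $G_d(A)$ and $G_d(B)$ exactly over $\QQ(i)$. Compute their characteristic polynomials, and their eigenvalues and eigenvectors as algebraic numbers via Cai's algorithm \cite{cai1994computing}. If some $G_d(B)$ has a different spectrum from $G_d(A)$, reject, since the spectrum is an invariant. Otherwise choose unitary eigenbases $U_d$ of $G_d(A)$ and $V_d$ of $G_d(B)$, ordered by decreasing eigenvalue. Normalizing the exact eigenvectors requires square roots, but these stay in $\AA$. Put $A'=(U_1^*,U_2^*,U_3^*)\curvearrowright A$ and $B'=(V_1^*,V_2^*,V_3^*)\curvearrowright B$. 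Then $G_d(A')=G_d(B')=\Lambda_d$ is diagonal with distinct diagonal entries.

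The key lemma is the following. If $(L,R,T)\curvearrowright A'=B'$, then $G_1(B')=LG_1(A')L^*$, so $L$ commutes with $\Lambda_1$. Since $\Lambda_1$ has distinct entries, $L$ is diagonal and unitary, i.e.\ lies in $(S^1)^n$; the same holds for $R$ and $T$. Hence $B\in\Orb_\CC(A)$ if and only if $B'=(D_1,D_2,D_3)\curvearrowright A'$ for diagonal unitaries $D_d=\diag(\lambda^{(d)})$. Entrywise, this says $B'_{ijk}=\lambda^{(1)}_i\lambda^{(2)}_j\lambda^{(3)}_kA'_{ijk}$.

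\textbf{Solving the torus problem.} First check that $\supp(A')=\supp(B')$ and that $|A'_{ijk}|=|B'_{ijk}|$ on the support; both are exact algebraic comparisons. On the support, set $\mu_{ijk}=B'_{ijk}/A'_{ijk}\in S^1\cap\AA$. The question is then whether the system $\lambda^{(1)}_i\lambda^{(2)}_j\lambda^{(3)}_k=\mu_{ijk}$ has a solution in $(S^1)^{3n}$. This is a multiplicative system whose exponent matrix is the $0/1$ incidence matrix of the support. It can be solved by the torus-action orbit algorithm of \cite{burgisser_et_alCCC.2021.32}. Alternatively, one can compute a spanning forest of the tripartite support hypergraph, which fixes the $\lambda$'s up to the gauge freedom, propagate values along it, and check the remaining equations; this works because unit-modulus solutions of monomial equations are determined consistently along paths. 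Given a solution, output $L=V_1D_1U_1^*$, $R=V_2D_2U_2^*$ and $T=V_3D_3U_3^*$.

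\textbf{Complexity.} All steps are polynomial-time operations over a number field of polynomially bounded degree and height. The main obstacle I expect is keeping the algebraic numbers tractable. The eigenvectors of three different Gram matrices live in possibly distinct extensions, each of degree up to $n$, and the transformed tensors involve products of them, so naive compositum degrees blow up. I would handle this by working throughout with Cai's representations and with the bit-complexity bounds for algebraic number arithmetic in \cref{sec:algebraicNumbers}. I would avoid forming a common splitting field and instead verify each equation $B'_{ijk}=\lambda\cdots A'_{ijk}$ individually; each such equation involves only boundedly many field generators.
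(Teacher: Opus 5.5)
Your main route is the paper's proof: exact eigendecompositions via \cite{cai1994computing}, the reduction of \cref{lem:fundamentalCondition} to a torus problem on the cores, the support and modulus check, the orbit test of \cite{burgisser_et_alCCC.2021.32}, and simplicity from \cref{thm-repulsion-unscaled}. Two of your details are cleaner than the paper's. Your commutation argument ($L\Lambda_1L^*=\Lambda_1$ forces $L$ diagonal) is a tidier form of the lemma. You also reject whenever the spectra differ, so a non-simple $G_d(B)$ never causes a failure. By contrast, \cref{alg:1} outputs \textsf{Cannot decide instance} in that case, and the paper's proof asserts that $G_d(B)$ is simple with high probability, which need not hold for adversarial $B$.

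The spanning-forest alternative, however, is false for general supports. In a $3$-uniform hypergraph an equation can contain two undetermined unknowns, and fixing one of them is not a gauge choice. Write $a_i=\lambda^{(1)}_i$, $b_j=\lambda^{(2)}_j$, $c_k=\lambda^{(3)}_k$, and take the support $\{(1,1,1),(1,2,2),(2,1,2),(2,2,1)\}$, which is a valid all-orthogonal core pattern.
\begin{itemize}
\item The $4\times 6$ exponent matrix has full row rank, so a solution exists for every choice of $\mu$.
\item Fixing $a_1=b_1=1$ uses up the gauge. It gives $c_1=\mu_{111}$, but every remaining equation still has two unknowns.
\item Exactly two solutions remain, determined by $(a_2b_2c_2)^2=\mu_{122}\mu_{212}\mu_{221}/\mu_{111}$. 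This is a square root, so no edge-by-edge propagation reaches them.
\item Concretely, choosing $b_2=1$ forces the final check $\mu_{111}\mu_{212}=\mu_{122}\mu_{221}$, which generically fails.
\end{itemize}
Propagation does work when $\supp(A')=[n]^3$. Fix $a_1=b_1=1$, read off $c_k,b_j,a_i$ from the star $(1,1,k),(1,j,1),(i,1,1)$, and then check $\mu_{ijk}\mu_{111}^2=\mu_{i11}\mu_{1j1}\mu_{11k}$. To use this you would have to prove that the core of $A$ has full support with high probability, which neither you nor the paper does.

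This also undercuts your complexity claim. Per-equation verification keeps degrees polynomial only when checking a propagated candidate; the star check involves six eigenvalues plus normalizing square roots. On the route through \cite{burgisser_et_alCCC.2021.32}, after the support and modulus checks the orbit question is equivalent to $\prod\mu_{ijk}^{y_{ijk}}=1$ for all integer vectors $y=(y_{ijk})_{(i,j,k)\in\supp(A')}$ with $\sum y_{ijk}(e_i+e_{n+j}+e_{2n+k})=0$ in $\ZZ^{3n}$. For general supports these relations can involve many $\mu_{ijk}$ at once. \cite{ge1993testing} then works inside one number field containing all of them, in time polynomial in its degree. The field the paper uses, $\QQ(i,\{\lambda_p\},\{u_{pq}\},\{v_{pq}\})$, contains the splitting field of each characteristic polynomial, which can already have degree $n!$, and the paper gives no degree bound. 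So your worry is legitimate and the paper does not settle it either, but your proposal resolves it only in the full-support case.
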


A crucial step in the algorithm used to prove \cref{thm:UnitIsoPrelim} is a reduction to deciding the orbit equality problem for actions by compact tori; as a consequence we use the results of B\"urgisser, Do\u{g}an, Makam, Walter, and Wigderson in \cite{burgisser_et_alCCC.2021.32}.

The primary purpose of \cite{burgisser_et_alCCC.2021.32} is to study the orbit equality, orbit closure intersection, and orbit closure containment problems in the context of actions of commutative groups, specifically tori, using techniques from invariant theory. Their main result is to develop polynomial-time algorithms deciding these orbit problems for torus actions. 
A critical insight the authors of \cite{burgisser_et_alCCC.2021.32} make is to notice that in the case of torus actions, a setting presents itself that leaves the problems vulnerable to a technique based on a result of Mumford (see \cite{mumford1994geometric}). Namely, if the orbit closures of vectors $v, w$ of a vector space $V$ do not intersect, then they can always be distinguished by an invariant polynomial.\footnote{A polynomial function $f$ on $V$ is \emph{invariant} if it is constant along orbits, so $f(gv) = f(v)$ for all $g \in G$ and $v \in V$, recalling further that we are considering (linear) actions of $G$ on $V$, using a representation $\rho:G\rightarrow \text{GL}(V)$ and the orbits (notating $\rho(g)v$ by $gv$) $\Orb(v) = \{gv:g \in G\}$. Mumford's result then says that if $\overline{\Orb(v)}\cap \overline{\Orb(w)} = \emptyset$, then there is an invariant polynomial $f$ such that $f(v)\neq f(w)$.}
In particular, and most importantly, in the case of torus actions the authors craft an approach to constructing an arithmetic circuit with division whose output gates compute a system of generating Laurent monomials (rational invariants), which can be used to efficiently check for separation of orbit closures. This forms the heart of their algorithms, and gives an explicit separating invariant.

The main result of \cite{burgisser_et_alCCC.2021.32} we use below is

\begin{proposition*}[{\cite[Corollary 1.6]{burgisser_et_alCCC.2021.32}}]
    Let the weight matrix $M \in \text{Mat}_{d, n}(\ZZ)$ define an $n$-dimensional representation of $\cT = (\CC^\times)^d$ and put $\cK = (S^1)^d$. Further, let $v, w \in \QQ(i)^n$ and assume that the bit-lengths of the entries of $v, w$ and $M$ are bounded by $b$. Then, in $\poly(d, n, b)$-time, we can decide if $\Orb_{\cK}(v) = \Orb_\cK(w)$.
\end{proposition*}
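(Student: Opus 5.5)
The plan is to reduce equality of $\cK$-orbits to explicit, checkable conditions on the coordinates of $v$ and $w$. Equality of $\cK$-orbits splits into a condition on moduli and a condition on phases, because $\cK$ acts on each coordinate by a unit complex scalar. The moduli part is immediate: $\Orb_\cK(v)=\Orb_\cK(w)$ forces $|v_j|=|w_j|$ for all $j$, and in particular $\supp(v)=\supp(w)=:S$. These conditions can be checked exactly over $\QQ(i)$ by comparing $|v_j|^2=|w_j|^2\in\QQ$. Assuming they hold, set $u_j\coloneqq w_j/v_j\in S^1\cap\QQ(i)$ for $j\in S$. The question then becomes whether $u=(u_j)_{j\in S}$ lies in the image of the homomorphism $\chi:(S^1)^d\to (S^1)^S$, $t\mapsto (t^{M_{\cdot j}})_{j\in S}$, where $M_S$ is the restriction of $M$ to the columns in $S$.

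The key structural step is Pontryagin duality for compact tori. The image $\chi(\cK)$ is a closed subgroup of $(S^1)^S$. Its annihilator in the character group $\ZZ^S$ is the lattice
\[
\Lambda\coloneqq\{\alpha\in\ZZ^S: M_S\alpha=0\}.
\]
Closed subgroups are determined by their annihilators, so $u\in\chi(\cK)$ if and only if $u^\alpha=\prod_j u_j^{\alpha_j}=1$ for every $\alpha$ in a basis of $\Lambda$. Equivalently, the Laurent monomials $x^\alpha$ for $\alpha\in\Lambda$, together with the moduli, form a complete set of separating invariants. A $\ZZ$-basis $\alpha^{(1)},\dots,\alpha^{(r)}$ of $\Lambda$ with entries of bit-length $\poly(d,n,b)$ can be computed in polynomial time via Hermite or Smith normal form of $M_S$ (Kannan--Bachem).

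The main obstacle is testing $\prod_j u_j^{\alpha_j}=1$ when the exponents are polynomially long in bits, since expanding the powers produces numbers of exponential size. My first approach would be to clear denominators and rewrite each condition as an equality $\prod_j a_j^{e_j}=\prod_j c_j^{f_j}$ of products of Gaussian integers, with $e_j,f_j\ge 0$. Next, compute a coprime (gcd-free) basis $\{p_1,\dots,p_s\}$ of all the Gaussian integers that occur, following Bach--Driscoll--Shallit. This adapts to $\ZZ[i]$ because $\ZZ[i]$ is Euclidean, and it runs in polynomial time without factoring. Each $a_j$ and $c_j$ is then written as a unit in $\{\pm1,\pm i\}$ times $\prod_k p_k^{m_{jk}}$ with small exponents $m_{jk}$. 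The identity holds if and only if, for every $k$, the linear forms $\sum_j e_j m_{jk}$ and $\sum_j f_j m_{jk}$ agree, and the unit parts agree. The unit condition reduces to a congruence of exponents mod $4$, after tracking the units that appear in the coprime factorizations.

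All of these checks are integer arithmetic on $\poly(d,n,b)$-bit numbers, which yields the claimed $\poly(d,n,b)$ running time. The step I would scrutinize most carefully is the unit bookkeeping in the coprime-base decomposition over $\ZZ[i]$, where associates must be handled consistently. An alternative route for this step is to follow the rational-invariant circuit approach of \cite{burgisser_et_alCCC.2021.32} directly.
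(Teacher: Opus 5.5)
Your proof is correct. The test it ends up running coincides with the one behind the cited result: equal supports, $|v_j|=|w_j|$ for all $j$, and $v^\alpha=w^\alpha$ for $\alpha$ in a $\ZZ$-basis of $\ker_\ZZ M_S$. The paper gives no proof of its own; it imports the statement from \cite{burgisser_et_alCCC.2021.32}, whose justification is organized differently in two places.

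\textbf{Separation criterion.} The cited work first decides orbit equality for the complex torus $\cT$. It uses that the Laurent monomials $x^\alpha$ with $M_S\alpha=0$ generate the rational invariants and separate $\cT$-orbits of vectors with a common support; these are produced as arithmetic circuits with division. It then passes to $\cK$ through its Proposition 8.1, which adds exactly the moduli condition. You instead obtain the criterion for $\cK$ in one step, from Pontryagin duality applied to the closed subgroup $\chi(\cK)\subseteq(S^1)^S$. Your version is shorter and fully adequate for the compact statement. Their detour through $\cT$ is something they need anyway for the orbit-closure problems.

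\textbf{Large-exponent test.} You use factor refinement in the Euclidean domain $\ZZ[i]$. The unit bookkeeping you flagged is harmless. Once a pairwise coprime base of non-units is fixed, unique factorization makes each representation $\epsilon\prod_k p_k^{m_k}$ with $\epsilon\in\{\pm1,\pm i\}$ unique. Equality then reduces to equality of the exponent linear forms plus one congruence mod $4$.

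\textbf{Trade-off.} Your approach is self-contained over $\QQ(i)$, but it gives up generality. In \cref{alg:1} the paper applies this result to vectors with entries in the number field $\QQ(i,\{\lambda_p\},\{u_{pq}\},\{v_{pq}\})$. There it relies on Ge's algorithm \cite{ge1993testing} for testing multiplicative relations. Your coprime-base step leans on $\ZZ[i]$ being a UFD with finite unit group. It does not transfer directly to rings of integers with nontrivial class group and infinite unit group.
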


We will also be implicitly using

\begin{proposition*}[{\cite[Proposition 8.1]{burgisser_et_alCCC.2021.32}}]
    Let $M \in \text{Mat}_{d, n}(\ZZ)$ define an $n$-dimensional representation of $\cT = (\CC^\times)^d$ and $\cK = (S^1)^d$. Let $v, w \in \CC^n$. Then $\Orb_\cK(v) = \Orb_\cK(w)$ if and only if $\Orb_\cT(v) = \Orb_\cT(w)$ and $|v_j| = |w_j|$ for all $j$.
\end{proposition*}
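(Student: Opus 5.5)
The plan is a direct argument in coordinates, using the fact that a torus element factors uniquely into a compact (unit-modulus) part and a positive real part. Write $m_j\in\ZZ^d$ for the $j$-th column of the weight matrix $M$. Then $t=(t_1,\dots,t_d)\in\cT$ acts by
\[
(t\cdot v)_j = t^{m_j}v_j, \qquad t^{m_j}\coloneqq \prod_{i=1}^d t_i^{(m_j)_i},
\]
and $\cK=(S^1)^d\subseteq\cT$ acts by restriction. I will prove the two implications separately.

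For the forward direction, suppose $\Orb_\cK(v)=\Orb_\cK(w)$. Then $w=s\cdot v$ for some $s\in\cK\subseteq\cT$, so $w\in\Orb_\cT(v)$, and hence $\Orb_\cT(v)=\Orb_\cT(w)$. Moreover each $s_i$ has modulus $1$, so $|s^{m_j}|=1$ for every $j$. It follows that $|w_j|=|s^{m_j}v_j|=|v_j|$ for all $j$.

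For the converse, suppose $w=t\cdot v$ for some $t\in\cT$ and $|v_j|=|w_j|$ for all $j$. I would use the polar decomposition $t_i=s_ir_i$ with $s_i=t_i/|t_i|\in S^1$ and $r_i=|t_i|>0$. Set $s=(s_i)\in\cK$ and $r=(r_i)$. Since the characters $x\mapsto x^{m_j}$ are multiplicative, $t^{m_j}=s^{m_j}r^{m_j}$. Because $r^{m_j}$ is a positive real and $|s^{m_j}|=1$, we get $|t^{m_j}|=r^{m_j}$.

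Now split the coordinates according to the support of $v$.
\begin{itemize}
\item For $j\in\supp(v)$, the hypothesis gives $|v_j|=|w_j|=r^{m_j}|v_j|$, so $r^{m_j}=1$. Therefore $w_j=t^{m_j}v_j=s^{m_j}v_j=(s\cdot v)_j$.
\item For $j\notin\supp(v)$, we have $w_j=t^{m_j}\cdot 0=0=(s\cdot v)_j$.
\end{itemize}
Hence $w=s\cdot v$ with $s\in\cK$, so $w\in\Orb_\cK(v)$ and the two $\cK$-orbits coincide.

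I do not expect a genuine obstacle. The one point needing care is that the positive part $r$ need not be trivial as a group element. It is only shown to act trivially on the coordinates in $\supp(v)$, which is exactly where the modulus condition carries information; the coordinates outside the support vanish for both $v$ and $w$.
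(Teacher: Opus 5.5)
Your proof is correct and complete. The forward direction follows from $\cK\subseteq\cT$ and $|s^{m_j}|=1$. In the converse, the polar splitting $t=sr$ together with $|v_j|=|w_j|$ forces $r^{m_j}=1$ on $\supp(v)$, while off the support both vectors vanish, so $w=s\cdot v$ with $s\in\cK$.

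The paper gives no proof of its own for this statement; it is quoted from \cite{burgisser_et_alCCC.2021.32}. Your elementary polar-decomposition argument is a natural self-contained justification of it.
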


We now present the primary algorithm of this paper, used to establish \cref{thm:UnitIsoPrelim}. We first give some additional notation and conventions. Put $\cK = (S^1)^{3n}$. Then for $(a, b, c) \in \cK$ we obtain the diagonal unitary matrices
\[
D_1 = \diag(a_1,\dots,a_n),\; D_2 = \diag(b_1,\dots,b_n), \; D_3 = \diag(c_1,\dots,c_n),
\]
so $\cK$ acts on the vectorizations $v_A$ and $v_B$ of tensors $A$ and $B$ by
\[
(D_3\otimes D_2 \otimes D_1) v_A \text{ and } (D_3\otimes D_2 \otimes D_1) v_B.
\]
Specifically, coordinatewise, the $(i,j,k)$ entry gets multiplied by $a_ib_jc_k$.

\begin{algorithm}[H]
\caption{Deciding unitary equivalence of 3-tensors with simple spectrum.}
\label{alg:1}
    \begin{enumerate}[label=Step \arabic*]
    \item [\textbf{Input}]  On input $A, B$ as well as $\Lambda^A_d, U_d, \Lambda_d^B, V_d$, $d=1,2,3$, we aim to decide in time polynomial in the input size (measured by the bit-lengths of the inputs) whether there are $L,R,T$ in $\Unitary(n)$ such that $(L,R,T)\curvearrowright A=B$. Suppose, without any loss of generality, that the eigenvalues composing $\Lambda_d^A$ and $\Lambda_d^B$ are given in nonincreasing order.
    
    \item\label{alg1step1} For each $d=1,2,3$, check whether, for some $d$, the spectrum of either $G_d(A)$ or $G_d(B)$ is simple by studying the entries matrices $\Lambda_d^A, \Lambda_d^B$. If the spectra are not simple, output \textsf{Cannot decide instance} and halt.
    If the spectra are simple, check whether $\Lambda_d^A = \Lambda_d^B$, $d=1,2,3$. If the spectra are not equal, output \textsf{NO}. If yes, put $\Lambda = \Lambda_d^A = \Lambda_d^B$.
    
    \item\label{alg1step2} Set
    \[
    S_A\coloneqq (U_1^*, U_2^*, U_3^*) \curvearrowright A,\;\text{ and }\;
    S_B\coloneqq (V_1^*, V_2^*, V_3^*) \curvearrowright B.
    \]
    Let $v_A=\vecc(S_A),\,v_B=\vecc(S_B)\in\CC^{N}$ with $N=n^3$, be the vectorizations of $S_A$ and $S_B$, with $(v_A)_l = (S_A)_{ijk}$ by the lexicographic order $l=(k-1)+n(j-1)+n^2(i-1)$ (same for $v_B$).
    
    \item\label{alg1step3} Verify that $\supp(v_A) = \supp(v_B)$, and that for all $l\in [n^3]$, it holds that $|(v_A)_l| = |(v_B)_l|$. If any of these conditions fail, output \textsf{NO}.
    
    \item\label{alg1step4pre} As noted in \cite{burgisser_et_alCCC.2021.32}, their main result extends to the case where the entries of the elements $v \in \CC$ being acted upon are taken from some algebraic number field by a result of Ge \cite{ge1993testing}. We use this fact to apply their algorithm to the case of vectors $v_A$ and $v_B$ taken from the field $\QQ(i, \{\lambda_p\}, \{u_{pq}\}, \{v_{pq}\})$ for $\lambda_p$ the eigenvalues of the Gram matrices $G_d(A)$ and $G_d(B)$ and $u_{pq}$, $v_{pq}$ the entries of $U_d$, $V_d$, $d=1,2,3$.
    \item\label{alg1step4} Using Corollary 1.6 (specifically Algorithm 2) in \cite{burgisser_et_alCCC.2021.32}, 
    check whether $\Orb_\cK(v_A) = \Orb_\cK(v_B)$. If so, output \textsf{YES} and optionally output the witness $L = V_1 D_1 U_1^*$, $R = V_2D_2U_2^*$, $T = V_3D_3U_3^*$, where
    \[
    D_1 = \diag(a_1,\dots,a_n),\; D_2 = \diag(b_1,\dots,b_n), \; D_3 = \diag(c_1,\dots,c_n)
    \]
    for appropriate element $(a, b, c) \in \cK$ found by the algorithm.
    Otherwise output \textsf{NO}.
    \end{enumerate}
\end{algorithm}

Before stating the proof of \cref{thm:UnitIsoPrelim} we require some lemmas.

\begin{lemma}\label{lem:fundamentalCondition}
Let $A,B \in \CC^{n\times n \times n}$. Suppose $G_d(A)$ and $G_d(B)$ have a simple spectrum that is equal up to permutation, for each $d=1,2,3$.  Write the unitary diagonalizations $G_d(A)=U_d\Lambda^A_d U_d^*$ and $G_d(B)=V_d\Lambda^B_d {V_d}^*$, with eigenvalues in nonincreasing order and a phase convention fixed. Define
\[
S_A\coloneqq (U_1^*, U_2^*, U_3^*) \curvearrowright A,\quad
S_B\coloneqq (V_1^*, V_2^*, V_3^*) \curvearrowright B.
\]
Then there exists unitary matrices $L, R, T \in \Unitary(n)$ such that $(L, R, T) \curvearrowright A = B$ if and only if there exist diagonal $D_1,D_2,D_3 \in \Unitary(1)^n$ such that $(D_1,D_2,D_3)\curvearrowright S_A = S_B$.
\end{lemma}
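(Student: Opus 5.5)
The plan is to track how the Gram matrices transform under the action and then use simplicity of the spectra to pin each unitary down to a diagonal phase matrix. The key identity is the equivariance
\[
G_1\bigl((L,R,T)\curvearrowright A\bigr) = L\,G_1(A)\,L^*,
\]
together with its analogues for $d=2,3$ with $R$ and $T$. To verify it, write the mode-$1$ flattening of $(L,R,T)\curvearrowright A$ as $L A_{1} (R\otimes T)^{\top}$, up to the fixed ordering convention of the flattening. Since $R\otimes T$ is unitary, its transpose $W$ satisfies $W W^* = I$, so the middle factor cancels in $A_1 W W^* A_1^* $. The same computation works for the other two modes.

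For the ``if'' direction, suppose $(D_1,D_2,D_3)\curvearrowright S_A = S_B$. Substitute the definitions of $S_A$ and $S_B$ and use that the action is a group action, so $(X,Y,Z)\curvearrowright\bigl((X',Y',Z')\curvearrowright A\bigr) = (XX',YY',ZZ')\curvearrowright A$. Then apply $(V_1,V_2,V_3)$ to both sides. This gives $(V_1D_1U_1^*, V_2D_2U_2^*, V_3D_3U_3^*)\curvearrowright A = B$, and each of these three products is unitary.

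For the ``only if'' direction, suppose $(L,R,T)\curvearrowright A = B$. Equivariance gives $G_1(B) = L G_1(A) L^*$, and hence
\[
V_1\Lambda^B_1V_1^* = L U_1 \Lambda^A_1 U_1^* L^*.
\]
Both $\Lambda$'s list the same simple spectrum in nonincreasing order, so $\Lambda^A_1=\Lambda^B_1=\Lambda$ is diagonal with distinct entries. Setting $D_1 := V_1^* L U_1$, which is unitary, we get $D_1\Lambda = \Lambda D_1$. A matrix commuting with a diagonal matrix with distinct entries must be diagonal, because $(D_1)_{pq}(\lambda_q-\lambda_p)=0$. So $D_1$ is a diagonal unitary, that is, $D_1\in\Unitary(1)^n$. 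The same argument gives $D_2 = V_2^*RU_2$ and $D_3 = V_3^*TU_3$. Finally,
\[
(D_1,D_2,D_3)\curvearrowright S_A = (V_1^*L, V_2^*R, V_3^*T)\curvearrowright A = (V_1^*,V_2^*,V_3^*)\curvearrowright B = S_B.
\]

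The only step needing care is the equivariance identity, specifically matching the flattening convention (transpose versus conjugate, and the Kronecker ordering) so that the cancelling factor is genuinely unitary. I expect this to be routine bookkeeping. The fixed phase convention plays no role beyond making $U_d$ and $V_d$ well defined, since any residual phase ambiguity is absorbed into the $D_d$.
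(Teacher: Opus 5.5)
Your proof is correct and follows essentially the same route as the paper: you use equivariance of the Gram matrices, then simplicity of the ordered spectrum to force $V_1^*LU_1$ (and its analogues for $R,T$) to be a diagonal phase matrix, and you check the converse by substituting $V_dD_dU_d^*$ directly. The only difference is one of detail: the paper simply asserts that ordered eigenbases of a simple spectrum are unique up to phase, whereas your argument $(D_1)_{pq}(\lambda_q-\lambda_p)=0$ proves it, and your flattening with a transpose is slightly more careful than the paper's $(T\otimes R)^*$.
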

\begin{proof}
We begin by recalling that if $B=(L,R,T)\curvearrowright A$, then
\[
G_1(B)=LG_1(A)L^*,\;
G_2(B)=RG_2(A)R^*,\;
G_3(B)=TG_3(A)T^*;
\]
for example, for $d=1$, $B_{1}=LA_{1}(T\otimes R)^*$, hence $G_1(B)=B_{1}{B_{1}}^*=LA_{1}{A_{1}}^* L^*=LG_1(A)L^*$.

We now proceed with the proof.

($\Longrightarrow$) Put $L_1=L$, $L_2=R$, $L_3=T$, which are taken from $\Unitary(n)$. Recall $G_d(A)=U_d\Lambda^A_d U_d^*$ and $G_d(B)=V_d\Lambda^B_d {V_d}^*$.
Assume $B=(L_1,L_2,L_3)\curvearrowright A$. Then $\Lambda^A_d = \Lambda^B_d = \Lambda_d$, so we have
\[
G_d(B)=V_d\Lambda_d {V_d}^* =L_d G_d(A) L_d^* = L_d U_d\Lambda_d U_d^* L_d^*.
\]
$\Lambda_d$ is simple and ordered, so the ordered eigenbases of $G_d(B)$ are unique up to phase. So there is a $D_d \in U(1)^n$ such that 
\[
V_d = L_d U_d D_d \text{ for } d=1,2,3.
\]
As such,
\begin{align*}
S_B&=(V_1^*,V_2^*,V_3^*)\curvearrowright B\\
&=(V_1^* L_1,V_2^* L_2,V_3^* L_3)\curvearrowright A\\
&=(D_1^* U_1^*,D_2^* U_2^* ,D_3^* U_3^*)\curvearrowright A\\
&=(D_1^*,D_2^*,D_3^*)\curvearrowright S_A.
\end{align*}
Putting $D_d \leftarrow D_d^*$ we obtain the first direction.

($\Longleftarrow$) On the other hand, suppose $(D_1,D_2,D_3)\curvearrowright S_A=S_B$ where $D_d \in U(1)^n$ as before. Put
\[
L\coloneqq V_1 D_1 U_1^*,\quad R\coloneqq V_2 D_2 U_2^*,\quad T \coloneqq V_3 D_3 U_3^*\in \Unitary(n).
\]
Then, using the hypothesis,
\begin{align*}
(L,R,T)\curvearrowright A
&=(V_1D_1,V_2D_2,V_3D_3)\curvearrowright (U_1^*,U_2^*,U_3^*)\curvearrowright A\\
&=(V_1,V_2,V_3)\curvearrowright \big((D_1,D_2,D_3)\curvearrowright S_A\big)\\
&=(V_1,V_2,V_3)\curvearrowright S_B\\
&=B.
\end{align*}
Furthermore, $S_B=(V_1^*,V_2^*,V_3^*)\curvearrowright B$ implies $B=(V_1,V_2,V_3)\curvearrowright S_B$, so we conclude $A$ and $B$ are unitarily equivalent.
\end{proof}

We now prove correctness and efficient run-time of \cref{alg:1}.

\begin{lemma}\label{lem:algoCorrect}
    \cref{alg:1} is correct.
\end{lemma}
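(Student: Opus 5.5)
To prove \cref{lem:algoCorrect}, I will show that every output of \cref{alg:1} other than \textsf{Cannot decide instance} is correct. The idea is to walk through the steps and reduce each \textsf{NO} and \textsf{YES} decision to either an invariant of the $\Unitary(n)^3$-action or to \cref{lem:fundamentalCondition} combined with the two quoted propositions of \cite{burgisser_et_alCCC.2021.32}.

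\emph{Step 1 and Step 3 rejections.} If Step 1 outputs \textsf{NO}, then $\Lambda_d^A\neq\Lambda_d^B$ for some $d$. This is correct because, as recalled in the proof of \cref{lem:fundamentalCondition}, $B=(L,R,T)\curvearrowright A$ implies $G_d(B)=L_dG_d(A)L_d^*$. So the ordered spectra are isomorphism invariants.

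Otherwise the spectra are simple and equal, so the hypotheses of \cref{lem:fundamentalCondition} hold. Isomorphism of $A$ and $B$ is then equivalent to the existence of $(a,b,c)\in\cK$ with $(D_1,D_2,D_3)\curvearrowright S_A=S_B$. Coordinatewise this reads $(v_B)_{ijk}=a_ib_jc_k(v_A)_{ijk}$, which is exactly $v_B\in\Orb_\cK(v_A)$. Since $|a_ib_jc_k|=1$, such an element forces $|(v_A)_l|=|(v_B)_l|$ for all $l$, and hence also $\supp(v_A)=\supp(v_B)$. The Step 3 rejections are therefore sound.

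\emph{Step 4 acceptance.} Here I invoke \cite[Corollary 1.6]{burgisser_et_alCCC.2021.32}, extended via Ge's result \cite{ge1993testing} to vectors over the number field generated by the eigenvalues and eigenvector entries. The weight matrix $M\in\mathrm{Mat}_{3n,n^3}(\ZZ)$ has in column $(i,j,k)$ the indicator entries $e_i$, $e_{n+j}$, $e_{2n+k}$, which encodes the $\cK$-action above. \cite[Proposition 8.1]{burgisser_et_alCCC.2021.32} explains why the modulus check of Step 3 is the compact-versus-algebraic-torus refinement. The algorithm outputs \textsf{YES} exactly when $\Orb_\cK(v_A)=\Orb_\cK(v_B)$, which by the equivalence above holds iff $A,B$ are unitarily isomorphic. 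For the witness, the ($\Longleftarrow$) direction of \cref{lem:fundamentalCondition} shows directly that $L=V_1D_1U_1^*$, $R=V_2D_2U_2^*$, $T=V_3D_3U_3^*$ are unitary and satisfy $(L,R,T)\curvearrowright A=B$.

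\emph{Main obstacle.} The hardest step to justify is applying \cite{burgisser_et_alCCC.2021.32} outside $\QQ(i)$. Their statement assumes entries in $\QQ(i)$, while $v_A,v_B$ live in an algebraic extension containing the eigenvector entries. I would argue that their Algorithm 2 only needs exact field arithmetic and a test of multiplicative relations among the entries, and cite Ge's algorithm for the latter over number fields. A secondary subtlety is the phase convention: $U_d,V_d$ are determined only up to diagonal phases. This is harmless because any such ambiguity is absorbed into $\cK$, so the test $\Orb_\cK(v_A)=\Orb_\cK(v_B)$ does not depend on the convention chosen.
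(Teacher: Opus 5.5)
Your proposal is correct and follows essentially the same route as the paper. Both treat the spectra as isomorphism invariants, use \cref{lem:fundamentalCondition} to reduce to a compact-torus orbit equality, apply \cite{burgisser_et_alCCC.2021.32} extended to number fields via \cite{ge1993testing}, and verify the witness directly. The only differences are minor: you justify the Step 3 rejections directly from $|a_ib_jc_k|=1$ instead of citing Proposition 8.1 and Lemma 3.2 of \cite{burgisser_et_alCCC.2021.32}, and you make explicit two points the paper leaves implicit, namely that the phase ambiguity in $U_d,V_d$ is absorbed by $\cK$ and that the non-simple case ends in \textsf{Cannot decide instance} rather than a verdict.
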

\begin{proof}
We prove \cref{alg:1} outputs $\textsf{YES}$ if and only if there exist $L, R, T \in \Unitary(n)$ with $(L, R, T) \curvearrowright A = B$.

First, notice the Gram matrices have entries in $\QQ(i)$, and hence can be diagonalized by extending to the algebraic numbers. Proceeding in any one of many well-known ways (such as Cai's technique \cite{cai1994computing}) to achieve the decompositions $G_d(A) = U_d \Lambda_d U_d^*$ and $G_d(B)=V_d\Lambda_d {V_d}^*$ exactly using methods for handling algebraic numbers effectively.
Simplicity of the spectra of the $G_d(A)$ and $G_d(B)$ is required for the fourth step of the algorithm. And, clearly, if the eigenvalues of $G_d(A)$ and $G_d(B)$ are not equal (up to permutation), then there do not exist $L, R, T \in \Unitary(n)$ such that $(L, R, T) \curvearrowright A = B$.

Once the spectra of the Gram matrices $G_d(A)$ and $G_d(B)$ have been checked, \cref{lem:fundamentalCondition} tells us that it suffices to determine whether the constructed $S_A$ and $S_B$ are in the same orbit (by a compact torus action): $A$ and $B$ are $\Unitary(n)^3$-isomorphic if and only if there exist diagonal matrices $D_1,D_2,D_3 \in \Unitary(1)^n$ such that $(D_1,D_2,D_3)\curvearrowright S_A = S_B$. Vectorizing $S_A$ and $S_B$ in the canonical way mentioned in \ref{alg1step2} has no effect on this, and is simply done for easier application of the algorithm of \cite{burgisser_et_alCCC.2021.32}.
The check in \ref{alg1step3} that the supports and moduli of the $v_A$ and $v_B$ match follows from Proposition 8.1 and the contrapositive of Lemma 3.2 in \cite{burgisser_et_alCCC.2021.32}.
Finally, the algorithm of \cite{burgisser_et_alCCC.2021.32} tells us whether $\Orb_\cK(v_A) = \Orb_\cK(v_B)$.

Furthermore, if the algorithm outputs yes, it may also output $L= V_1D_1U_1^*$, $R = V_2D_2U_2^*$, and $T = V_3D_3U_3^*$. Note first that each $U_d$ and $V_d$ is unitary by construction, and each $D_d$ is diagonal with entries in the complex unit circle, and is hence unitary. Finally, we have
\begin{align*}
(L,R,T)\curvearrowright A
&=(V_1D_1U_1^*,V_2D_2U_2^*,V_3D_3U_3^*) \curvearrowright A\\
&=(V_1,V_2,V_3)\curvearrowright \big((D_1,D_2,D_3)\curvearrowright (U_1^*,U_2^*,U_3^*)\curvearrowright A\big)\\
&=(V_1,V_2,V_3)\curvearrowright \big((D_1,D_2,D_3)\curvearrowright S_A\big)\\
&=(V_1,V_2,V_3)\curvearrowright S_B\\
&=B
\end{align*}
as required.
\end{proof}

\begin{lemma}\label{lem:algoPoly}
    \cref{alg:1} halts in time polynomial in the length of the binary encodings of $A$, $B$.
\end{lemma}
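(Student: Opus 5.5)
The plan is to walk through the steps of \cref{alg:1} and bound each one, together with the preprocessing that produces $\Lambda^A_d,U_d,\Lambda^B_d,V_d$, by a polynomial in the input bit-length $b$ and $n$. The argument has three parts.

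\textbf{Preprocessing.} The first step is to form the Gram matrices $G_d(A)=A_dA_d^*$ and $G_d(B)$. These require $O(n^4)$ arithmetic operations over $\QQ(i)$, and every entry has bit-length $O(b+\log n)$.

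Next, Cai's polynomial-time algorithm \cite{cai1994computing} computes the Jordan normal form of each $G_d(A)$ and $G_d(B)$ exactly over $\AA$. Since these matrices are Hermitian, this output is a diagonalization. Whenever the spectrum is simple, each eigenvector is obtained by normalizing a nonzero vector in $\ker(G_d-\lambda I)$. We fix the phase convention by making the first nonzero coordinate real and positive. The normalization needs square roots of algebraic numbers, so it enlarges the number field, but only by a polynomially bounded amount.

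All of these objects are represented as elements of one number field $\QQ(i,\{\lambda_p\},\{u_{pq}\},\{v_{pq}\})$. Cai's results and the standard algebraic-number machinery of \cref{sec:algebraicNumbers} give a primitive element for this field, with polynomially bounded degree and height.

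\textbf{\ref{alg1step1}--\ref{alg1step3}.} Checking that the spectra are simple and that the sorted lists of eigenvalues coincide takes $O(n\log n)$ comparisons of algebraic numbers. Each comparison is polynomial time, using isolating intervals and the equality test in the number field.

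Computing $S_A$ and $S_B$ takes three mode products. Each is $O(n^4)$ field operations, so the total is $O(n^4)$ operations in the fixed field.

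The support and modulus checks in \ref{alg1step3} consist of $n^3$ equality tests, of $(v_A)_l$ against zero and of $|(v_A)_l|^2$ against $|(v_B)_l|^2$. Each test is polynomial in the size of the field representation.

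\textbf{\ref{alg1step4pre}--\ref{alg1step4}.} We invoke \cite[Corollary 1.6]{burgisser_et_alCCC.2021.32}, extended to algebraic number fields via Ge \cite{ge1993testing}.

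The weight matrix is the $3n\times n^3$ matrix $M$ over $\ZZ$. Its column indexed by $(i,j,k)$ is the sum of the $i$th, $(n+j)$th and $(2n+k)$th standard basis vectors, so its entries have constant bit-length. The vectors $v_A,v_B$ have polynomial encoding length in the chosen field. Hence the call runs in $\poly(3n,n^3,b')$ time, where $b'$ is the polynomial bound on the encoding length.

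When the answer is yes, the witness $(a,b,c)$ is returned. Forming $L,R,T$ from it takes $O(n^3)$ further operations.

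\textbf{Main obstacle.} The delicate point is controlling the representation size of the algebraic numbers through the whole pipeline. Specifically, we must show that the compositum field generated by all $2\cdot3\cdot n$ eigenvalues and all eigenvector entries has polynomial size. A naive compositum of $6n$ fields of degree up to $n$ can have degree exponential in $n$.

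My first attempt is to avoid building the compositum explicitly. Each eigenpair $(\lambda_p,u_{\cdot p})$ would live in $\QQ(i,\lambda_p,\sqrt{\cdot})$, a field of degree $O(n)$. Every entry of $S_A$ would be represented as a polynomial-size arithmetic expression in these quantities. We would then rely on Ge's algorithm \cite{ge1993testing}, which tests multiplicative relations among elements given in this way and works with the lattice of relations rather than with a common field. This is what \cite{burgisser_et_alCCC.2021.32} indicates suffices for their Algorithm 2.

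If that route fails, the fallback is to note that the equality and support tests need only sign and zero tests of polynomial-size expressions. Those can be decided by root-separation bounds with polynomially many bits of precision, in the spirit of \cite{cai1994computing}.
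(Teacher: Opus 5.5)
Your step-by-step accounting matches the paper's proof. Both use Gram matrices in $O(n^4)$ operations, exact diagonalization via \cite{cai1994computing}, $n^3$ algebraic comparisons in \ref{alg1step3}, the $3n\times n^3$ weight matrix with columns $e_i\oplus e_{n+j}\oplus e_{2n+k}$, and \cite[Corollary 1.6]{burgisser_et_alCCC.2021.32} extended via \cite{ge1993testing}. As written, though, it does not prove the lemma, because it contradicts itself on the one point that matters.

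In the preprocessing you assert that $\QQ(i,\{\lambda_p\},\{u_{pq}\},\{v_{pq}\})$ has a primitive element of polynomially bounded degree, and you then do all later work ``in the fixed field''. This is false in general. Already $\QQ(\lambda^{(1)}_1,\dots,\lambda^{(1)}_n)$ is the splitting field of the rational polynomial $\det(xI-G_1(A))$, which has degree $n!$ when its Galois group is $S_n$, the generic situation. Your last paragraph concedes this but leaves it open.

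The easy part to repair is \ref{alg1step1}--\ref{alg1step3}. Each normalized eigenvector can be computed over its own field $\QQ(i,\lambda,\sqrt{\cdot\,})$ of degree $O(n)$. The entry $(S_A)_{ijk}$ involves only the $i$th, $j$th and $k$th eigenvectors of $G_1(A),G_2(A),G_3(A)$, so it lies in a field of degree $O(n^3)$. Each support or modulus test compares one $A$-entry with one $B$-entry inside a field of degree $O(n^6)$.

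The real gap is \ref{alg1step4}. After the support and modulus checks, Algorithm 2 of \cite{burgisser_et_alCCC.2021.32} reduces orbit equality to identities $\prod_l z_l^{m_l}=1$. Here $z=v_B/v_A$ on the support, and $m$ runs over a basis of the integer kernel of the columns of the weight matrix indexed by $\supp(v_A)$. For suitable supports, every such basis contains vectors of large support; for example, take a long alternating cycle inside the slice $k=1$. The product then mixes conjugates $\lambda^{(d)}_i$ for many $i$, and the natural field containing it is again a compositum of superpolynomial degree.

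Neither of your fallbacks handles this:
\begin{itemize}
\item Ge's algorithm takes one explicitly given number field containing all the bases and runs in time polynomial in its degree, so it does not let you bypass the compositum.
\item Any root-separation bound for $\prod_l z_l^{m_l}-1$ that you can certify a priori scales with that degree (and with the exponents), so it does not give polynomially many bits.
\end{itemize}
The same issue arises if you write $L=V_1D_1U_1^*$ entrywise, since each entry sums over all eigenvectors; the witness should be output in factored form.

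When $S_A$ has full support there is a clean fix. The subtorus $(s\mathbf{1},t\mathbf{1},(st)^{-1}\mathbf{1})$ acts trivially, so you may set $a_1=b_1=1$. This forces $c_k=z_{11k}$, $b_j=z_{1j1}/z_{111}$ and $a_i=z_{i11}/z_{111}$. It then suffices to check $z_{ijk}z_{111}^2=z_{i11}z_{1j1}z_{11k}$ for all $(i,j,k)$. Each identity involves $O(1)$ eigenvector fields and hence lives in a field of polynomial degree. For general support an additional argument is needed.

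The paper's proof does not supply that argument either; it only asserts that the torus algorithm ``remains polynomial-time when extending to the different algebraic number fields''. You have located the weak point correctly, but your proposal does not close it.
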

\begin{proof}
The formation of $G_d(A)$ and $G_d(B)$ requires $O(n^4)$ arithmetic operations. Moreover, computing the characteristic polynomials and testing simplicity of the spectrum by extending to the algebraic numbers can all be performed in polynomial time; specifically, computing the Jordan Normal Form, and thus the decompositions of $G_d(A)$ and $G_d(B)$ over the algebraic numbers exactly can be performed in polynomial-time  \cite{cai1994computing}. Hence, $S_A$ and $S_B$ can be efficiently constructed, with their vectorizations as well. Testing support and modulus equality in \ref{alg1step3} involves at most $n^3$ checks, and each check is a comparison of algebraic numbers which is a polynomial-time operations in the lengths of their representations (see \cref{sec:algebraicNumbers}).

For \ref{alg1step4}, we may construct an integer weight matrix $M\in\mathbb{Z}^{3n\times N}$ whose column for $(i,j,k)$ is $e_i\oplus e_{n+j}\oplus e_{2n+k}$; the algorithm for checking orbit equality then runs in polynomial time by Corollary 1.6 of \cite{burgisser_et_alCCC.2021.32}, and remains polynomial-time when extending to the different algebraic number fields \cite{burgisser_et_alCCC.2021.32,ge1993testing}.

We conclude that the algorithm halts in time polynomial in the length of the input.
\end{proof}

With these lemmas in hand we are in a position to prove \cref{thm:UnitIsoPrelim}.

\begin{proof}[Proof of \cref{thm:UnitIsoPrelim}]
    \cref{lem:algoCorrect,lem:algoPoly} prove correctness and polynomial runtime of \cref{alg:1}. By \cref{thm-repulsion-unscaled} the Gram matrices $G_d(A)$ and $G_d(B)$ have simple spectrum with high probability; this gives the fact that the algorithm decides whether $A$ and $B$ are unitarily isomorphic with high probability.
\end{proof}

\begin{remark}
Perhaps it is not obvious why \cref{alg:1} requires the Gram matrices $G_d(A), G_d(B)$ to have simple spectrum. Simple spectrum is necessary to invoke \cite{burgisser_et_alCCC.2021.32} after verifying eigenbasis alignment: any further freedom is in per-coordinate phases (a torus action). If the spectrum are not simple, the residual freedom inside each eigenspace of multiplicity $m$ is the full unitary group $\Unitary(n)$, not just the phases. In such a case a different algorithm is required.
\end{remark}

\begin{remark}
    \ref{alg1step4} of \cref{alg:1} invokes rather heavy machinery from \cite{burgisser_et_alCCC.2021.32} to decide whether $A$ and $B$ are isomorphic. On the other hand, in \cref{sec:RobustOrthogonalProblem} we present a related algorithm (in the fixed precision case, which can be readily adapted to the perfect precision setting), which does not need the torus action machinery of \cite{burgisser_et_alCCC.2021.32} but instead uses simple linear programming to solve a problem related to what is achieved in \ref{alg1step4} of \cref{alg:1}. It would appear that a similar, albeit more subtle method can be used to replace the invocation of the torus action results of \cite{burgisser_et_alCCC.2021.32} in \ref{alg1step4}.
\end{remark}

\subsection{Approximate unitary tensor isomorphism}

In this section we move to an approximate, numerical setting for studying orbit distance problems rather than exact orbit equality problems as in the previous section.
We we start by stating some results concerning perturbations of matrices and tensors.

Recall that the trace norm of a matrix $A \in \CC^{n\times n}$ is given by
\[
\|A\|^2 = \mathrm{Tr}(AA^*) = \mathrm{Tr}(A^* A) = \sum_{1\leq i, j \leq n} |A_{i,j}|^2.
\]
We write $\|A\|_F = \sqrt{\|A\|^2}$ for the Frobenius norm. Clearly, we have the following upper bound on the Frobenius norm:
\[
\|A\|_F \leq n \max_{i,j} |a_{i,j}|
\]
for $A \in \CC^{n\times n}$.

We now recall a useful inequality of Hoffman and Wielandt \cite{hoffman1953variation}. The Hoffman-Wielandt inequality asserts that if $A, B \in \CC^{n\times n}$ are normal, with respective eigenvalues $\lambda_1(A),\dots,\lambda_n(A)$ and $\lambda_1(B),\dots,\lambda_n(B)$, then, letting $\cS_n$ denote the permutation group of $\{1,\dots,n\}$, we have
\[
\min_{\sigma \in \cS_n}\sqrt{\sum_{i=1}^n |\lambda_i(A)-\lambda_{\sigma(i)}(B)|^2} \leq \|A-B\|_F.
\]
Weyl and Mirsky also provide related useful inequalities for singular values (see \cite{stewart1998perturbation}).
We need the following

\begin{lemma}\label{lem:truncation}
    Let $A,\tilde{A} \in \RR^{n\times n}$ be symmetric, and let $\varepsilon = \max_{i,j}|A_{i,j}-\tilde{A}_{i,j}|$.  Then
    \[
    \min_{\sigma \in \cS_n}\sqrt{\sum_{i=1}^n |\lambda_i(A)-\lambda_{\sigma(i)}(\tilde{A})|^2} \leq n\varepsilon.
    \]
\end{lemma}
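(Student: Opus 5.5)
The plan is to combine the Hoffman--Wielandt inequality, stated just above, with the entrywise bound on the Frobenius norm noted earlier in this subsection.

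First, since $A$ and $\tilde{A}$ are real symmetric, both are normal. Hence the Hoffman--Wielandt inequality applies directly and gives
\[
\min_{\sigma \in \cS_n}\sqrt{\sum_{i=1}^n |\lambda_i(A)-\lambda_{\sigma(i)}(\tilde{A})|^2} \leq \|A-\tilde{A}\|_F.
\]
Second, I will bound the right-hand side entrywise. By the definition of $\varepsilon$, every entry of $A-\tilde{A}$ has modulus at most $\varepsilon$. The matrix has $n^2$ entries, so
\[
\|A-\tilde{A}\|_F = \Big(\sum_{i,j}|A_{i,j}-\tilde{A}_{i,j}|^2\Big)^{1/2} \leq \sqrt{n^2\varepsilon^2} = n\varepsilon.
\]
This is exactly the inequality $\|M\|_F \le n\max_{i,j}|m_{i,j}|$ recorded above, applied to $M = A-\tilde{A}$. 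Chaining the two displays gives the claim.

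There is no real obstacle here; the only point needing care is confirming the hypotheses of Hoffman--Wielandt. Symmetric real matrices are normal, and their eigenvalues are real, so the moduli in the sum are plain absolute values. The same argument works verbatim for Hermitian matrices if that version is needed later. As an optional refinement, Weyl's inequality together with sorted eigenvalues would let one take $\sigma$ to be the identity when both spectra are listed in nonincreasing order. This follows from the known fact that the sorted matching is optimal for Hermitian matrices, but it is not needed for the stated bound.
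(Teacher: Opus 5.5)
Your proposal is correct and matches the paper's proof: both apply the Hoffman--Wielandt inequality, valid since real symmetric matrices are normal, and then bound $\|A-\tilde{A}\|_F \le n\varepsilon$ entrywise. The optional sorted-eigenvalue refinement is not needed for the stated bound.
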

\begin{proof}
If $A, \tilde{A} \in \RR^{n\times n}$ are real symmetric, then they are normal; the inequality of Hoffman and Wielandt applies. So, using the fact that $\|A\|_F \leq n \max_{i,j} |a_{i,j}|$ for $A \in \R^{n\times n}$, the assertion follows.
\end{proof}

\begin{remark}
    Due to a theorem of Mirsky \cite{mirsky1960symmetric}, we have the following: if $\tilde{A} = A + E$ is a perturbation of an arbitrary matrix $A$, then
    \[
    \min_{\sigma \in \cS_n}\sqrt{\sum_{i=1}^n |\mu_i-\tilde{\mu}_{\sigma(i)}|^2} \leq \|E\|_F,
    \]
    Where the $\mu_i, \tilde{\mu}_i$ are the singular values of $A$ and $\tilde{A}$, respectively.
    But, since the eigenvalues and singular values coincide for real symmetric matrices, this means the prior lemma holds in a more general setting: we do not actually need to assume that the perturbed matrix $\tilde{A}$ is symmetric.
\end{remark}

The use of these results is that, in a numerical setting where the entries of a matrix are truncated to a fixed precision, we conclude that the eigenvalues and singular values of the matrix cannot move too much through the truncation operation. As a consequence, lower-bounds on the gaps of eigenvalues carry over to the fixed-precision setting.

We give one useful corollary.

\begin{corollary}
    Let $A, B \in \CC^{n\times n}$ be real symmetric matrices with a simple spectrum.
    Put 
    \[
    \delta_{max} = \max_{1\leq i \leq n} \min_{\sigma \in \cS_n} |\lambda_i(A)-\lambda_{\sigma(i)}(B)|.
    \]
    Then there exists a unitary $L \in \CC^{n\times n}$ such that
    \[
    \|A-LBL^*\|_F \leq n \delta_{max}.
    \]
\end{corollary}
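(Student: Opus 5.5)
The plan is to build $L$ directly from the two spectral decompositions, so that $A$ and $LBL^*$ are simultaneously diagonal in one orthonormal basis. The Frobenius distance then reduces to a distance between two diagonal matrices of eigenvalues, which is controlled entrywise by $\delta_{max}$.

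\textbf{How $\delta_{max}$ must be read.} I will take it to mean the optimal matching distance
\[
\delta_{max}=\min_{\sigma\in\cS_n}\max_{1\le i\le n}|\lambda_i(A)-\lambda_{\sigma(i)}(B)|,
\]
where a single permutation $\sigma$ is used for all $i$. This is the reading under which the statement holds. If instead the minimum over $\sigma$ is taken separately for each $i$, the quantity is only $\max_i \min_j|\lambda_i(A)-\lambda_j(B)|$, and the corollary can fail. For example, take $A=\diag(0,1,1+\epsilon)$ and $B=\diag(0,\epsilon',1)$ with $\epsilon,\epsilon'$ tiny. Then $\max_i\min_j|\lambda_i(A)-\lambda_j(B)|=\epsilon$, whereas the Hoffman--Wielandt inequality gives $\|A-LBL^*\|_F\ge$ (optimal matching distance) $\approx 1$ for every unitary $L$. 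So I fix $\sigma$ to be a permutation attaining the min--max above.

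\textbf{Construction of $L$.} By the spectral theorem, write $A=U\Lambda_AU^*$ and $B=V\Lambda_BV^*$ with $U,V$ unitary (indeed orthogonal). Here $\Lambda_A=\diag(\lambda_1(A),\dots,\lambda_n(A))$ and $\Lambda_B=\diag(\lambda_1(B),\dots,\lambda_n(B))$. Let $P$ be the permutation matrix with $P\Lambda_BP^{\top}=\diag(\lambda_{\sigma(1)}(B),\dots,\lambda_{\sigma(n)}(B))$, and set $L\coloneqq UPV^*$, which is unitary as a product of unitaries. Then
\[
LBL^*=UPV^*V\Lambda_BV^*VP^{\top}U^*=U\,(P\Lambda_BP^{\top})\,U^*,
\]
and therefore $A-LBL^*=U\bigl(\Lambda_A-P\Lambda_BP^{\top}\bigr)U^*$.

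\textbf{The estimate.} The Frobenius norm is unitarily invariant, so
\[
\|A-LBL^*\|_F=\Bigl(\sum_{i=1}^n|\lambda_i(A)-\lambda_{\sigma(i)}(B)|^2\Bigr)^{1/2}\le \sqrt{n}\,\delta_{max}\le n\,\delta_{max}.
\]
This is in fact a stronger bound than the one claimed. Simplicity of the spectra is not needed for this argument. It only matters downstream, where it makes the optimal $\sigma$ (and hence $L$, up to phases) essentially unique.

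The only genuine obstacle is the interpretive point about the placement of $\min_\sigma$ in the definition of $\delta_{max}$. Once a single global matching is fixed, everything else is routine linear algebra.
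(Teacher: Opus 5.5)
Your proof is correct and follows the same route as the paper: build $L$ from the two eigenbases, then use unitary invariance of $\|\cdot\|_F$ to reduce to a difference of diagonal eigenvalue matrices. The one difference that matters is in your favour. The paper takes $L=UV^*$ with no matching and asserts $\|\Lambda-\Sigma\|_F=\sqrt{\sum_i\delta_i^2}$ with $\delta_i$ the per-index minima, which is exactly the step your counterexample shows fails under the literal definition of $\delta_{max}$. So your reading of $\delta_{max}$ as a single global matching $\sigma$, realized by the permutation $P$, is the repair that step needs, and it also gives the sharper factor $\sqrt{n}$.
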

\begin{proof}
    The matrices $A, B$ are real symmetric with a simple spectrum, so we can write $A = U \Lambda U^*$ and $B = V \Sigma V^*$ for $U, V$ unitary. Put $L = UV^*$, and put $\delta_i = \min_{\sigma \in \cS_n} |\lambda_i(A)-\lambda_{\sigma(i)}(B)|$.
    Then $A-LBL^* = U(\Lambda-\Sigma)U^*$. Because the Frobenius norm $\|\cdot\|_F$ is unitarily invariant,
    \[
    \|A-LBL^*\|_F = \|U(\Lambda-\Sigma)U^*\|_F = \|\Lambda-\Sigma\|_F=  \sqrt{\sum_{i=1}^n\delta_i^2} \; \leq \; n \delta_{max}.\qedhere
    \]
\end{proof}

The following lemma is useful for translating problems involving exact orbit equality or closure intersections for tensors into their approximate (or numerical) counterparts.
For $X \in \CC^{n\times n\times n}$, Let $\|X \|_\infty$ denote the entrywise $\ell_\infty$ norm, i.e. $\max_{1\leq i,j,k \leq n} |X_{ijk}|$.

\begin{lemma}\label{lem:perturbedOrbit}
Let $A, B \in \CC^{n\times n \times n}$, and suppose there are unitary $L, R, T$ such that $(L, R, T) \curvearrowright A = B$. Fix $\varepsilon_0 > 0$. Then for all $\tilde{A} \in \CC^{n \times n\times n}$ satisfying $\|A - \tilde{A}\|_\infty \leq \varepsilon_0$,
\[
\|(L, R, T) \curvearrowright \tilde{A} - B\|_F \leq n^{3/2}\varepsilon_0.
\]
\end{lemma}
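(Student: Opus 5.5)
The plan is to use linearity of the action together with unitary invariance of the Frobenius norm. Since $(L,R,T)\curvearrowright A = B$ and the action $X \mapsto (L,R,T)\curvearrowright X$ is linear in $X$, we can write
\[
(L,R,T)\curvearrowright \tilde{A} - B = (L,R,T)\curvearrowright \tilde{A} - (L,R,T)\curvearrowright A = (L,R,T)\curvearrowright (\tilde{A}-A).
\]
This reduces the claim to bounding the Frobenius norm of the image of the error tensor $E \coloneqq \tilde{A}-A$.

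Next we would show that $\|(L,R,T)\curvearrowright E\|_F = \|E\|_F$. Using the vectorization from \ref{alg1step2}, the action corresponds to multiplying $\vecc(E)$ by a Kronecker product of $L$, $R$, $T$, in some fixed order matching the lexicographic indexing. For the identity we need only that a Kronecker product of unitary matrices is unitary, since $(T\otimes R\otimes L)(T\otimes R\otimes L)^* = TT^*\otimes RR^*\otimes LL^* = I$, and that the Frobenius norm of a tensor equals the Euclidean norm of its vectorization. Equivalently, we could invoke the flattening identity $B_1 = L A_1 (T\otimes R)^*$-type relation already used in the proof of \cref{lem:fundamentalCondition}, together with unitary invariance of $\|\cdot\|_F$ for matrices.

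Finally, we bound the error entrywise: $\|E\|_F^2 = \sum_{i,j,k}|E_{ijk}|^2 \le n^3 \|E\|_\infty^2 \le n^3\varepsilon_0^2$, which gives $\|E\|_F \le n^{3/2}\varepsilon_0$.

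There is no real obstacle here. The only point needing care is keeping the Kronecker ordering consistent with the paper's convention, and even that is immaterial, since any ordering of a Kronecker product of unitaries is unitary.
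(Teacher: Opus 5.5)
Your proposal is correct and follows the paper's proof essentially step for step. Like the paper, you rewrite the difference as the action applied to the error tensor, use unitary invariance of the Frobenius norm, and bound $\|E\|_F^2 \le n^3\varepsilon_0^2$ entrywise. Your justification of the invariance, via unitarity of the Kronecker product, is a detail the paper only asserts.
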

\begin{proof}
Put $\Delta = A - \tilde{A}$. Notice
\begin{align*}
    (L, R, T) \curvearrowright \tilde{A} - B &= (L,R,T) \curvearrowright (A + \Delta) - (L, R, T)\curvearrowright A\\
    &= (L, R, T) \curvearrowright \Delta.
\end{align*}
The action is unitary on $\CC^{n \times n \times n}$ with respect to the Frobenius inner product, so it preserves the norm. Hence
\[
\|(L, R, T) \curvearrowright \Delta\|_F = \|\Delta\|_F.
\]
By the assumption that $\|A - \tilde{A}\|_\infty \leq \varepsilon_0$, we have $\|\Delta\|_\infty < \varepsilon_0$ and $|\Delta_{ijk}|\leq \varepsilon_0$ for all $i,j,k$. But
\[
\|\Delta\|_F^2 = \sum_{i,j,k}|\Delta_{ijk}|^2 \leq \sum_{i,j,k}\varepsilon_0^2 = n^3\varepsilon^2
\]
implying $\|\Delta\|_F \leq n^{3/2}\varepsilon$. Combining the above observations we see that
\[
\|(L, R, T) \curvearrowright \tilde{A} - B\|_F \leq n^{3/2}\varepsilon_0. \qedhere
\]
\end{proof}

Using these observations and \cref{thm:UnitIsoPrelim}, we conclude that if there are unitary matrices $L, R, T$ such that $B = (L, R, T) \curvearrowright \tilde{A}$, where $\tilde{A}$ denotes the truncation of a tensor $A$ with complex entries, then
\[
    \|(L, R, T)\curvearrowright A - B\|_F \leq \frac{n^{3/2}}{2^\ell}.
\]
This follows from \cref{lem:perturbedOrbit}, because if $(L, R, T) \curvearrowright \tilde{A} = B$, then for all $A \in \CC^{n\times n \times n}$ satisfying $\|A - \tilde{A}\|_\infty \leq \varepsilon_0$, $\|(L,R,T)\curvearrowright A - B\|_F \leq n^{3/2}\varepsilon_0$. In our setting $\varepsilon_0 = 1/2^\ell$, so the assertion follows.

This observation leads nicely into the sequel.

\paragraph{Approximate Unitary TI.}
Using numerical algorithms rather than perfect-precision techniques as in \cref{alg:1} introduces errors that make checking (compact) orbit equality problems infeasible. Instead, one could hope to decide whether the orbits of two 3-tensors by unitary action are ``close" or ``far" apart.
The content of the following theorem is to show that in the fixed-precision setting, there is a numerical algorithm for unitary 3-tensor isomorphism with the property that ``yes" instances imply an upper-bound on the distance
\[
\dist_\CC(A, B) \coloneqq \inf_{L, R, T \in \Unitary(n)} \|(L, R, T) \curvearrowright A - B\|_F
\]
when $A$ and $B \in \QQ(i)^{n\times n \times n}$.

More specifically, the following theorem proves existence of an algorithm solving what we shall call the \emph{approximate unitary tensor isomorphism problem.} 

\begin{problem}\label{prob-3}
    The \emph{approximate unitary tensor isomorphism problem} is the following: given input of an operating precision $\ell$, and tensors $A$ and $B$ in $\QQ(i)^{n\times n \times n}$ with entries specified to $\ell$ bits of precision, decide between the following cases. 
    \begin{enumerate}
        \item Output ``no'' if there are no orthogonal matrices $L, R, T$ such that
\[
(L, R, T) \curvearrowright A = B.
\]
\item Otherwise, attempt to construct a fixed-precision witness $(\widetilde{L}, \widetilde{R}, \widetilde{T}) \in \Unitary(n)^3$ such that
\[
\|(\widetilde{L}, \widetilde{R}, \widetilde{T})\curvearrowright A - B\|_F\leq O\left(\frac{n}{2^\ell}\right),
\]
outputting ``yes" if such a witness can be found.
    \end{enumerate}
\end{problem}

\begin{theorem}\label{thm:numericalUnitIso}
There exists a deterministic polynomial-time algorithm which, given $A\in\QQ(i)^{n\times n\times n}$ with i.i.d.\ entries $a_{ijk} \sim \xi$ for a fixed---mean zero, unit variance---sub-Gaussian distribution $\xi$ and $B\in\QQ(i)^{n\times n\times n}$ arbitrary and deterministic, with high probability decides the approximate unitary tensor isomorphism problem for $A$ and $B$, with implied constants depending only on the distribution of $\xi$.
\end{theorem}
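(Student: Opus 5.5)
We plan to obtain \cref{thm:numericalUnitIso} by running a fixed-precision version of \cref{alg:1}: the exact algebraic subroutines are replaced by numerical ones, and the probabilistic input is the eigenvalue repulsion of \cref{thm-repulsion-unscaled}. The steps are as follows.

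\textbf{Step 1: flattenings and Gram matrices.} Given $A,B$ at precision $\ell$, form the mode-$d$ flattenings and the Gram matrices $G_d(A), G_d(B)$ exactly. This is possible because their entries are rationals of bit length $O(\ell+\log n)$.

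\textbf{Step 2: spectral gap of $G_d(A)$.} Each $G_d(A)=A_dA_d^*$ is a sub-Gaussian Wishart matrix with $A_d$ of size $n\times n^2$. By \cref{thm-repulsion-unscaled} in the form \eqref{eq-gap-useable} with $k=2$ and some fixed $\beta>1/2$, the minimal gap satisfies $\MinEigenvalueGap(G_d(A))\ge \Omega(n^{1/2-2\beta})$ for all $d$ simultaneously. This holds with probability $1-O(n^{1-2\beta})$ after a union bound over $d=1,2,3$.

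\textbf{Step 3: numerical eigendecomposition.} Run the algorithm of \cite{dey2023bit} on $G_d(A)$ and $G_d(B)$ to accuracy $2^{-\ell'}$, where $\ell'=\ell+O(\log n)$ is chosen polynomially. This yields approximate eigenvalues $\tilde\Lambda$ and approximately unitary eigenvector matrices $\tilde U_d,\tilde V_d$. We then re-unitarize these matrices by a QR step, which keeps the error at $\poly(n)2^{-\ell'}$.

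\textbf{Step 4: carrying the gap over.} By \cref{lem:truncation} (Hoffman--Wielandt), the computed spectrum of $G_d(A)$ remains separated. Because of this gap, Davis--Kahan gives eigenvector error $\poly(n)2^{-\ell'}/\MinEigenvalueGap$, which is polynomially small.

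\textbf{Step 5: spectral comparison.} Compare the sorted spectra of $G_d(A)$ and $G_d(B)$ coordinatewise against a tolerance $\tau=\poly(n)2^{-\ell}$. If they differ by more than $\tau$, output ``no''. This is justified since exact isomorphism forces equal spectra, and the only discrepancy then comes from truncation, bounded via \cref{lem:truncation}. If the spectrum of $G_d(B)$ is not $\tau$-close to the gapped spectrum of $G_d(A)$, it is also not isomorphic.

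\textbf{Step 6: phase matching.} Form $S_A=(\tilde U_1^*,\tilde U_2^*,\tilde U_3^*)\curvearrowright A$ and $S_B$ likewise from the $\tilde V_d$. In the exact case, \cref{lem:fundamentalCondition} shows that $S_B$ equals $(D_1,D_2,D_3)\curvearrowright S_A$ for diagonal phase matrices $D_d$. Numerically we replace the torus-orbit test of \cite{burgisser_et_alCCC.2021.32} by a direct phase recovery. Write $a_i=e^{i\theta_i}$, $b_j=e^{i\phi_j}$, $c_k=e^{i\psi_k}$. For entries with $|(S_A)_{ijk}|$ above a threshold, we require
\[
\theta_i+\phi_j+\psi_k \equiv \arg (S_B)_{ijk}-\arg (S_A)_{ijk} \pmod{2\pi}
\]
up to tolerance. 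To find the phases we take a spanning set of large entries: fix $\theta_1=\phi_1=0$ using the global redundancy, read off the $\psi_k$ from the entries $(1,1,k)$, and then propagate to the remaining phases. A small linear program over the angle differences (cf.\ the remark after \cref{lem:algoPoly}) can serve the same purpose. Finally, we output $\widetilde L=\tilde V_1D_1\tilde U_1^*$ and the analogous $\widetilde R,\widetilde T$, and verify directly that $\|(\widetilde L,\widetilde R,\widetilde T)\curvearrowright A-B\|_F\le O(n2^{-\ell})$, answering ``yes'' iff the check passes.

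\textbf{Correctness.} If the check passes, the verification itself certifies the witness. If $A\cong B$ exactly, \cref{lem:fundamentalCondition} gives true phases, and the computed phases deviate from them only by the propagated numerical error. \cref{lem:perturbedOrbit} then bounds the final distance by the operating precision, times a polynomial factor absorbed by choosing $\ell'$ larger.

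\textbf{Main obstacle.} The hard step is phase recovery. We need $S_A$ to have enough entries of non-negligible modulus, such as the $(1,1,k)$ entries, so that the phase errors do not blow up. This should follow because $S_A$ is a random tensor rotated by the singular bases of its own flattenings; an anti-concentration argument in the spirit of the incompressibility results of \cref{sec-repulsion} would show that a connected set of entries has modulus at least $n^{-O(1)}$ with high probability. If this fails, the fallback is a least-squares or linear-programming formulation over all entries weighted by modulus, accepting a larger polynomial loss in the approximation factor.
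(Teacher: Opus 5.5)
Your architecture is the paper's: exact Gram matrices, numerical eigendecomposition via \cite{dey2023bit}, the gap from \cref{thm-repulsion-unscaled} with $k=2$, the reduction of \cref{lem:fundamentalCondition}, and witnesses of the form $\widetilde V_dD_d\widetilde U_d^*$. Your tolerance-based comparisons and the Davis--Kahan step are, if anything, more careful than the paper's equality tests.

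\textbf{Phase recovery.} The gap is exactly where you depart from the paper. In the last step of \cref{alg:1} and \cref{alg:2}, the phase problem $(D_1,D_2,D_3)\curvearrowright S_A=S_B$ is handed to the torus-orbit algorithm of \cite{burgisser_et_alCCC.2021.32}. That algorithm imposes no condition on the support or the magnitudes of $S_A$. Your propagation from the entries $(1,1,k)$, $(1,j,1)$, $(i,1,1)$ needs all of them to have modulus at least $n^{-O(1)}$, and nothing in the paper gives this.

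To see what is required, write $u^{(d)}_i$ for the columns of $U_d$ and $w_k\in\C^{n^2}$ for the $k$-th right singular vector of the flattening $A_3$. Then, up to the index convention of the flattening, $|S_A(i,j,k)|=\sqrt{\lambda_k(G_3(A))}\,|(u^{(1)}_i\otimes u^{(2)}_j)^{\top}w_k|$. So you need anti-concentration of an overlap between singular vectors of three different flattenings of the same $A$. No independent randomness is left to run a Littlewood--Offord argument against, and \cref{sec-repulsion} (incompressibility and $\rLCD$ bounds) only treats one flattening at a time. The claim is plausible, but as written, completeness in the isomorphic case is unproved.

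The fallback does not rescue it. The constraints $\theta_i+\phi_j+\psi_k\approx\Phi_{ijk}$ hold only modulo $2\pi$, so this is neither an LP nor a least-squares problem: the wrap-around integers are unknowns. For sparse supports the system can also have torsion. With support $\{(1,1,1),(2,2,1),(1,2,2),(2,1,2)\}$ one gets $2(\psi_1-\psi_2)\equiv\Phi_{111}+\Phi_{221}-\Phi_{122}-\Phi_{212}$. Hence $\psi_1-\psi_2$ is fixed only modulo $\pi$, and greedy propagation with free choices fails. This is what the integer linear algebra in \cite{burgisser_et_alCCC.2021.32} handles. The paper's own remark about replacing it by linear programming only says this ``would appear'' possible.

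\textbf{Decision rule.} Separately, your acceptance rule does not decide \cref{prob-3} as stated. Item~1 demands \textsf{no} for every $B\notin\Orb_\C(A)$, but you answer \textsf{yes} whenever a witness passes a tolerance of order $n2^{-\ell}$. Here is a concrete failure. Suppose the entries of $A$ use fewer than $\ell$ bits, and let $B=A+2^{-\ell}E$, where $E$ has a single nonzero entry $1$ at $(1,1,1)$. Then $\|B\|_F\neq\|A\|_F$, so $B$ is not isomorphic to $A$. Yet $\dist_\C(A,B)\le 2^{-\ell}$, so $B$ passes your spectral comparison, and your verification is designed to accept the near-identity witness.

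Your correctness paragraph shows only that \textsf{no} answers and returned witnesses are valid, not that \textsf{yes} implies isomorphism. A test at tolerance of order $2^{-\ell}$ cannot give that. Since $A,B$ have entries in $\QQ(i)$, a simple repair for both problems is available. Make the yes/no decision with the exact \cref{alg:1} (\cref{thm:UnitIsoPrelim}), which also yields exact $U_d,V_d,D_d$ with no assumption on $S_A$. Then use numerics only to round $V_dD_dU_d^*$ to the required precision.
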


We establish \cref{thm:numericalUnitIso} through a straight-forward adaptation of \cref{alg:1}. The key difference between \cref{alg:1} and \cref{alg:2} presented below is in the numerical computation of the decompositions of the Gram matrices $G_d(A)$, $G_d(B)$, $d=1,2,3$, rather then computing the decompositions perfectly using algebraic numbers. This step introduces errors into the procedure that preclude the ability to decide whether $\Orb_{\CC}(A) = \Orb_{\CC}(B)$, and instead force us to conclude orbit closeness.

The modified numerical algorithm is presented below in \cref{alg:2}.

\begin{algorithm}[H]
\caption{Compact orbit distance decision procedure for unitary actions.}
\label{alg:2}
    \begin{enumerate}[label=Step \arabic*]
    \item [\textbf{Input}]  Let the positive integer $\ell$ denote the operating precision. Take as input $A, B \in \QQ(i)^{n\times n \times n}$ (with $\ell$ the maximum bit-length of the entries of $A$ and $B$ in the standard binary encoding).
    
    \item\label{alg2Step1}
    Let $G_d(A)$ and $G_d(B)$, $d=1,2,3$ denote the Gram matrices obtained from the mode-$d$ flattenings of $A$ and $B$, respectively, and using a numerical algorithm with good forward error bounds (such as that proposed in \cite{dey2023bit}), to compute the unitary diagonalizations (JNFs) $\widetilde{U_d} \widetilde{\Lambda_d^A} \widetilde{U_d}^*$ of $G_d(A)$, and $\widetilde{V_d} \widetilde{\Lambda_d^B} \widetilde{V_d}^*$ of $G_d(B)$, to $\ell$ bits of precision per entry of the matrices.
    
    Specifically, if $\|\cdot\|$ denotes the operator norm of a matrix and $\|\cdot\|_\infty$ denote the entrywise infinity norm, recalling $\|M\|_\infty\leq \|M\|\leq n\|M\|_\infty$; then Algorithm 1 of \cite{dey2023bit} gives a polynomial-time procedure giving $\widetilde{U_d} \widetilde{\Lambda_d^A} \widetilde{U_d}^*$ and $\widetilde{V_d} \widetilde{\Lambda_d^B} \widetilde{V_d}^*$ so that
    \[
    \|U_d-\widetilde{U_d}\| \leq \frac{n\|U_d\|_\infty}{2^\ell} \text{ and } \|\Lambda_d^A-\widetilde{\Lambda_d^A}\| \leq \frac{n\|\Lambda_d^A\|_\infty}{2^\ell}
    \]
    for the exact $G_d(A) = U_d\Lambda_d^A U_d^*$. The same holds for the decomposition of $G_d(B)$.
    
    Suppose, without any loss of generality, that the eigenvalues composing $\widetilde{\Lambda_d^A}$ and $\widetilde{\Lambda_d^B}$ are given in nonincreasing order.
    For each $d=1,2,3$, check whether, for some $d$, the spectrum of either $G_d(A)$ or $G_d(B)$ is simple by studying the entries matrices $\widetilde{\Lambda_d^A}, \widetilde{\Lambda_d^B}$. If either $\widetilde{\Lambda_d^A}$ or $\widetilde{\Lambda_d^B}$ have non-simple spectrum, output \textsf{CANNOT DECIDE INSTANCE} and halt.
    If the spectra are simple, check whether $\widetilde{\Lambda_d^A} = \widetilde{\Lambda_d^B}$ for each $d=1,2,3$. If for some $d$ we have $\widetilde{\Lambda_d^A} \ne \widetilde{\Lambda_d^B}$, output \textsf{NO}.
    
    \item\label{alg2Step2} Set
    \[
    \widetilde{S_A}\coloneqq (\widetilde{U_1}^*, \widetilde{U_2}^*, \widetilde{U_3}^*) \curvearrowright A,\;\text{ and }\;
    \widetilde{S_B}\coloneqq (\widetilde{V_1}^*, \widetilde{V_2}^*, \widetilde{V_3}^*) \curvearrowright B.
    \]
    Let $v_A=\vecc(\widetilde{S_A}),\,v_B=\vecc(\widetilde{S_B})\in\CC^{N}$ with $N=n^3$, be the vectorizations of $S_A$ and $S_B$. In particular, set $(v_A)_l = (\widetilde{S_A})_{ijk}$ using the lexicographic order $l=(k-1)+n(j-1)+n^2(i-1)$ (same for $v_B$).
    
    \item\label{alg2Step3} Verify that (i) $\supp(v_A) = \supp(v_B)$, and (ii) that for all $l\in [n^3]$, it holds that $|(v_A)_l| = |(v_B)_l|$. If any of these conditions fail, output \textsf{NO}.
    
    \item\label{alg2Step4} We now apply the results for compact torus actions of \cite{burgisser_et_alCCC.2021.32}: using Corollary 1.6 (specifically Algorithm 2) in \cite{burgisser_et_alCCC.2021.32}, 
    Check whether $\Orb_\cK(v_A) = \Orb_\cK(v_B)$. If so, output \textsf{YES} and optionally output the witness $\widetilde{L} = \widetilde{V_1} \widetilde{D_1} \widetilde{U_1}^*$, $\widetilde{R} = \widetilde{V_2}\widetilde{D_2}\widetilde{U_2}^*$, $\widetilde{T} = \widetilde{V_3}\widetilde{D_3}\widetilde{U_3}^*$, where
    \[
    \widetilde{D_1} = \diag(a_1,\dots,a_n),\; \widetilde{D_2} = \diag(b_1,\dots,b_n), \; \widetilde{D_3} = \diag(c_1,\dots,c_n)
    \]
    for appropriate element $(a, b, c) \in \cK$ (recalling $\cK = (S^1)^{3n}$) found by the algorithm.
    Otherwise output \textsf{NO}.
    \end{enumerate}
\end{algorithm}

Proof of correctness and efficient run-time of \cref{alg:2} follows identical lines with the proof of correctness and efficient run-time of \cref{alg:1}.
We are now in a position to prove \cref{thm:numericalUnitIso}.

\begin{proof}[Proof of \cref{thm:numericalUnitIso}]
       Proof of correctness and efficient run-time of \cref{alg:2} follows from \cref{lem:algoCorrect,lem:algoPoly}, \textit{mutatis mutandis}.
       By \cref{thm-repulsion-unscaled}, with high probability the Gram matrix $G_d(A)$ has simple spectrum, and, crucially the minimal distance between its eigenvalues is greater than $O(1/2^\ell)$; this enables the algorithm to distinguish the eigenvalues at precision $\ell$, and to hence output $\textsf{YES}$ or $\textsf{NO}$ with high probability.

       Finally, suppose the algorithm outputs \textsf{YES} with a witness $(\widetilde{L}, \widetilde{R}, \widetilde{T})$. The forward error bounds stated in \ref{alg2Step1} ensure $\|\widetilde{U}_d - U_d\| \leq O(n 2^{-\ell})$ and similarly for $\widetilde{V}_d$. Propagating this through the action shows that the produced witness $(\widetilde{L},\widetilde{R},\widetilde{T})$ satisfies
       \[
       \|(\widetilde{L},\widetilde{R},\widetilde{T}) \curvearrowright A-B \|_F \leq \left( \|\widetilde{L} - L\| + \|\widetilde{R} - R\| + \|\widetilde{T} - T\| \right)\|A\|_F + O(2^{-2\ell}) \|A\|_F
       \]
       where $(L, R, T)$ is an exact witness. But $\|\widetilde{L}- L\|\leq O(n2^{-\ell})$ (as with $R, T$) gives the concrete bound
       \[
       \|(\widetilde{L},\widetilde{R},\widetilde{T}) \curvearrowright A - B\|_F \leq O\left(\frac{n}{2^\ell}\right). \qedhere
       \]
\end{proof}

\subsection{Unitary orbit approximation}\label{sec:RobustOrthogonalProblem}
Now we zoom out even further to the robust variation of the tensor isomorphism problem. 
Recall the (gapped) unitary tensor orbit approximation problem 
(\cref{prob-2})
from the introduction:
\begin{problem*}[\cref{prob-2}, restated]
The \emph{(gapped) unitary tensor orbit approximation problem} is defined as follows: Given $0<\varepsilon<\eta$ and tensors  $A,B \in \C^{n\times n \times n}$, distinguish between the following two cases:
\[
\dist(\Orb_\C(A), \Orb_\C(B)) \leq \varepsilon \;\text{ and }\; \dist(\Orb_\C(B), \Orb_\C(B)) \geq \eta.
\]
\end{problem*}

Note that \cref{prob-2} differs from \cref{prob-3}, in that it decides between close enough (orbit distance $\leq \epsilon$) and quite far (orbit distance $\geq \eta$). On the other hand, \cref{prob-3} decides between isomorphic (orbit distance $=0$) and non-isomorphic (orbit distance $>0$). 

We present an algorithm (\cref{thm:robustOrthoOrbit,alg:approximateOrbitDistance} below) which solves the unitary tensor orbit approximation problem \cref{prob-2} in the case where the Gram matrices of $A$, $G_d(A)$ for $d\in [3]$ have simple spectrum with a minimum eigenvalue gap of at least $\delta>0$.
\cref{thm-repulsion-unscaled} ensures that the random $A$ provided in \cref{thm:robustOrbit} 
satisfies the aforementioned criteria; this fact paired with the following theorem establishes \cref{thm:robustOrbit}.

\begin{theorem}[Deciding the unitary orbit approximation problem]\label{thm:robustOrthoOrbit}
Let $A,B\in\C^{n\times n\times n}$ such that $G_d(A)$ has simple spectrum with minimum
eigenvalue gap at least $\delta > 0$, for every $d\in [3]$. Then for every $\varepsilon$ satisfying $\delta/4(\|A\|_F+\|B\|_F)> \varepsilon > 0$, there exists an algorithm that runs in time $O(\poly(n, \log(1/\varepsilon))$
distinguishing $\dist_\C(A,B)<\varepsilon$ from $\dist_\C(A,B)>\gamma\varepsilon$  
for $\gamma = O(n^{7/2}\|A\|^2_F/\delta)$, and in the former case outputs a witness $(\widetilde{L},\widetilde{R},\widetilde{T})\in \Unitary(n)^3$ such that 
$\|(\widetilde{L},\widetilde{R},\widetilde{T})\curvearrowright A-B\|_F< \gamma\varepsilon$.
\end{theorem}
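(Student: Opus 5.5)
Our approach is to make the exact HOSVD reduction of \cref{lem:fundamentalCondition} robust. The plan has four stages: lock down the eigenbases of $A$ and $B$ using the gap $\delta$, reduce to an approximate phase-matching problem on the core tensors, solve that problem by linear programming, and finally convert the phase residual back into an orbit-distance bound.

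\textbf{Step 1 (the algorithm).} For $d\in[3]$ we compute the Gram matrices $G_d(A)$ and $G_d(B)$. We diagonalize them numerically, via \cite{dey2023bit}, to precision $\poly(\varepsilon/n)$; this costs $O(\log(n/\varepsilon))$ bits and hence polynomial time. This gives $G_d(A)\approx U_d\Lambda^A_dU_d^*$ and $G_d(B)\approx V_d\Lambda^B_dV_d^*$, with eigenvalues in nonincreasing order. We reject if $\|\Lambda^A_d-\Lambda^B_d\|_F$ exceeds roughly $2\varepsilon(\|A\|_F+\|B\|_F)$. Otherwise we form the cores $S_A=(U_1^*,U_2^*,U_3^*)\curvearrowright A$ and $S_B=(V_1^*,V_2^*,V_3^*)\curvearrowright B$. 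The remaining task is to minimize $\|(D_1,D_2,D_3)\curvearrowright S_A-S_B\|_F$ over diagonal unitaries. We linearize this as follows. Write $D_1=\diag(e^{i\alpha})$, $D_2=\diag(e^{i\beta})$, $D_3=\diag(e^{i\theta})$. For each entry with $|(S_A)_{ijk}|$ above a threshold $\tau$, we require $\alpha_i+\beta_j+\theta_k$ to lie within a tolerance $t$ of $\arg(S_B)_{ijk}-\arg(S_A)_{ijk}$ modulo $2\pi$. The tolerance is chosen so that $t|(S_A)_{ijk}|\approx|(S_A)_{ijk}-e^{-i\phi}(S_B)_{ijk}|$. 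Because the phase sums appear only additively, a feasible point can be found by an LP, with the mod-$2\pi$ ambiguity handled by integer offsets. We would determine these offsets through a spanning-structure argument: fix some phases to zero, using the gauge freedom $(\alpha,\beta,\theta)\mapsto(\alpha+c,\beta-c,\theta)$, and propagate along large entries. We output YES together with $\widetilde L=V_1D_1U_1^*$, and analogously $\widetilde R$ and $\widetilde T$, exactly when the resulting residual is at most $\gamma\varepsilon$.

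\textbf{Step 2 (soundness of the YES witness).} This direction is straightforward. The residual $\|(\widetilde L,\widetilde R,\widetilde T)\curvearrowright A-B\|_F$ equals $\|(D_1,D_2,D_3)\curvearrowright S_A-S_B\|_F$ up to the numerical error, because the action is unitary (as in \cref{lem:perturbedOrbit}). So whenever the algorithm says YES, the orbit distance really is below $\gamma\varepsilon$. It remains to show completeness: if $\dist_\C(A,B)<\varepsilon$, the optimum of the phase problem is $O(\gamma\varepsilon)$.

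\textbf{Step 3 (completeness, main obstacle).} Let $(L,R,T)$ be a near-optimal witness, so that $B'=(L,R,T)\curvearrowright A$ satisfies $\|B'-B\|_F<\varepsilon$. For each mode, $\|G_d(B)-LG_d(A)L^*\|\le\varepsilon(\|A\|_F+\|B\|_F)$, and this is less than $\delta/4$ by hypothesis. Weyl's inequality then keeps the spectra close, which justifies the spectral check in Step 1. To compare eigenvectors we use the device from the overview. Set $W=V_1^*LU_1$. The matrix $W$ approximately conjugates $\Lambda^A$ to $\Lambda^B$: we have $\|W\Lambda^A-\Lambda^BW\|_F\lesssim\varepsilon\|A\|_F$. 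Read row $i$ of $|W|^2$ as a probability distribution on eigenvalue indices. The quantity $\sum_j|W_{ij}|^2(\lambda_j-\lambda_i)^2$ is then a variance-type expression, and it is bounded by the squared commutator residual. Since distinct eigenvalues are separated by at least $\delta$, the off-diagonal mass in row $i$ is $O(\varepsilon\|A\|_F/\delta)^2$. Consequently $W$ lies within $O(n^{1/2}\varepsilon\|A\|_F/\delta)$, in Frobenius norm, of a diagonal unitary $D_1$; this is the Davis--Kahan step. Transporting through the three modes gives
\[
\|(D_1,D_2,D_3)\curvearrowright S_A-S_B\|_F\le\varepsilon+O\bigl(n^{1/2}\varepsilon\|A\|_F^2/\delta\bigr),
\]
where the extra factor of $\|A\|_F$ arises from $\|S_A\|_F=\|A\|_F$. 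The remaining factor $n^3$ that makes up $\gamma$ comes from converting between entrywise tolerances in the LP and the Frobenius norm, which costs $n^{3/2}$ from \cref{lem:perturbedOrbit} together with a further loss in the thresholding by $\tau$. Thus the true phases form a feasible LP point, and any LP solution has residual at most $\gamma\varepsilon$ with $\gamma=O(n^{7/2}\|A\|_F^2/\delta)$. We expect the hardest parts to be controlling the entries of $S_A$ that fall below the threshold $\tau$ and handling the branch choices modulo $2\pi$ in the LP. Our first attempt would be to absorb the small entries directly into the Frobenius error budget.
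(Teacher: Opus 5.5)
Your outline follows the paper's route. It uses the HOSVD cores, locks the eigenbases with the gap $\delta$ via the variance device, solves a thresholded phase system by linear programming, and outputs the witness $\widetilde V_dD_d\widetilde U_d^*$. Your Step~3 is a cleaner version of the paper's \cref{lem:orthonormalvectors,lem:diagonalApproximation}. You bound $\|\Lambda^BW-W\Lambda^A\|_F\le\varepsilon(\|A\|_F+\|B\|_F)$ and use $|\lambda^B_i-\lambda^A_j|\ge 3\delta/4$ for $j\neq i$. This shows $W=V_1^*LU_1$ is close to a diagonal unitary directly, with no detour through phased permutations. Sub-threshold entries of $S_A$ are harmless: beyond the true residual they contribute only $O(n^{3/2}\tau)$ in Frobenius norm. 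Outputting YES only when the computed residual is at most $\gamma\varepsilon$ makes soundness automatic. So of the two difficulties you flag, only the second is real.

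The genuine gap is the choice of integer offsets modulo $2\pi$. Propagating from gauge-fixed phases along large entries does not determine them in general, because after gauge fixing there may be no constraint with a single undetermined phase. Consider the $2\times2\times2$ core supported on $(1,1,1),(1,2,2),(2,1,2),(2,2,1)$ with squared moduli $4,3,2,\tfrac12$. All three Gram matrices are diagonal with distinct entries, so this is a legitimate input $A$. The gauge group is two-dimensional, not just the one-parameter shift you wrote, and fixing $\alpha_1=\beta_1=0$ exhausts it. Then $(1,1,1)$ gives $\theta_1$. The remaining three constraints each contain two unknowns and fix $\alpha_2+\beta_2+\theta_2$ only through $2(\alpha_2+\beta_2+\theta_2)$. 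Setting, say, $\beta_2=0$ is not a gauge choice. It forces $\Phi_{111}-\Phi_{122}+\Phi_{212}-\Phi_{221}\equiv 0$, whereas for the true phases this gauge-invariant quantity equals $2(\beta_1-\beta_2)$, which is arbitrary.

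In general, fixing the offsets means finding an integer vector $m$ such that $\Phi+2\pi m$ lies within the tolerance box of the column space of the incidence matrix. This is a closest-vector-type problem. Exact torus methods (Smith normal form, or the algorithm of \cite{burgisser_et_alCCC.2021.32} used in \cref{alg:1}) can amplify entrywise errors by the size of their unimodular transforms. So they do not directly give a polynomial $\gamma$.

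The paper's proof has the same hole. \cref{lem:approxOrbitEfficient} only asserts that the constants force a unique, easily computed cyclic shift in each inequality. But with $\alpha_i,\beta_j,\gamma_k\in[0,2\pi)$ the sum ranges over $[0,6\pi)$, and the shift depends on the unknowns.

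To close the gap you need an actual argument. One option: if the large entries contain full fibers $(i_0,j_0,\cdot)$, $(i_0,\cdot,k_0)$, $(\cdot,j_0,k_0)$ with entries not too small, then depth-two propagation determines every phase up to errors controlled by those entries' tolerances, and no LP is needed. That would still have to be proved for random $A$, and it cannot be assumed for arbitrary $A$ as the theorem allows. Otherwise you need a genuine algorithm for approximate linear systems over the torus.
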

\begin{remark}
We make no effort to optimize constants, including the constant exponents of poly$(n)$ terms. 
\end{remark}

\begin{algorithm}[H]
\caption{Orbit distance decision procedure for orthogonal actions.}
\label{alg:approximateOrbitDistance}
\begin{enumerate}[label=Step \arabic*]
\item[\textbf{Input}] Take $A,B\in\C^{n\times n\times n}$ and let $\widetilde{A},\widetilde{B}$ denote the truncations of tensors $A,B$ which we take to $\ell = \lceil\log(1000n^{7}/\varepsilon)\rceil$-bit precision. Let $\gamma = O(n^{7/2}\|A\|^2_F/\delta)$. Recall the \emph{mode-$d$ flattening of a 3-tensor $W$} is denoted $W_{d} \in \C^{n\times (n^2)}$. Recall the Gram matrix $G_d(A)$ is defined as $A_{d}A_{d}^*$, and similarly for $G_d(B)$.

\item\label{alg:magnitude} Check that $\left|\|\widetilde{A}\|_F-\|\widetilde{B}\|_F\right|<2\varepsilon$, and output \textsf{NO} (i.e.\ $\dist_\C(A,B)>\gamma\varepsilon$) if not.

\item\label{alg:stepJNF} Write the Gram matrices $G_d(A)=U_d\Lambda^A_d U_d^*$ and $G_d(B)=V_d\Lambda^B_d {V_d}^*$, with eigenvalues in non-increasing order.
Let $K\coloneqq2\|\widetilde{A}\|_F$.
We compute the unitary diagonalizations of the Gram matrices of $\widetilde{A},\widetilde{B}$ approximately, getting $\widetilde{G_d(A)}=\widetilde{U_d}\widetilde{\Lambda^A_d}\widetilde{U_d^*}$, and $\widetilde{G_d(B)}=\widetilde{V_d}\widetilde{\Lambda^B_d}\widetilde{V_d^*}$ so that 
\begin{equation}
\|\widetilde{U_d}-U_d\|_F,\|\widetilde{V_d}-V_d\|_F,\|\widetilde{\Lambda^A_d}-\Lambda^A_d\|_F,\|\widetilde{\Lambda^B_d}-\Lambda_d\|_F < \frac{\varepsilon}{1000n^7\|A\|_F }\label{eqn:precisionBound}
\end{equation}
using Algorithm 1 in~\cite{dey2023bit}\footnotemark{} and put the eigenvalues in non-increasing order by conjugating the Gram matrices by appropriate permutation matrices.  
$\widetilde{\Lambda^B_d}$ has minimum eigenvalue gap less than $\delta/2$, \textbf{output} \textsf{NO} ($\dist_{\CC}(A,B)>\gamma\varepsilon$) and halt.

\item\label{alg:HOSVDcore} Define
\[
\widetilde{S_A}\coloneqq \left(\widetilde{U_1^*}, \widetilde{U_2^*}, \widetilde{U_3^*}\right) \curvearrowright \widetilde{A},\qquad
\widetilde{S_B}\coloneqq\left( \widetilde{V_1^*}, \widetilde{V_2^*}, \widetilde{V_3^*}\right) \curvearrowright \widetilde{B}.
\]
If there exist $(i,j,k)$ such that $\left||\widetilde{S_A}(i,j,k)|-|\widetilde{S_B}(i,j,k)|\right|>2\varepsilon n^2\delta^{-1}(\|\widetilde{A}\|_F+\|\widetilde{B}\|_F)$, \textbf{output} \textsf{NO} ($\dist_{\CC}(A,B)>\gamma\varepsilon$).
Otherwise for each $(i,j,k)$ with $|\widetilde{S_A}(i,j,k)|+|\widetilde{S_B}(i,j,k)|> 2\varepsilon n^2\delta^{-1}(\|\widetilde{A}\|_F+\|\widetilde{B}\|_F)$, define
\[
\Phi_{ijk} \coloneqq \arg\left(\frac{\widetilde{S_B}(i,j,k)}{\widetilde{S_A}(i,j,k)}\right)
\]

\item\label{alg:feasibilityCheck} Introduce unknowns $\alpha_i,\beta_j,\gamma_k\in[0,2\pi)$ (encoding phases $s_1(i)=e^{i\alpha_i}$, $s_2(j)=e^{i\beta_j}$, $s_3(k)=e^{i\gamma_k}$) and the linear inequalities
\[
|\Phi_{ijk}-(\alpha_i+\beta_j+\gamma_k)|\pmod{2\pi} < \arccos\left(\frac{|\widetilde{S_A}(i,j,k)|^2+|\widetilde{S_B}(i,j,k)|^2-2\varepsilon^2 n^4\delta^{-2}(\|\widetilde{A}\|_F+\|\widetilde{B}\|_F)^2}{2\; |\widetilde{S_A}(i,j,k)|\;|\widetilde{S_B}(i,j,k)|}\right)\]\[\quad\text{for all }(i,j,k)\text{ with } \Phi_{ijk} \text{ defined above.}
\]

\begin{itemize}
\item If \emph{infeasible}, output \textsf{NO} ($\dist_\C(A,B) > \gamma\varepsilon$).
\item If \emph{feasible}, construct $D_1=\mathrm{diag}(e^{-i\alpha_i})$, $D_2=\mathrm{diag}(e^{i\beta_j})$, $D_3=\mathrm{diag}(e^{i\gamma_k})$ and set
\[
\widetilde{L}\coloneqq\widetilde{V_1}D_1\widetilde{U_1^*},\quad \widetilde{R}\coloneqq\widetilde{V_2}D_2\widetilde{U_2^*},\quad \widetilde{T}\coloneqq\widetilde{V_3}D_3\widetilde{U_3^*}.
\]
output \textsf{YES} ($\dist_\C(A,B) < \varepsilon$) and (optionally) the witness $(\widetilde{L},\widetilde{R},\widetilde{T})$ (note that these witnesses only attest to the fact that $\dist_\C(A,B) < \gamma\varepsilon$).
\end{itemize}
\end{enumerate}

\end{algorithm}
\footnotetext{Note that we are not using the full power of this algorithm}

Before we prove correctness and efficient run-time of our algorithm, we provide some lemmas.

\begin{lemma}\label{lem:traceBound}
Let $A,B\in \C^{n\times n\times n}$ with $G_d(A)=U_d\Lambda^A_d U_d^*$ and $G_d(B)=V_d\Lambda^B_d {V_d}^*$ for $d\in [3]$, 
with $\Lambda_d^A,\Lambda_d^B$ ordered with non-increasing entries on the diagonal, let $\varepsilon>0$
and let $\delta>0$ denote the 
minimum eigenvalue gap of $G_d(A)$ over all $d\in [3]$.

If $G_d(B)$ has some eigenvalue gap $\lambda_i-\lambda_j < \delta-2\varepsilon$ for some $d$ and $i,j$, then
\[\dist_\C(A,B)> \frac{\varepsilon}{(\|A\|_F+\|B\|_F)}.\]
\end{lemma}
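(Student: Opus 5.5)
We argue by contrapositive: assume $\dist_\C(A,B)\le \varepsilon/(\|A\|_F+\|B\|_F)$ and show that every eigenvalue gap of every $G_d(B)$ is at least $\delta-2\varepsilon$. Since the orbit is compact, the infimum in $\dist_\C$ is attained. Hence there are $L,R,T\in\Unitary(n)$ with $A'\coloneqq(L,R,T)\curvearrowright A$ satisfying $\|A'-B\|_F\le \varepsilon/(\|A\|_F+\|B\|_F)$. (If one prefers to avoid compactness, take a near-optimal triple and use a strict inequality; nothing changes.)

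By the computation at the start of the proof of \cref{lem:fundamentalCondition}, $G_d(A')=L_dG_d(A)L_d^*$ with $L_1=L,L_2=R,L_3=T$. So $G_d(A')$ has the same spectrum as $G_d(A)$, and in particular minimum gap at least $\delta$. Orbit distance is not changed by this replacement, and $\|A'\|_F=\|A\|_F$ by unitary invariance of the Frobenius norm.

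Next, bound the Gram perturbation. Write $X=A'_d$ and $Y=B_d$ for the mode-$d$ flattenings; then $\|X-Y\|_F=\|A'-B\|_F$. Split the difference as
\[
XX^*-YY^*=X(X-Y)^*+(X-Y)Y^*,
\]
which gives
\[
\|G_d(A')-G_d(B)\|\le \|G_d(A')-G_d(B)\|_F\le (\|X\|_F+\|Y\|_F)\|X-Y\|_F=(\|A\|_F+\|B\|_F)\|A'-B\|_F\le\varepsilon .
\]

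Now apply Weyl's inequality for Hermitian matrices, which is the spectral stability fact the paper already invokes. With eigenvalues in nonincreasing order, $|\lambda_i(G_d(A'))-\lambda_i(G_d(B))|\le \varepsilon$ for every $i$. For any $i<j$ this yields
\[
\lambda_i(G_d(B))-\lambda_j(G_d(B))\ge \lambda_i(G_d(A'))-\lambda_j(G_d(A'))-2\varepsilon\ge \delta-2\varepsilon ,
\]
because every gap between distinct indices of $G_d(A')$ is at least $\delta$. Gaps for $i>j$ (with the nonincreasing ordering) are read in absolute value, and the same bound holds. This contradicts the hypothesis and proves the lemma.

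There is no real obstacle; the only care needed is in two places. First, use Weyl's inequality (sorted eigenvalues, operator norm) rather than Hoffman–Wielandt, since the latter only controls a matching under some permutation. Second, the Gram-difference bound requires the product splitting above, which is exactly where the factor $\|A\|_F+\|B\|_F$ in the statement comes from. If strictness is a concern, the hypothesis gap $<\delta-2\varepsilon$ against the conclusion $\ge\delta-2\varepsilon$ already gives the needed contradiction.
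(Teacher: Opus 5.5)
Your proof is correct and follows essentially the same route as the paper. Both use the splitting $XX^*-YY^*=X(X-Y)^*+(X-Y)Y^*$ to get $\|G_d(A')-G_d(B)\|_F\le(\|A\|_F+\|B\|_F)\|A'-B\|_F$, then Weyl's inequality; the differences are cosmetic (the paper argues directly on the cores $S_A,S_B$ rather than by contrapositive), and your use of compactness to attain the infimum makes the strict inequality in the conclusion more careful than the paper's version.
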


\begin{proof}
Let
\[
S_A\coloneqq \left(U_1^*, U_2^*, U_3^*\right) \curvearrowright A,\qquad
S_B\coloneqq\left( {V_1^*}, {V_2^*},{V_3^*}\right) \curvearrowright B.
\]
Since $S_A,S_B$ are in the unitary orbits of $A,B$ respectively, 
${\dist_\C(A,B)=\dist_\C(S_A,S_B)}$.
Without any loss of generality, we may assume $G_1(B)$ has $\lambda_i-\lambda_j<\delta-2\varepsilon$. 
Then
\begin{align*}
    \|(L,R,T)\curvearrowright S_A- S_B\|_F &=\|LS_{A[1]}(R\otimes T)^*-S_{B[1]}\|_F\\
    &\geq \frac{\|LS_{A[1]}S_{A[1]}^* L^*-S_{B[1]}S_{B[1]}^* \|_F}{\|S_A\|_F+\|S_B\|_F} \\
    &\geq \frac{\|LG_1(S_A) L^* -G_1(S_B)\|_F}{\|S_A\|_F+\|S_B\|_F} \\
    &> \frac{\varepsilon }{\|A\|_F+\|B\|_F}
\end{align*}
where the last line holds by Weyl's inequality and invariance of the Frobenius norm under the unitary action.
\end{proof}

In the next lemma, we show that if a column of a unitary matrix $U$ has two sufficiently large entries,
and $D$ is a real diagonal matrix with eigenvalue gap at least $\delta$ then $UDU^*$ will have a proportionally large off-diagonal entry.   

\begin{lemma}\label{lem:orthonormalvectors}
Let $v_1,\dots,v_n$ be 
orthonormal vectors in $\C^n$, and $D$ a diagonal matrix with strictly decreasing positive entries $d_1>d_2>\cdots > d_n>0$,
and $d_i-d_{i+1}>\delta>0$ for $i\in [n-1]$.
For $\varepsilon>0$, if for some $i\neq j$ we have $|v_{ni}|>\varepsilon$ and  $|u_{nj}|>\varepsilon$ then there exists
$k\in [n-1]$ such that $\angles{v_n, Dv_k} > \frac{\varepsilon \delta}{\sqrt{2(n-1)}}$.
\end{lemma}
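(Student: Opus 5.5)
We read the statement as follows (the indices are slightly garbled). The vector $v_n$ has two coordinates $i\neq j$ with $|v_n(i)|>\varepsilon$ and $|v_n(j)|>\varepsilon$. The goal is an index $k\in[n-1]$ with $|\angles{v_n,Dv_k}|>\varepsilon\delta/\sqrt{2(n-1)}$; we prove the bound in absolute value, which is what the later application needs.

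The plan is to expand $Dv_n$ in the orthonormal basis $v_1,\dots,v_n$ and lower bound the part orthogonal to $v_n$ by a variance. This is the "probability trick" from the overview. Since $D$ is real diagonal, hence Hermitian, $\angles{v_n,Dv_k}=\overline{\angles{v_k,Dv_n}}$, so it suffices to find $k<n$ with $|\angles{v_k,Dv_n}|$ large. Parseval gives
\[
\sum_{k=1}^{n-1}|\angles{v_k,Dv_n}|^2=\|Dv_n\|^2-|\angles{v_n,Dv_n}|^2 .
\]

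Next, put $p_l=|v_n(l)|^2$. These form a probability distribution on $[n]$ because $\|v_n\|=1$. Let $X$ be the random variable taking value $d_l$ with probability $p_l$, and write $\mu=\E X=\angles{v_n,Dv_n}$, which is real. Then $\|Dv_n\|^2=\sum_l p_l d_l^2=\E X^2$, so the right-hand side above equals $\mathrm{Var}(X)$.

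To bound the variance from below, drop all atoms except $i$ and $j$:
\[
\mathrm{Var}(X)\ \ge\ p_i(d_i-\mu)^2+p_j(d_j-\mu)^2\ \ge\ \min_{t\in\R}\bigl(p_i(d_i-t)^2+p_j(d_j-t)^2\bigr)=\frac{p_ip_j}{p_i+p_j}(d_i-d_j)^2 .
\]
Here $p_ip_j/(p_i+p_j)\ge \tfrac12\min(p_i,p_j)>\varepsilon^2/2$. Since $i\ne j$ and consecutive gaps exceed $\delta$, we have $|d_i-d_j|>\delta$. Hence $\mathrm{Var}(X)>\varepsilon^2\delta^2/2$.

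Finally, pigeonhole over the $n-1$ terms of the sum yields some $k\in[n-1]$ with $|\angles{v_k,Dv_n}|^2>\varepsilon^2\delta^2/(2(n-1))$, which is the claim.

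The only real content is the two-atom variance bound: the one-variable quadratic minimization and the inequality $p_ip_j/(p_i+p_j)\ge \min(p_i,p_j)/2$. Everything else is bookkeeping, so no step is expected to be a genuine obstacle. One should keep the inequalities strict, using the strict hypotheses $|v_n(i)|,|v_n(j)|>\varepsilon$ and $d_l-d_{l+1}>\delta$. Positivity of the $d_l$ is not needed.
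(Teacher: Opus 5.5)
Your proof is correct and follows the paper's argument. Both use Parseval in the basis $v_1,\dots,v_n$, then identify $\|Dv_n\|^2-|\angles{v_n,Dv_n}|^2$ with the variance of the random variable taking value $d_l$ with probability $|v_n(l)|^2$, then apply the two-atom bound $\frac{p_ip_j}{p_i+p_j}(d_i-d_j)^2>\varepsilon^2\delta^2/2$, and finish by pigeonhole over the $n-1$ terms. The only difference is cosmetic: the paper gets the two-atom bound from the law of total variance with the indicator of the event $X\in\{d_i,d_j\}$, which is the same computation as your dropping of atoms and minimizing over the center $t$.
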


\begin{proof}
Observe that $\langle v_n, Dv_k\rangle = \langle Dv_n, v_k\rangle$, since $D$ is symmetric. 
Let $w \coloneqq Dv_n-\angles{Dv_n,v_n}v_n$, and observe that $w\in \mathrm{span}\{v_1,\dots,v_{n-1}\}$.
By Parseval's identity, 
\begin{equation}\label{eqn:parseval}
    \sum_{k=1}^{n-1}|\angles{Dv_n,v_k}|^2=\sum_{k=1}^{n-1}|\angles{w,v_k}|^2=\|w\|^2.
\end{equation}
Hence $\max_{k\in [n-1]} |\angles{Dv_n,v_k}|^2\geq \|w\|^2/(n-1)$. 
It remains to show that $\|w\|^2 \geq \varepsilon^2\delta^2/\sqrt{2}$. 
We have 
\[\|w\|^2 = \|Dv_n\|^2 -|\angles{Dv_n,v_n}|^2=\sum_{k=1}^n d_k^2|v_{nk}|^2-\left(\sum_{k=1}^n d_k|v_{nk}|^2\right)^2.\]
Let $X$ be the random variable which takes the value $d_k$ with probability $|v_{nk}|^2$ (this is well
defined since $\|v_n\|=1$). Then \cref{eqn:parseval} gives $\|w\|^2=\mathrm{Var}(X)$.
Let $Y$ be the indicator random variable,
\[Y=\begin{cases}1 &\text{if } X=d_i\text{ or } X=d_j,\\ 0 & \text{otherwise.}\end{cases}\]
By the law of total variance, 
\begin{align*}
    \mathrm{Var}(X)&=\E[\mathrm{Var}(X|Y)] +\mathrm{Var}(\E[X|Y])\\
    &\geq \PP(Y=1)\cdot\mathrm{Var}(X|Y)\\
    &=(|v_{ni}|^2+|v_{nj}|^2) \cdot \frac{|v_{ni}|^2|v_{nj}|^2}{(|v_{ni}|^2+|v_{nj}|^2)^2}(d_i-d_j)^2\\
    &> \frac{\varepsilon^2\delta^2}{2}
\end{align*}
where the last inequality follows from the fact that $|d_i-d_j|>\delta$, and the fact that
$|u_i|^2,|u_j|^2>\varepsilon^2$.
Therefore $\|w\|>\frac{\varepsilon \delta}{\sqrt{2}}$ and the result holds.
\end{proof}

\begin{lemma}\label{lem:diagonalApproximation}
Let $A, B\in \C^{n\times n\times n}$ such that $G_d(A),G_d(B)$ are diagonal matrices with positive real entries sorted in decreasing order with minimum gap $\delta$ for all $d\in [3]$. Suppose further that there exist $L,R,T$ unitary such that $\|(L,R,T)\curvearrowright A- B\|_F<\varepsilon$, for some $0<\varepsilon < \delta/(\|A\|_F+\|B\|_F)$. Then there exist diagonal phase matrices $D_1,D_2,D_3$ such that $\|(D_1,D_2,D_3)\curvearrowright A- B\|_F< \gamma\varepsilon$ for $\gamma=O(n^2(\|A\|_F+\|B\|_F)^2/\delta)$.
\end{lemma}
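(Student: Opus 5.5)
The plan is to show that each of $L,R,T$ is close, in Frobenius norm, to a diagonal phase matrix. Then the near-witness $(L,R,T)$ can be replaced by its diagonal phase part at the cost of a polynomial blow-up in $\varepsilon$.

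\emph{Step 1: transfer the closeness to the Gram matrices.} Put $B'=(L,R,T)\curvearrowright A$, so that $\|B'-B\|_F<\varepsilon$. As in the proof of \cref{lem:traceBound}, for each mode $d$,
\[
\|G_d(B')-G_d(B)\|_F\le \|B'_d-B_d\|_F\,(\|B'\|_F+\|B\|_F)<\varepsilon(\|A\|_F+\|B\|_F)=:\varepsilon'.
\]
Here we use $G_d(B')=L_dG_d(A)L_d^*$, where $L_1=L$, $L_2=R$, $L_3=T$, and $\|B'\|_F=\|A\|_F$.

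Write $\Lambda^A=G_d(A)$ and $\Lambda^B=G_d(B)$; both are diagonal by hypothesis. Since $\varepsilon'<\delta$, the Hoffman--Wielandt inequality together with the ordering gives $\|\Lambda^A-\Lambda^B\|_F<\varepsilon'$. It follows that
\[
\|L_d\Lambda^A L_d^*-\Lambda^A\|_F<2\varepsilon'.
\]

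\emph{Step 2: show $L_d$ is nearly diagonal.} This is the key step, and it is where \cref{lem:orthonormalvectors} is used. The off-diagonal entries of $L_d\Lambda^AL_d^*$ are at most $2\varepsilon'$ in modulus. Equivalently, the commutator $[L_d,\Lambda^A]$ has Frobenius norm below $2\varepsilon'$. Its $(i,k)$ entry is $(L_d)_{ik}(\lambda_k-\lambda_i)$, and $|\lambda_k-\lambda_i|\ge\delta$ for $i\ne k$. Hence
\[
\sum_{i\ne k}|(L_d)_{ik}|^2\le 4\varepsilon'^2/\delta^2.
\]
This commutator identity gives the off-diagonal bound directly, and it is cleaner than the variance argument of \cref{lem:orthonormalvectors}; I would use it, keeping \cref{lem:orthonormalvectors} as a fallback. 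Since the rows of $L_d$ are unit vectors, each diagonal entry satisfies $|(L_d)_{ii}|^2\ge 1-4\varepsilon'^2/\delta^2$.

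Now let $D_d=\diag\bigl((L_d)_{ii}/|(L_d)_{ii}|\bigr)$. Then
\[
\|L_d-D_d\|_F^2\le \sum_{i\ne k}|(L_d)_{ik}|^2+\sum_i\bigl(1-|(L_d)_{ii}|\bigr)^2=O(\varepsilon'^2/\delta^2),
\]
using $1-|x|\le 1-|x|^2$ for $|x|\le 1$ and summing over the $n$ diagonal entries. The main obstacle is getting this diagonal-deviation term under control with only polynomial losses in $n$. Summing over all $i$ could cost a factor of $n$, which is where the allowed $n^2$ in $\gamma$ gives slack.

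\emph{Step 3: assemble the bound.} By the triangle inequality and unitary invariance of the Frobenius norm,
\[
\|(D_1,D_2,D_3)\curvearrowright A-B\|_F\le \varepsilon+\|(D_1,D_2,D_3)\curvearrowright A-(L,R,T)\curvearrowright A\|_F.
\]
We bound the last term by telescoping, replacing one factor at a time:
\[
\|(D_1,D_2,D_3)\curvearrowright A-(L,R,T)\curvearrowright A\|_F\le \sum_{d}\|L_d-D_d\|\,\|A\|_F\le O(\varepsilon'\|A\|_F/\delta).
\]
This gives
\[
\|(D_1,D_2,D_3)\curvearrowright A-B\|_F=O\bigl(\varepsilon(\|A\|_F+\|B\|_F)^2/\delta\bigr),
\]
comfortably within the stated $\gamma=O(n^2(\|A\|_F+\|B\|_F)^2/\delta)$. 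The extra $n^2$ absorbs any crude norm comparisons, for example between spectral and Frobenius norms, or from applying \cref{lem:orthonormalvectors} entrywise instead of the commutator identity.
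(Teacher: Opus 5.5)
Your proof is correct, but its key step differs from the paper's. The paper never matches the spectra first. It compares $LG_1(A)L^*$ directly with the diagonal $G_1(B)$ and uses the variance argument of \cref{lem:orthonormalvectors} to show that no row of $L$ has two entries above the threshold $\sqrt{2(n-1)}\,\varepsilon'/\delta$ (with your $\varepsilon'=\varepsilon(\|A\|_F+\|B\|_F)$). It deduces that $L$ is within $O(n^2\varepsilon'/\delta)$ of a phased permutation $\Pi$, and only then uses the ordering and the gap to force $\Pi$ to be diagonal. You instead use the sorted Hoffman--Wielandt bound to replace $G_d(B)$ by $G_d(A)$, so that $L_d$ nearly commutes with a diagonal matrix with separated entries. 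The identity $[L_d,\Lambda^A]_{ik}=(L_d)_{ik}(\lambda_k-\lambda_i)$ then controls the entire off-diagonal mass in $\ell^2$ at once.

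Your route has three advantages. The permutation never has to be identified. There are no $n$-dependent losses: you get $\gamma=O(\|A\|_F(\|A\|_F+\|B\|_F)/\delta)$, a factor $n^2$ better than stated. And the argument is uniform on the whole range $\varepsilon'<\delta$. By contrast, the paper's ``one large entry per row'' step is informative only when its threshold is below about $1/\sqrt{n}$, i.e.\ $\varepsilon'\lesssim\delta/n$. The rest of that range is harmless only because the claimed bound then exceeds $\|A\|_F+\|B\|_F$ and is trivial. What the paper's route offers in exchange is that it works row by row from a single large off-diagonal entry and needs no prior alignment of the two spectra.

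A few details to tidy:
\begin{itemize}
\item The factor of $n$ you worried about in the diagonal term does not arise, since $\sum_i(1-|(L_d)_{ii}|)^2\le\sum_i(1-|(L_d)_{ii}|^2)=\sum_{i\ne k}|(L_d)_{ik}|^2$.
\item Once $\varepsilon'>\delta/2$, some $(L_d)_{ii}$ may vanish, so choose an arbitrary phase there.
\item Absorbing the additive $\varepsilon$ into your final $O(\cdot)$ uses $\|A\|_F^2=\mathrm{Tr}\,G_1(A)\ge\lambda_1\ge(n-1)\delta$.
\item The sorted Hoffman--Wielandt step does not need $\varepsilon'<\delta$.
\end{itemize}
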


\begin{proof}
We have \[\|(L,R,T)\curvearrowright A-B\|_F\geq \frac{\|(L A_{[1]} (R\otimes T)^*)(L A_{[1]} (R\otimes T)^*)^* - B_{[1]}B_{[1]}^*\|_F}{\|A\|_F+\|B\|_F}=\frac{\|LG_1(A)L^* - G_1(B)\|_F}{\|A\|_F+\|B\|_F}.\]
By Lemma~\ref{lem:orthonormalvectors}, if there exists a row $i$ of $L$ with 
\[|L_{ij}|>\varepsilon \delta^{-1}\sqrt{2(n-1)} (\|A\|_F+\|B\|_F) 
\text{ and }|L_{ik}|>\varepsilon \delta^{-1} \sqrt{2(n-1)}(\|A\|_F+\|B\|_F), \text{ for } j\neq k,\] 
then there exists $\ell,m\in [n]$, $\ell\neq m$ such that 

\[\|LG_1(A)L^* - G_1(B)\| \geq |(LG_1(A)L^*)_{\ell m}|>\varepsilon (\|A\|_F+\|B\|_F),\] 
a contradiction. Since the rows of $L$ have unit norm, 
this allows us to restrict attention to $L$ such that 
\[\|L-\Pi\|_F< 2n^{3/2}\varepsilon \delta^{-1} \sqrt{2(n-1)} (\|A\|_F+\|B\|_F)\] 
for a phased permutation
matrix $\Pi$. However, by the ordering of the entries of $G_d(A), G_d(B)$, and the eigenvalue
separation, we must have $\Pi$ diagonal. The same holds for $R,T$ by symmetry. Hence,
we can find $D_1,D_2,D_3$ diagonal phase matrices such that 
\[\|(D_1,D_2,D_3)\curvearrowright A- B\|_F< \varepsilon + 6n^{3/2} \varepsilon \delta^{-1}\sqrt{2(n-1)}\|B\|_F(\|A\|_F+\|B\|_F)<\gamma\varepsilon.
\qedhere
\] 
\end{proof}

We now prove \cref{thm:robustOrthoOrbit}.

\begin{lemma}\label{lem:approxOrbitCorrect}
\cref{alg:approximateOrbitDistance} is correct. 
\end{lemma}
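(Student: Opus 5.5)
We must show two things about \cref{alg:approximateOrbitDistance}: completeness (if $\dist_\C(A,B)<\varepsilon$ it outputs \textsf{YES}) and soundness (if it outputs \textsf{YES}, the returned witness satisfies $\|(\widetilde L,\widetilde R,\widetilde T)\curvearrowright A-B\|_F<\gamma\varepsilon$, so in particular \textsf{YES} is never returned when $\dist_\C(A,B)>\gamma\varepsilon$). The plan is to reduce everything to the core tensors $S_A,S_B$ and then to the diagonal case handled by \cref{lem:diagonalApproximation}. The truncation to $\ell=\lceil\log(1000n^7/\varepsilon)\rceil$ bits and the precision guarantee \eqref{eqn:precisionBound} will be absorbed as additive errors of order $\varepsilon/n^{4}$ at every step, using \cref{lem:perturbedOrbit} and the unitary invariance of $\|\cdot\|_F$.

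\emph{Completeness.} Assume $\dist_\C(A,B)<\varepsilon$. We check each step in turn.

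\ref{alg:magnitude} passes because $\|A\|_F$ is an orbit invariant, so $|\|A\|_F-\|B\|_F|<\varepsilon$; truncation adds at most $n^{3/2}2^{-\ell}$, keeping the difference below $2\varepsilon$.

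\ref{alg:stepJNF} passes by \cref{lem:traceBound} in contrapositive form. It gives that every gap of $G_d(B)$ is at least $\delta-2\varepsilon(\|A\|_F+\|B\|_F)>\delta/2$, using the hypothesis $\varepsilon<\delta/4(\|A\|_F+\|B\|_F)$. The numerical error from \eqref{eqn:precisionBound} does not destroy this margin.

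For \ref{alg:HOSVDcore}, replace $A,B$ by $S_A,S_B$. Their Gram matrices are the diagonal matrices $\Lambda_d^A,\Lambda_d^B$ with gap at least $\delta/2$, and $\dist_\C(S_A,S_B)<\varepsilon$. \cref{lem:diagonalApproximation} then yields diagonal phase matrices $D_1,D_2,D_3$ with $\|(D_1,D_2,D_3)\curvearrowright S_A-S_B\|_F<\gamma'\varepsilon$. I intend to redo that proof more carefully to get an entrywise bound $|e^{i(\alpha_i+\beta_j+\gamma_k)}S_A(i,j,k)-S_B(i,j,k)|\le \varepsilon n^2\delta^{-1}(\|A\|_F+\|B\|_F)=:\tau$, so as to match the thresholds the algorithm uses. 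Passing from $S_A$ to $\widetilde{S_A}$ costs another $O(\varepsilon/n^4)$ via \eqref{eqn:precisionBound}, so every entry satisfies $|e^{i\theta}\widetilde{S_A}-\widetilde{S_B}|\le 2\tau$. Two consequences follow. First, the moduli of corresponding entries differ by at most $2\tau$, so the modulus test passes. Second, by the law of cosines, $|\Phi_{ijk}-\theta|$ is bounded by exactly the $\arccos$ expression in \ref{alg:feasibilityCheck}. Hence the true phases $(\alpha,\beta,\gamma)$ are a feasible point.

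The feasibility check itself is a system of linear inequalities modulo $2\pi$. I would resolve the wraparound by guessing the integer shift for each constraint, using that constraints attached to large entries have small slack. I would then note that the system has at most $n^3$ constraints with polynomial bit size, so it is solvable in time $\poly(n,\log 1/\varepsilon)$.

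\emph{Soundness.} Suppose the algorithm outputs \textsf{YES} with phases $\alpha,\beta,\gamma$. Entries with $|\widetilde{S_A}|+|\widetilde{S_B}|\le 2\tau$ contribute at most $2\tau$ each to the error. Entries with $\Phi_{ijk}$ defined satisfy the phase inequality, and the law of cosines turns this back into $|e^{i\theta}\widetilde{S_A}-\widetilde{S_B}|\le 2\tau$. Summing over the $n^3$ entries gives $\|(D_1,D_2,D_3)\curvearrowright\widetilde{S_A}-\widetilde{S_B}\|_F\le 2n^{3/2}\tau$. Conjugating back by $\widetilde{V_d}$ and $\widetilde{U_d}^*$, which are unitary up to the precision error, and using $\|\widetilde A-A\|_F\le n^{3/2}2^{-\ell}$, the witness satisfies $\|(\widetilde L,\widetilde R,\widetilde T)\curvearrowright A-B\|_F\le O(n^{7/2}(\|A\|_F+\|B\|_F)^2\varepsilon/\delta)\le\gamma\varepsilon$. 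Here $\|B\|_F\approx\|A\|_F$ by \ref{alg:magnitude}.

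\emph{Expected obstacle.} The hardest step will be the entrywise version of \cref{lem:diagonalApproximation}. The proof there only shows that $L$ is close to a phased permutation and then bounds the total Frobenius error, which does not directly give per-entry phase closeness at threshold $\tau$. I would handle this by using the Frobenius bound and noting that the algorithm's thresholds can absorb an extra $\poly(n)$ factor, which is already hidden in $\gamma$. Careful bookkeeping of the constants along this route is the main risk.
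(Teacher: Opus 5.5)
Your architecture matches the paper's. The paper checks each \textsf{NO} exit by showing it forces a large orbit distance, which is the contrapositive of your completeness. It bounds the \textsf{YES} witness by summing per-entry errors over the $n^3$ entries, as in your soundness paragraph. The genuine gap is completeness at Steps~3 and~4, and your proposed patch does not close it.

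\emph{Why the patch fails.} The thresholds in those steps are hard-coded. The moduli test uses $2\tau$. By the law of cosines, the phase constraints correspond to a per-entry radius of $\sqrt{2}\,\tau$, because the $\arccos$ contains $2\tau^2$, not $4\tau^2$. So an entrywise bound of $2\tau$ does \emph{not} give feasibility, although your intended $\tau+O(\varepsilon/n^4)$ would. The parameter $\gamma$ does not enter these thresholds, so there is nowhere to absorb a $\poly(n)$ loss. Nor is the obstacle Frobenius versus entrywise, since a Frobenius bound already bounds every entry. The obstacle is size. \cref{lem:diagonalApproximation} only gives essentially $\varepsilon+6\sqrt{2}\,n^{2}\varepsilon\delta^{-1}\|B\|_F(\|A\|_F+\|B\|_F)$. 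That exceeds $\sqrt{2}\,\tau$ by a factor of about $6\|B\|_F$, roughly $6n^{3/2}$ for random $A$. Hence nothing in your argument rules out rejecting an instance with $\dist_\C(A,B)<\varepsilon$ at Step~3 or Step~4. The paper's proof makes the same leap when it cites \cref{lem:diagonalApproximation} for these two exits. You have located the real weak point, but it still has to be repaired.

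\emph{What is missing} is a first-order perturbation estimate to replace \cref{lem:diagonalApproximation}.
\begin{itemize}
\item Take $(L,R,T)$ with $\|(L,R,T)\curvearrowright S_A-S_B\|_F<\varepsilon$. Set $K=\|A\|_F+\|B\|_F$ and $E=L\Lambda_1^AL^*-\Lambda_1^B$, so $\|E\|_F\le\varepsilon K$.
\item The identity $(EL)_{ij}=L_{ij}(\lambda^A_j-\lambda^B_i)$, together with Weyl's bound $|\lambda^A_i-\lambda^B_i|\le\varepsilon K<\delta/4$, gives $\sum_{i\neq j}|L_{ij}|^2\le\bigl(4\varepsilon K/(3\delta)\bigr)^2$.
\item Hence $\|L-D_1\|_F=O(\varepsilon K/\delta)$, with $D_1$ the phases of the diagonal of $L$. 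The same holds for $R$ and $T$.
\item Telescoping then bounds every entry of $(D_1,D_2,D_3)\curvearrowright S_A-S_B$ by $\varepsilon+O(\varepsilon K\|A\|_F/\delta)$.
\end{itemize}
This is below $\sqrt{2}\,\tau=\sqrt{2}\,n^2\varepsilon K/\delta$ only under a normalization such as $\|A\|_F\le c\,n^2$. That holds with high probability in the random model, where $\|A\|_F\approx n^{3/2}$.

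The normalization cannot be dropped. The Step~3--4 thresholds are invariant under $(A,B,\varepsilon)\mapsto(sA,sB,s\varepsilon)$, which preserves every hypothesis, while the entries of $\widetilde{S_A}$ scale by $s$. So a fixed instance with $\dist_\C(A,B)<\varepsilon$ whose core moduli differ somewhere is, after scaling by a large $s$, rejected at Step~3. Symmetrically, your final soundness step $2n^{7/2}\varepsilon K/\delta\le\gamma\varepsilon$ silently needs $K=O(\|A\|_F^2)$, i.e.\ $\|A\|_F\gtrsim 1$. You need to state this normalization explicitly and use it in both directions.
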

\begin{proof}
Let $K\coloneqq\|A\|_F+\|B\|_F$
and suppose that $G_d(A)$ has simple spectrum and $\varepsilon<\delta/4K$. We need to prove the algorithm outputs \textsf{YES} (that is, $\dist_\C(A,B)<\varepsilon$) and 
$(\widetilde{L},\widetilde{R},\widetilde{T})$ if and only if $(\widetilde{L},\widetilde{R},\widetilde{T}) \in \Unitary(n)^3$ with $\| (\widetilde{L},\widetilde{R}, \widetilde{T}) \curvearrowright \widetilde{A}-\widetilde{B}\|_F<\gamma\varepsilon$.
We proceed step by step. 

If the algorithm terminates at \ref{alg:magnitude}, then clearly $\dist_{\CC}(A,B)>\varepsilon$, since $\dist_\C(A,B)\geq\left|\|A\|_F-\|B\|_F\right|>\left|\|\widetilde{A}\|_F-\|\widetilde{B}\|_F\right|-\varepsilon/2$ and the algorithm outputs the correct answer.

If the algorithm terminates at \ref{alg:stepJNF}, we have outputted \textsf{NO} again (i.e.\
$\dist_{\CC}(A,B)>\gamma\varepsilon$). This output is correct because $\dist_{\CC}(A,B)>\delta/4K>\varepsilon$ by Lemma \ref{lem:traceBound} and the precision bounds in \cref{eqn:precisionBound}.

If the algorithm terminates at \ref{alg:HOSVDcore}, then by \cref{lem:diagonalApproximation}, $\dist_\C(A,B) > \gamma\varepsilon$.

If at \ref{alg:feasibilityCheck}, the linear system is \emph{infeasible},
then for any diagonal phase matrices $D_1,D_2,D_3$, 
it must be the case that for some $(i,j,k)$, 
\[|\widetilde{S_A}(i,j,k)|+|\widetilde{S_B}(i,j,k)|> 2\varepsilon n^2\delta^{-1}K\]\[\text{ and }\] 
\[\arg\left(\frac{\widetilde{S_B}(i,j,k)}{(D_1,D_2,D_3)\curvearrowright\widetilde{S_A}(i,j,k)} \right)>
\arccos\left(\frac{|\widetilde{S_A}(i,j,k)|^2+|\widetilde{S_B}(i,j,k)|^2-2\varepsilon^2 n^4\delta^{-2}K^2}{2\;|\widetilde{S_A}(i,j,k)|\;|\widetilde{S_B}(i,j,k)|}\right),\] 
so, by taking the cosine of each side and rearranging terms,
\[
\|(D_1,D_2,D_3)\curvearrowright\widetilde{S_A}-\widetilde{S_B}\|_F > 2\varepsilon n^2\delta^{-1}K
\]

Hence, by Lemma~\ref{lem:diagonalApproximation}, $\dist_{\CC}(A,B)>\varepsilon$. 
Otherwise, the linear system in \ref{alg:feasibilityCheck} is \emph{feasible},
and for the diagonal sign matrices $D_1,D_2,D_3$ constructed we have 
\[\|(D_1,D_2,D_3)\curvearrowright\widetilde{S_A}-\widetilde{S_B}\|_F<2\varepsilon n^{5/2}\delta^{-1}K,\] since
for each $(i,j,k)\in [n]^3$ we have 
$|((D_1,D_2,D_3)\curvearrowright\widetilde{S_A})(i,j,k)-\widetilde{S_B}(i,j,k)|<2\varepsilon n^2\delta^{-1}K$.
Adding back the errors corresponding to the finite precision calculations, 
we preserve $\dist_\C(S_A,S_B)<\gamma\varepsilon$ and so $\dist_\C(A,B)<\gamma\varepsilon$.
Finally, it is easily verified that 
\[\|(\widetilde{L},\widetilde{R},\widetilde{T})\curvearrowright \widetilde{A}-\widetilde{B}\|_F=\|(D_1,D_2,D_3)\curvearrowright\widetilde{S_A}-\widetilde{S_B}\|_F\]
by construction since the Frobenius norm is invariant under orthogonal actions. Hence, $\widetilde{L},\widetilde{R},\widetilde{T}$ witness
the fact that $\dist_\C(A,B)<\gamma\varepsilon$.
\end{proof}

\begin{lemma}\label{lem:approxOrbitEfficient}
\cref{alg:approximateOrbitDistance} runs in time $O(\poly(n,\log(1/\varepsilon)))$.
\end{lemma}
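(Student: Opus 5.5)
The plan is to bound the cost of each step of \cref{alg:approximateOrbitDistance} separately and then sum. All arithmetic is carried out on truncations at precision $\ell = \lceil\log(1000n^{7}/\varepsilon)\rceil = O(\log n + \log(1/\varepsilon))$, so every number we manipulate has polynomially bounded bit-length. Throughout we assume the input magnitudes $\|A\|_F,\|B\|_F$ are at most $2^{\poly(\ell)}$, as is implicit in the input encoding.

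The steps are handled in order.

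\begin{enumerate}
\item \ref{alg:magnitude} computes two Frobenius norms. This takes $O(n^3)$ arithmetic operations on $O(\ell)$-bit numbers, plus one comparison.

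\item \ref{alg:stepJNF} forms the six Gram matrices using $O(n^4)$ operations. It then calls Algorithm 1 of \cite{dey2023bit} to reach the forward-error accuracy \cref{eqn:precisionBound}, i.e.\ accuracy $\varepsilon/(1000 n^7\|A\|_F)$. That algorithm runs in time polynomial in $n$ and $\log(1/\text{accuracy})$, and it needs a lower bound on the eigenvalue gap, which is $\delta$ here. So this step costs $\poly(n,\log(1/\varepsilon),\log(1/\delta),\log\|A\|_F)$. The hypothesis $\varepsilon<\delta/4(\|A\|_F+\|B\|_F)$ gives $\log(1/\delta)\le \log(1/\varepsilon)+O(\log\|A\|_F)$, so the bound is $\poly(n,\log(1/\varepsilon))$. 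Sorting the eigenvalues and checking the gaps costs $O(n\log n)$.

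\item \ref{alg:HOSVDcore} computes $\widetilde{S_A}$ and $\widetilde{S_B}$ by three successive mode products, using $O(n^4)$ operations. It then does $n^3$ modulus comparisons and computes at most $n^3$ arguments $\Phi_{ijk}$ to precision $O(\ell)$ by standard series or CORDIC evaluation. Each arccos bound in the next step is computed the same way.
\end{enumerate}

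The main obstacle is \ref{alg:feasibilityCheck}. As written, the constraints are modular in $2\pi$ and so are not literally linear. My plan is to linearize them with integer offsets. Fix the gauge $\alpha_1=\beta_1=0$, which is possible because the torus action has a two-dimensional stabilizer of phases. For each active $(i,j,k)$ we then need $\Phi_{ijk}-\alpha_i-\beta_j-\gamma_k \in (2\pi m_{ijk}-\theta_{ijk},\, 2\pi m_{ijk}+\theta_{ijk})$ for some integer $m_{ijk}\in\{-1,0,1,2\}$, since all angles lie in $[0,2\pi)$.

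To avoid enumerating the offsets, I would exploit the precision regime. When the deviation bound is small relative to the moduli, each $\theta_{ijk}$ is small, and the problem becomes a system of difference-type constraints on the phases. Propagating from the gauge-fixed coordinates along the active entries determines $\alpha,\beta,\gamma$ up to small intervals. The system then reduces to a polynomial-size LP, or equivalently a shortest-path feasibility problem on the $3n$ unknowns. It can be solved in polynomial time (ellipsoid or interior point) on $O(\ell)$-bit data, or by Bellman--Ford in $O(n^3\cdot n^3)$ time for the difference-constraint form.

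If this reduction to a fixed branch proves delicate, the fallback is to treat the constraints as a two-sided system on the circle. In that case I would argue via the simple-spectrum structure that only polynomially many consistent branch choices need to be tested.

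Constructing $D_1,D_2,D_3$ and the witnesses then costs $O(n^3)$ operations. Summing over all steps gives total time $\poly(n,\log(1/\varepsilon))$.
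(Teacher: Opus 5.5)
Your step-by-step accounting matches the paper's proof. The paper also bounds each step separately, cites \cite{dey2023bit} for Step 2 and an interior-point LP for Step 4, and treats the wrap-around modulo $2\pi$ as the only real issue. It disposes of that issue in one sentence, asserting that the constants force a unique cyclic shift in each inequality. You try to justify that sentence, and the justification fails, so the one nontrivial step is still unproved.

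The constraints $\alpha_i+\beta_j+\gamma_k\in\Phi_{ijk}+(-\theta_{ijk},\theta_{ijk})+2\pi\ZZ$ have three unknowns each. They are therefore not difference constraints, and Bellman--Ford does not apply (a general LP is fine once the offsets are fixed).

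More seriously, $\theta_{ijk}$ is not small in general. Write $a=|\widetilde{S_A}(i,j,k)|$, $b=|\widetilde{S_B}(i,j,k)|$ and $\tau=\varepsilon n^2\delta^{-1}(\|\widetilde A\|_F+\|\widetilde B\|_F)$. An entry is active as soon as $a+b>2\tau$. For $a=b$ slightly above $\tau$, the arccos argument $(a^2+b^2-2\tau^2)/(2ab)$ is near $0$, so $\theta_{ijk}$ is near $\pi/2$. The only general guarantee is $\theta_{ijk}<\pi/2$, which follows from $a^2+b^2\geq (a+b)^2/2>2\tau^2$. Two propagation hops through such constraints already produce arcs of length about $2\pi$, so the offsets $m_{ijk}$ are not pinned down.

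Propagation from your gauge can also stall outright. It only advances through active entries sharing two already-determined indices, and the active pattern need not contain such chains.

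Your fallback does not help. Simplicity of the spectrum was already used to reduce to the torus and carries no information about the offsets.

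This is a genuine gap, not a formality. With the offsets free, Step 4 is a system of windowed constraints on the circle of periodic-event-scheduling type, a class that is NP-hard in general. So polynomial time cannot come from linear programming unless the integer lift is fixed first by an instance-specific argument. One possible route:
\begin{itemize}
\item Use only tight entries with $a\geq C\tau$. Since $|a-b|\le 2\tau$, these have $\theta_{ijk}=O(1/C)$.
\item Show that the tight entries link all phase variables through short chains, so each phase is localized to a short arc.
\item Then every remaining constraint, having $\theta_{ijk}<\pi/2$, admits a unique offset, and the LP applies.
\end{itemize}
For random $A$ this needs an anti-concentration statement about the entries of the core $S_A$, which neither your proposal nor the paper provides.

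A smaller point in Step 2: the eigenvalue gap of $G_d(\widetilde B)$ is not known to be $\delta$ in advance. You should first compute its eigenvalues (they are well conditioned, since the matrix is Hermitian) and test the $\delta/2$ gap. Only then should you request eigenvectors to the accuracy in \cref{eqn:precisionBound}.
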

\begin{proof}
One need only observe
that each step runs in time $O(\poly(n,\log(1/\varepsilon)))$. Indeed, 
\ref{alg:stepJNF} runs in $O(\poly(n,\log(1/\varepsilon)))$ by~\cite{dey2023bit}, 
and \ref{alg:feasibilityCheck} runs in time $O(\poly(n,\log(1/\varepsilon)))$,
by solving the linear system using the interior point method (c.f. e.g. \cite{karmarkar1984}). Note that although a direct application of linear programming cannot address the cyclic mod $2\pi$, the prior conditions on the constants in the problem guarantee a unique possible cyclic shift in each linear inequality which can be easily computed. 
The remaining parts of the algorithm run in  $O(\poly(n,\log(1/\varepsilon)))$ trivially.
\end{proof}

\begin{proof}[Proof of \cref{thm:robustOrthoOrbit}]
The result follows immediately from \cref{lem:approxOrbitCorrect} and \cref{lem:approxOrbitEfficient}.
\end{proof}

\begin{proof}[Proof of \cref{thm:robustOrbit}]
By \cref{thm-repulsion-unscaled}, the tensor $A$ satisfies the conditions in the hypothesis of \cref{thm:robustOrthoOrbit} with
high probability and \ref{alg:magnitude} solves the potential issue of $\|B\|_F$ too large.
\end{proof}

\newpage

\section{Eigenvalue repulsion for tall Wishart matrices}\label{sec-repulsion}
As stated in Theorem~\ref{thm:exactUnitIso}, the random model considered in this paper is (after flattening) a random $n \times p$ matrix with i.i.d.\ sub-Gaussian entries with $p = o(n)$. We call such dependence of $p$ on $n$ the ``tall'' regime. In this section we prove a general quantitative eigenvalue repulsion result for such matrices, extending a recent result in \cite{christoffersen2025gaps} which addresses the $p = \Theta(n)$ case.

Recall that our approach, as with \cite{nguyen2015randommatricestailbounds,christoffersen2025gaps}, reduces to proving that the inner product between a left singular vector $v$ and an independent i.i.d.\ sub-Gaussian vector $X$ is typically large. Such a result is called a ``small-ball probability''. There are essentially two properties of $v$ which can prevent this: \begin{enumerate}
    \item $v$ is sparse, and so effectively kills many coefficients from $X$ outright.
    \item $v$ has additive structure, which interacts with the symmetry and independence of the entries of $X$ to produce cancellations with non-negligible probability.
\end{enumerate}

We follow the establish technique in inverse Littlewood--Offord theory of ruling out these two obstructions with high probability in \cref{sec-incompressible,sec-additivity}. Before that, we discuss in \cref{sec-clos-proof-structure} the paper \cite{christoffersen2025gaps}, which is the foundation of our proof strategy, and then in \cref{subsec-prob-prelims} cover some probability preliminaries.

\subsection{A comparison with \cite{christoffersen2025gaps} on the technical aspects}\label{sec-clos-proof-structure}
In this section, we briefly discuss the proof of the repulsion results in \cite{christoffersen2025gaps}, since our strategy will be to follow that proof template, adapting where necessary to the tall case. \cite{christoffersen2025gaps} themselves adopt the typical inverse Littlewood--Offord theory techniques established in earlier work, so first we give an overview of what the novel techniques of \cite{christoffersen2025gaps} are, and then discuss the novelty of our own proof relative to \cite{christoffersen2025gaps}. We emphasize where our proof follows \cite{christoffersen2025gaps} faithfully or diverges.

\paragraph{Remarks on technical contributions of \cite{christoffersen2025gaps}.}
The core difficulty encountered in \cite{christoffersen2025gaps} is the complex dependence structure of the entries of the Wishart matrix. In the setting of Wigner matrices, entries of the matrix are either independent or determine one another, and due to this simplicity it suffices to control the eigenvalue equation $Ax = \lambda x$ directly. This is not the case for Wishart matrices, and the more complicated structure is handled in \cite{christoffersen2025gaps} by observing that the eigenvalue equation $M^T M x = \lambda x$ can be decomposed into the pair of singular value equations $Mx = \sqrt{\lambda}y$ and $M^T y = \sqrt{\lambda} x$. This converts a question about a matrix with complicated entry dependence into a pair of questions about a matrix with simple dependence structure. Having made this decomposition, \cite{christoffersen2025gaps} then apply established inverse Littlewood--Offord techniques to both the right and left singular vectors of $M$, and conclude by arguing that desired properties proven for each of the singular value equations in isolation hold simultaneously. 

To summarize, the major innovation of \cite{christoffersen2025gaps} is to provide techniques to study spectral events for Wishart matrices into a form amenable to established inverse Littlewood--Offord theory through the singular value equations.

\paragraph{Remarks on our technical contributions.} 
Following \cite{christoffersen2025gaps}, we decompose eigenvalue events for Wishart matrices into the intersection of events for the right and left singular vectors. The core difficulty of our setting, which is the tall regime, is the substantial loss of symmetry between the settings of the right and left singular vectors which is enjoyed in \cite{christoffersen2025gaps}. As a result, although we seek similar conclusions regarding the right and left singular vectors of Wishart matrices, a number of the decompositions and arguments used in \cite{christoffersen2025gaps} fail in our setting. However, what we lose in symmetry in the tall regime, we gain in comparatively strong concentration properties of the singular values of $M$, which is now a tall rectangular matrix. This also allows us to address the complex case (see \cref{subsec-repulsion-complex}), which is not covered in \cite{christoffersen2025gaps} \footnote{The complex case is covered by the work of \cite{han2025simplicity}. We comment on the omission of the complex case in \cite{christoffersen2025gaps} in \cref{subsec-repulsion-complex}.}. We comment on specific adaptations throughout this section (see \cref{remark-net-technique,remark-one-sided,remark-right-sing-vecs}).

Our primary contribution to the repulsion literature is to adapt or replace where necessary the arguments of \cite{christoffersen2025gaps} to account for this new asymmetry. For \cref{lem-incompressible-fixed-val-right,lem-incompressible-fixed-val-left} we adopt a different proof strategy, but suspect that the approach of \cite{christoffersen2025gaps} would work. See \cref{remark-net-technique} for the rationale behind this choice. In \cref{lem-lcd-fixed-val} we adopt a similar strategy to \cite{christoffersen2025gaps}, but the asymmetry of our regime becomes especially problematic, and substantial changes are needed. \cref{prop-incompressible-varying-val,prop-lcd-varying-val,thm-obstructions,theorem-repulsion-singular-vals-squared-real} use the main lemmas to rule out the repulsion obstructions and ultimately conclude the repulsion; their arguments follow their analogues in \cite{christoffersen2025gaps}, but more generally typical of inverse LO results. See e.g.\ \cite{nguyen2015randommatricestailbounds} for a representative predecessor to \cite{christoffersen2025gaps}.

\subsection{Preliminaries on concentration of measure}\label{subsec-prob-prelims}
In what follows, we will be managing a large number of constants, some universal and some depending on the law of $\xi$. We will reuse the symbols $c, \hat{c}, C, K$ to denote quantities which may vary from result to result, but we will indicate where constants do and do not depend on $\xi$, and when the values of constants are changing.

In this paper, we typically consider complex random matrices $M \in \C^{n \times p}$ with entries $\zeta_{ij} = \xi_{ij} + i\xi_{ij}'$ where $\xi_{ij}$ and $\xi_{ij}'$ are all i.i.d.\ real valued random variables. Most relevant results cut over from the real to the complex case straightforwardly, and as so we present primarily the real results, and then discuss elements of the required complex generalizations which are not obvious in Section~\ref{subsec-repulsion-complex}. 

In \cite{christoffersen2025gaps} the authors address Wishart matrices with a significantly weaker concentration assumption on coordinate distribution, namely a finite fourth moment. They do this by intersecting an event of interest with an auxiliary structural event such as $\{\| M \| \geq C \sqrt{n} \}$ to enforce those properties which hold generically with high probability for sub-Gaussian matrices. In this paper we choose to examine the cleaner sub-Gaussian case (see \cref{def-subgaussian}), and therefore present results needed from \cite{christoffersen2025gaps} in only their sub-Gaussian forms. We encourage the interested reader to consult the corresponding results in that paper for their most general form. We expect that our repulsion result in the tall regime generalizes to the fourth moment assumption by choosing the appropriate auxiliary events.

Additionally, we work with mean zero, unit variance random variables $\xi$ and generalize by scaling the main results when needed. Now, recall the following:

\begin{definition}[Sub-Gaussian random variable and vector]\label{def-subgaussian}
    A real valued random variable $\xi$ is said to be sub-Gaussian if it satisfies the tail concentration upper bound
    \begin{equation*}
        \Prob[|\xi| > t] \leq 2\exp(-t^2/K^2)
    \end{equation*}
    for all $t > 0$ and some $K > 0$. We call the minimal $K$ such that the concentration inequality holds the sub-Gaussian constant of $\xi$.

    A random vector $X \in \R^n$ is said to be sub-Gaussian if it possesses sub-Gaussian marginals. Additionally, it is called mean-zero if $\E[X] = 0$, and isotropic if $\E[XX^\top] = I$.
\end{definition}

In what follows, implied constants will typically depend on $K$, and often only on $K$. For an $n \times p$ matrix $M$, denote by $s_k(M)$ for $1 \leq k \leq p$ the $k$'th largest singular value. Additionally, let $\MinSingVal(M) \coloneqq s_p(M)$ and $\MaxSingVal(M) \coloneqq s_1(M)$.

The singular values of matrices populated by i.i.d.\ sub-Gaussian random variables are well-concentrated, as the next result shows.

\begin{proposition}[Singular values of sub-Gaussian matrix ({\cite[Theorem 4.6.1]{vershynin2018high}})]\label{prop-sing-val-concentration}
    For $n \geq p$, let $M \in \R^{n \times p}$ be a random matrix with i.i.d., mean-zero, isotropic sub-Gaussian rows. Then for any $u > 0$ the containment
    \begin{equation*}
        [\MinSingVal(M),\MaxSingVal(M)] \subseteq [\sqrt{n} - CK^2(\sqrt{p} + t), \sqrt{n} + CK^2(\sqrt{p} + t)]
    \end{equation*}
    holds with probability at least $1 - 2\exp(t^2)$, and where $C$ is a universal constant. 
\end{proposition}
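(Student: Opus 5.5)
The statement is the standard two-sided bound on the singular values of a tall sub-Gaussian matrix. The approach is the usual $\varepsilon$-net argument; note that the stated probability should read $1-2\exp(-t^2)$. The plan is to reduce the claim to showing that
\[
\Bigl\|\tfrac1n M^\top M - I_p\Bigr\| \le \max(\delta,\delta^2), \qquad \delta \coloneqq CK^2\frac{\sqrt p + t}{\sqrt n}.
\]
This suffices by an elementary lemma: if $\|\tfrac1n M^\top M - I\|\le\max(\delta,\delta^2)$, then every $x$ on the unit sphere $S^{p-1}$ satisfies $\bigl|\|Mx\|_2/\sqrt n-1\bigr|\le \delta$. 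This follows from the scalar inequality $|z-1|\le\delta \Leftarrow |z^2-1|\le\max(\delta,\delta^2)$ for $z\ge0$. Since $\MinSingVal(M)=\min_{x\in S^{p-1}}\|Mx\|_2$ and $\MaxSingVal(M)=\max_{x\in S^{p-1}}\|Mx\|_2$, this gives exactly the containment claimed.

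To bound the operator norm, first discretize. Fix a $\tfrac14$-net $\mathcal N$ of $S^{p-1}$ with $|\mathcal N|\le 9^p$, which exists by a volumetric argument. For a symmetric matrix $Q$, one has $\|Q\|\le 2\max_{x\in\mathcal N}|\langle Qx,x\rangle|$. It therefore suffices to show that, for each fixed $x\in\mathcal N$,
\[
\Bigl|\tfrac1n\|Mx\|_2^2-1\Bigr|\le \tfrac12\max(\delta,\delta^2)
\]
with failure probability at most $2\cdot 9^{-p}e^{-t^2}$. A union bound over $\mathcal N$ then finishes.

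For fixed $x$, write $\|Mx\|_2^2=\sum_{i=1}^n\langle X_i,x\rangle^2$, where the $X_i$ are the rows of $M$. The variables $\langle X_i,x\rangle$ are independent. By isotropy, $\E\langle X_i,x\rangle^2=1$, and $\langle X_i,x\rangle$ is sub-Gaussian with constant $\lesssim K$. Hence each $\langle X_i,x\rangle^2-1$ is a mean-zero sub-exponential variable with $\psi_1$-norm $\lesssim K^2$. Bernstein's inequality then gives, for $\epsilon=\tfrac12\max(\delta,\delta^2)$,
\[
\Prob\Bigl[\Bigl|\tfrac1n\textstyle\sum_i(\langle X_i,x\rangle^2-1)\Bigr|\ge\epsilon\Bigr]\le 2\exp\bigl(-c\min(\epsilon^2/K^4,\epsilon/K^2)\,n\bigr).
\]
With the definition of $\delta$, the choice $\epsilon=\tfrac12\max(\delta,\delta^2)$ makes $\min(\epsilon^2/K^4,\epsilon/K^2)\gtrsim\delta^2/K^4$. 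This yields a bound of $2\exp(-cC^2(\sqrt p+t)^2)\le 2\exp(-cC^2(p+t^2))$. Taking $C$ large (and using $K\gtrsim1$ from unit variance) gives at most $2\cdot 9^{-p}e^{-t^2}$.

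The main obstacle is this bookkeeping step: checking that the $\min$ in Bernstein's inequality is dominated by the $\delta^2$ branch in both regimes $\delta\le1$ and $\delta>1$, which is the reason for using $\max(\delta,\delta^2)$. One must also track that the constant absorbs the $\log 9$ entropy term. Everything else is standard. Since the result is cited from \cite[Theorem 4.6.1]{vershynin2018high}, one could alternatively simply invoke it.
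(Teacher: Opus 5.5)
Your proposal is correct and is essentially the standard proof of \cite[Theorem 4.6.1]{vershynin2018high}: reduce to $\|\tfrac1n M^\top M - I_p\|\le\max(\delta,\delta^2)$, take a $\tfrac14$-net of size $9^p$, apply Bernstein to the sub-exponential variables $\langle X_i,x\rangle^2-1$, and finish with a union bound; the paper does not reprove this and simply cites that theorem. You are also right that the probability should read $1-2\exp(-t^2)$, and the quantifier should be over $t\ge 0$ rather than $u>0$.
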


As mentioned at the beginning of Section \ref{sec-repulsion}, the asymmetry of the left and right singular vectors introduced by the tall regime is the major obstacle in this paper. This is partially offset by Proposition~\ref{prop-sing-val-concentration}, which shows that singular values are concentrated in a band of order $\Theta(\sqrt{p})$ around $\sqrt{n}$ with overwhelming probability. Concretely, when $p = o(n)$ we may choose $t = \sqrt{p}$ in which case by Proposition~\ref{prop-sing-val-concentration} we have
\begin{equation}\label{eq-sing-val-range}
       [\MinSingVal(M),\MaxSingVal(M)] \subseteq [\sqrt{n} - 2CK^2\sqrt{p}, \sqrt{n} + 2CK^2\sqrt{p}]
\end{equation}
with probability at least $1 - 2\exp(-p)$. This will allow us to restrict to a smaller pool of candidate singular values.

\subsection{Singular vectors are not sparse}\label{sec-incompressible}
We now define and rule out the first obstruction to small-ball probabilities of $|v^T X|$ described at the beginning of this section. We require a few definitions.

\begin{definition}[L\'evy concentration function]\label{def-levy-concentration-func}
    The L\'evy concentration function is defined as 
    \begin{equation*}
        \LevyFunc(X, \varepsilon) \coloneqq \sup_{a \in \R^n}\mathbb{P}[\|X - a\| \leq \varepsilon]
    \end{equation*}
    where $X \in \R^n$ is a random vector, $\varepsilon$ is a proximity parameter in $(0, 1)$ and $\| \cdot \|$ is the Euclidean norm.
\end{definition}

The function $\LevyFunc$ upper bounds the concentration of a random vector while obscuring the particular point of concentration. $\LevyFunc$ is finite under mild assumptions on $X$. The following pair of results are standard, but we follow the presentation in \cite{christoffersen2025gaps}.

\begin{lemma}[{\cite[Lemma 2.1]{christoffersen2025gaps}}]\label{lem-levy-concentration-finiteness}
    Let $X$ be a mean zero, variance one sub-Gaussian random variable. Then, for all $\varepsilon \in (0, 1)$ there exists a probability $p = p(K, \varepsilon) < 1$ such that
    \begin{equation*}
        \LevyFunc(X, \varepsilon) \leq p
    \end{equation*}
\end{lemma}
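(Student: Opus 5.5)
The plan is to prove the bound $\LevyFunc(X,\varepsilon)\le p(K,\varepsilon)<1$ by contradiction with the Paley--Zygmund inequality. The idea is that a mean zero, unit variance variable cannot put almost all of its mass inside a tiny interval. Write $\sup_a \Prob[|X-a|\le \varepsilon]$ for the L\'evy function of the scalar $X$.

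\emph{Step 1 (a window can only catch much mass near zero).} Fix $a\in\R$. Since $\E X^2=1$, we have $\E(X-a)^2 = 1+a^2 \ge 1$. The sub-Gaussian tail gives $\E X^4 \le C_0K^4$, and hence
\[
\E(X-a)^4 \le 8(\E X^4 + a^4) \le 8(C_0K^4+a^4).
\]
Consider two cases.
\begin{itemize}
\item If $|a|\ge 2$, then Chebyshev gives $\Prob[|X-a|\le \varepsilon]\le \Prob[|X|\ge |a|-1]\le 1/(|a|-1)^2$. For $|a|$ at least some constant this is at most $1/2$.
\item For the remaining $a$, with $|a|\le A_0$ say $A_0=3$, we use Step 2.
\end{itemize}

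\emph{Step 2 (Paley--Zygmund on bounded centers).} Apply Paley--Zygmund to the nonnegative variable $Z=(X-a)^2$ at level $\theta=1/2$:
\[
\Prob[(X-a)^2 > \tfrac12\E Z] \ge \tfrac14\frac{(\E Z)^2}{\E Z^2}\ge \frac{1}{32(C_0K^4+A_0^4)} =: q.
\]
Because $\E Z\ge 1$, this yields $\Prob[|X-a|>1/\sqrt2]\ge q$. For $\varepsilon<1/\sqrt2$ it follows that $\Prob[|X-a|\le\varepsilon]\le 1-q$.

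\emph{Step 3 (larger $\varepsilon$ and the supremum over $a$).} For $\varepsilon\in[1/\sqrt2,1)$ the variable has to escape a window of width $2\varepsilon<2$. We redo Paley--Zygmund at level $\theta$ with $\theta\E Z>\varepsilon^2$. This needs $\E Z=1+a^2>\varepsilon^2/\theta$, which fails when $a=0$ and $\theta<1$.

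This is where I expect the main obstacle: $\varepsilon$ close to $1$. Examples such as Rademacher variables put mass exactly at distance $1$ from $0$, so strict inequality matters. Still, $\Prob[|X|\le \varepsilon]<1$ for every $\varepsilon<1$, since otherwise $\E X^2\le\varepsilon^2<1$. To get uniformity we need an estimate with a slack that depends only on $K$ and $\varepsilon$. I would write
\[
1=\E X^2 \le \varepsilon'^2\,\Prob[|X|\le \varepsilon'] + \E\bigl[X^2\mathbf 1_{|X|>\varepsilon'}\bigr],
\]
then bound the last term by Cauchy--Schwarz as $\sqrt{\E X^4}\,\Prob[|X|>\varepsilon']^{1/2}$. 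This gives
\[
\Prob[|X|>\varepsilon']\ge (1-\varepsilon'^2)^2/(C_0K^4).
\]
For shifted windows I combine this with continuity in $a$ on the compact set $|a|\le A_0$. More cleanly, I use $|X-a|\le\varepsilon \Rightarrow |X|\le |a|+\varepsilon$ together with the same second-moment argument applied to $X-a$, whose second moment is $1+a^2$. That gives $\Prob[|X-a|>\varepsilon]\ge (1+a^2-\varepsilon^2)^2/\E(X-a)^4$, which is bounded below uniformly for $|a|\le A_0$. Taking $p$ as the maximum of the bounds from the cases completes the plan.
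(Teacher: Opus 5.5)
Your argument is correct and is essentially the route the paper relies on. The paper gives no proof of its own: it imports \cite[Lemma 2.1]{christoffersen2025gaps}, a fourth-moment statement, and observes that the sub-Gaussian bound $\E[|\xi|^q]^{1/q}\le CK\sqrt{q}$ lets constants depending on $\E|\xi|^4$ be taken to depend on $K$ instead; this is exactly your reduction $\E X^4\le C_0K^4$, and your second-moment argument then supplies the fourth-moment lemma itself.

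One correction: the obstacle you flag in Step 3 is not real. Paley--Zygmund only needs $\theta\,\E Z\ge\varepsilon^2$, so you can take $\theta=\varepsilon^2/(1+a^2)<1$, even at $a=0$. It then yields $\Prob[|X-a|>\varepsilon]\ge(1+a^2-\varepsilon^2)^2/\E(X-a)^4$ for every $\varepsilon\in(0,1)$, which is precisely your final Cauchy--Schwarz bound, so Step 2 and the continuity-in-$a$ remark can be dropped.
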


Of course, a sub-Gaussian random variable $\xi$ possesses the moment bounds $\E[|\xi|^q]^{1/q} \leq CK\sqrt{q}$ for all $q \geq 1$ for a universal constant $C$ (see Proposition 2.5.2 of \cite{vershynin2018high}) and so constants in \cite{christoffersen2025gaps} which are declared to depend on $\E[|\xi|^4]$ can be taken to depend upon $K$ instead.

The following lemma, known as the \textit{Tensorization lemma}, allows one to deduce finiteness of $\LevyFunc$ for a random vector $X$ with independent coordinates from finiteness of $\LevyFunc$ for the coordinates themselves. Originating as \cite[Lemma 3.4]{vershynin2014invertibility}, we use the version \cite[Lemma 2.2]{christoffersen2025gaps}.

\begin{lemma}[Tensorization ({\cite[Lemma 2.2]{christoffersen2025gaps}})]\label{lem-tensorization}
    Let $X$ be a random variable taking values in $\R^n$ with independent coordinates. If there exists an $\varepsilon > 0$ and a probability $p \in (0, 1)$ such that
    \begin{equation}
        \LevyFunc(X_i, \varepsilon) \leq p
    \end{equation}
    for all $i \in [n]$ then there exists an $\varepsilon' = \varepsilon'(\varepsilon, p) > 0$ and a probability $p' = p'(\varepsilon, p) \in (0, 1)$ such that 
    \begin{equation*}
        \LevyFunc(X, \varepsilon'
        \sqrt{n}) \leq (p')^n
    \end{equation*}
\end{lemma}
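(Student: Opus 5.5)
The plan is the standard product-of-exponentials argument through the Fourier-free ``counting bad coordinates'' route, as in \cite[Lemma 3.4]{vershynin2014invertibility}. Fix $a \in \R^n$ and write $X - a = (X_i - a_i)_{i\in[n]}$. The event $\{\|X - a\| \le \varepsilon'\sqrt{n}\}$ forces $\sum_i |X_i - a_i|^2 \le \varepsilon'^2 n$. By Markov's inequality applied to the counting measure, at most $n\varepsilon'^2/\varepsilon^2$ coordinates can have $|X_i - a_i| > \varepsilon$. So on this event at least $(1 - \varepsilon'^2/\varepsilon^2)n$ coordinates satisfy $|X_i - a_i| \le \varepsilon$.

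The cleanest way to turn this into an exponential bound is the exponential-moment trick. For $t > 0$,
\[
\Prob\Bigl[\sum_i |X_i - a_i|^2 \le \varepsilon'^2 n\Bigr] \le e^{t\varepsilon'^2 n/\varepsilon^2}\,\E\exp\Bigl(-t\sum_i |X_i-a_i|^2/\varepsilon^2\Bigr) = e^{t\varepsilon'^2 n/\varepsilon^2}\prod_{i=1}^n \E e^{-t|X_i-a_i|^2/\varepsilon^2},
\]
where independence of the coordinates gives the product.

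Each factor is bounded by splitting on whether $|X_i - a_i| \le \varepsilon$:
\[
\E e^{-t|X_i-a_i|^2/\varepsilon^2} \le \Prob[|X_i-a_i|\le\varepsilon] + e^{-t}\Prob[|X_i - a_i| > \varepsilon] \le p + e^{-t}(1-p) \le p + e^{-t},
\]
using the hypothesis $\LevyFunc(X_i,\varepsilon)\le p$.

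It remains to choose constants. First pick $t = t(p)$ large so that $p + e^{-t} \le (1+p)/2 < 1$. Then pick $\varepsilon'$ small so that $e^{t\varepsilon'^2/\varepsilon^2} \le \frac{2}{1+p}\cdot\frac{1+3p}{4}\cdot\frac{1}{\,}$, or more simply so that $e^{t\varepsilon'^2/\varepsilon^2}(1+p)/2 =: p' < 1$. This gives $\Prob[\|X-a\|\le\varepsilon'\sqrt n]\le (p')^n$ uniformly in $a$. Taking the supremum over $a$ yields $\LevyFunc(X,\varepsilon'\sqrt n)\le (p')^n$, with $\varepsilon',p'$ depending only on $\varepsilon,p$.

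There is no real obstacle here. The only care needed is the order of the constant choices, with $t$ fixed before $\varepsilon'$, so that the dependence is only on $(\varepsilon,p)$ and not on $n$.
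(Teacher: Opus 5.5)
Your proof is correct. The paper gives no proof of its own here; it only cites \cite[Lemma 2.2]{christoffersen2025gaps}, which goes back to \cite[Lemma 3.4]{vershynin2014invertibility}, and your argument is essentially the standard exponential-moment proof from those sources: the bound $\Prob\bigl[\sum_i|X_i-a_i|^2\le\varepsilon'^2 n\bigr]\le e^{t\varepsilon'^2 n/\varepsilon^2}(p+e^{-t})^n$, with $t=t(p)$ fixed before $\varepsilon'$. Two cosmetic fixes: the opening Markov/counting paragraph is never used and can be dropped, and the displayed condition $e^{t\varepsilon'^2/\varepsilon^2}\le\frac{2}{1+p}\cdot\frac{1+3p}{4}$ is impossible as written because the right side is below $1$; it should read $\frac{3+p}{2(1+p)}$, giving $p'\le(3+p)/4$, though your ``more simply'' version already suffices.
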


A related tool, which allows one to bound $\LevyFunc$ for $\langle x, X \rangle$ with $x \in S^n$ and $X$ a random vector as in Lemma~\ref{lem-tensorization}, is the following.

\begin{lemma}[{\cite[Lemma 2.3]{christoffersen2025gaps}}]\label{lem-sum-concentration}
    Let $X$ be a random vector with independent, mean-zero, variance-one sub-Gaussian random variables with sub-Gaussian constants uniformly bounded by some $K < \infty$. Then for all $\varepsilon \in (0, 1)$ there exists a probability $p = p(K, \varepsilon) \in (0, 1)$ such that for all $x \in S^{n - 1}$ we have that
    \begin{equation*}
        \LevyFunc(\langle x, X \rangle, \varepsilon) \leq p
    \end{equation*}
\end{lemma}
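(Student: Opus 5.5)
Write $S=\langle x,X\rangle=\sum_i x_iX_i$ with $x\in S^{n-1}$. The goal is a uniform bound $\sup_a\Prob[|S-a|\le\varepsilon]\le p<1$, with $p$ depending only on $K$ and $\varepsilon$. I would argue by second and fourth moments plus Paley--Zygmund. This avoids any case analysis on the shape of $x$, which is needed if one instead tries to transfer \cref{lem-levy-concentration-finiteness} coordinatewise.

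\textbf{Step 1: moments of $S$.} We have $\E[S]=0$ and $\E[S^2]=\|x\|^2=1$ by independence and unit variance. Expanding $\E[S^4]$, only pairings survive:
\[
\E[S^4]=\sum_i x_i^4\E[X_i^4]+3\sum_{i\neq j}x_i^2x_j^2\le C_4\Big(\sum_i x_i^2\Big)^2=C_4 .
\]
Here $C_4=\max(3,\sup_i\E[X_i^4])\le C K^4$, using the sub-Gaussian moment bound $\E[|\xi|^q]^{1/q}\le CK\sqrt q$ quoted above.

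\textbf{Step 2: anti-concentration at an arbitrary center.} Fix $a\in\R$ and set $Y=(S-a)^2$. Then $\E[Y]=1+a^2$. Expanding and using $\E[S]=0$,
\[
\E[Y^2]=\E[S^4]-4a\E[S^3]+6a^2+a^4 .
\]
By Cauchy--Schwarz, $|\E[S^3]|\le\sqrt{\E[S^4]}\le\sqrt{C_4}$. Hence
\[
\E[Y^2]\le C_4+4|a|\sqrt{C_4}+6a^2+a^4\le C'(1+a^2)^2,
\]
with $C'$ depending only on $K$. Apply Paley--Zygmund with $\theta=\varepsilon^2/(1+a^2)\le\varepsilon^2<1$:
\[
\Prob[|S-a|>\varepsilon]\ge\Prob[Y>\theta\E Y]\ge(1-\theta)^2\frac{(\E Y)^2}{\E[Y^2]}\ge\frac{(1-\varepsilon^2)^2}{C'} .
\]
Therefore $\Prob[|S-a|\le\varepsilon]\le 1-(1-\varepsilon^2)^2/C'$. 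Set $p$ equal to this quantity. It lies in $(0,1)$ because $C'\ge1$, and it is uniform in $a$ and $x$. Taking the supremum over $a$ gives the claim.

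\textbf{Main obstacle.} The only delicate point is uniformity in the center $a$. Paley--Zygmund applied to $S^2$ alone only handles $a=0$. This is why I use $Y=(S-a)^2$ and control the cross term through the third moment: it grows at most like $(1+a^2)^2$, the same rate as $(\E Y)^2$. With that in hand, no structure of $x$ is needed, and the constants depend on $K$ only through the fourth moment.
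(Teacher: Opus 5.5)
Your argument is correct. The paper gives no proof of this lemma: it imports it verbatim from \cite[Lemma 2.3]{christoffersen2025gaps}, so there is no in-paper route to compare against, and what you have written is a valid self-contained replacement.

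The following steps all check out:
\begin{itemize}
\item the fourth-moment computation;
\item the expansion $\E[(S-a)^4]=\E[S^4]-4a\E[S^3]+6a^2+a^4$;
\item the Cauchy--Schwarz bound on the cross term;
\item the Paley--Zygmund step with $\theta=\varepsilon^2/(1+a^2)\le\varepsilon^2<1$.
\end{itemize}
The resulting $p=1-(1-\varepsilon^2)^2/C'$ depends only on $K$ and $\varepsilon$ and is uniform in $a$, $x$ and $n$.

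Two features are worth noting. First, you use only a uniform fourth-moment bound. This makes explicit the paper's remark that constants in \cite{christoffersen2025gaps} depending on $\E|\xi|^4$ may be taken to depend on $K$ instead.

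Second, handling the center $a$ directly through $(S-a)^2$ is sharper than the usual symmetrization alternative. That alternative uses $\LevyFunc(S,\varepsilon)^2\le\Prob[|S-S'|\le 2\varepsilon]$ and then applies Paley--Zygmund at $0$. It loses a factor of $2$, which forces $\theta=2\varepsilon^2$, so it only covers $\varepsilon<1/\sqrt2$ directly. Your version gives the full range $\varepsilon\in(0,1)$ asserted in the statement. That range is sharp: for Rademacher entries and $x=e_1$ one has $\Prob[|\langle x,X\rangle|\le 1]=1$.
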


The first obstruction to control of small ball probabilities considers the fraction of zero/near-zero entries in a vector. 

\begin{definition}[Sparse and compressible vectors]\label{def-compression}
    Given $c_0 \in (0, 1)$, a vector $x \in \R^n$ is called $c_0$-sparse if it is supported on at most $\lfloor c_0 n \rfloor$ coordinates, where the support of $x$ is the set of coordinates with value $\neq 0$.

    Additionally, given $c_1 > 0$, a vector $y \in S^{n - 1}$ is said to be $(c_0, c_1)$-compressible (or simply compressible where the parameters are obvious) if there exists $c_0$-sparse $x\in \R^n$ such that $\|y-x\|\leq c_1$. The set of such $y$ is denoted by $\Comp_n(c_0, c_1)$, and we define the complimentary set $\Incomp_n(c_0, c_1) \coloneqq S^{n - 1} \setminus \Comp(c_0, c_1)$.
\end{definition}

So, a vector is sparse if at most a constant fraction of entries are non-zero, and is compressible if it is close in Euclidean distance to being sparse. The above definition partitions the sphere as $S^{n - 1} = \Comp(c_0, c_1) \cup \Incomp(c_0, c_1)$, and $\Comp(c_0, c_1)$ is precisely the set of vectors representing the first obstruction to small-ball probabilities.

We now prove that compressible vectors are with high probability not approximate singular vectors. The strategy, as in \cite{christoffersen2025gaps}, will be to first establish that a compressible vector cannot be an approximate singular vector for any fixed singular value within the range as in~\cref{eq-sing-val-range}. Then, we will union bound over the range of candidate singular values to complete the desired claim.

By Equation~\ref{eq-sing-val-range} we assume that $\sigma \in [\sqrt{n} - K\sqrt{p}, \sqrt{n} + K\sqrt{p}]$ for some constant $K > 0$ depending on the law of $\xi$.

\begin{lemma}[Anti-concentration of right singular vectors]\label{lem-incompressible-fixed-val-right}
    Fix a mean-zero, variance-one sub-Gaussian distribution $\xi$. Let $M$ be a $n \times p$ matrix with i.i.d.\ entries $m_{ij} \sim \xi$, and with $p = \Theta(n^{\zeta})$ for $\zeta \in (0, 1)$. Additionally, let $\sigma \in [\sqrt{n} - K\sqrt{p}, \sqrt{n} + K\sqrt{p}]$ and $c_0, c_1, c > 0$, and define the event
    \begin{equation}\label{eq-incompressible-fixed-val-1}
        \mathcal{C}_R \coloneqq \bigg\{ \inf_{u \in \Comp_p(c_0, c_1)}\| M^\top M u - \sigma^2 u \| \leq c \sqrt{np} \bigg\}.
    \end{equation}
    Then there exist positive values for $c_0, c_1, c$ such that
    \begin{equation}\label{eq-incompressible-fixed-val-3}
        \mathbb{P}[\mathcal{C}_R] \leq \exp(- \Theta(p)),
    \end{equation}
    with $c_0, c_1, c$ as well as the implied constants depending only on the law of $\xi$.
\end{lemma}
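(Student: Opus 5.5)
The plan is a net argument over sparse vectors. It combines three ingredients: the operator-norm control supplied by \cref{eq-sing-val-range}, the independence between the columns of $M$ inside and outside a fixed support, and the tensorization of \cref{lem-tensorization}.

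\textbf{Step 1: restrict to the good event.} First we work on the event $\mathcal{E}$ from \cref{eq-sing-val-range}, which holds with probability at least $1-2\exp(-p)$. On $\mathcal{E}$ all eigenvalues of $M^\top M$ lie in $[n - C'\sqrt{np},\, n + C'\sqrt{np}]$, because $(\sqrt n \pm C\sqrt p)^2 = n \pm O(\sqrt{np})$ when $p=o(n)$. The same holds for $\sigma^2$ by the assumption on $\sigma$. Hence
\[
\|M^\top M - \sigma^2 I\| \le C''\sqrt{np}
\]
and $\|Mx\|\ge \sqrt n/2$ for every unit $x$.

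Consequently, if $u\in\Comp_p(c_0,c_1)$ is within $c_1$ of a $c_0$-sparse $x$, then $\|(M^\top M-\sigma^2)(u-x)\|\le C''c_1\sqrt{np}$. The same bound, with $\eta$ in place of $c_1$, holds when $x$ is replaced by a point $y$ of an $\eta$-net of unit sparse vectors; we normalize $x$ at the cost of a factor $2$ in $c_1$. It therefore suffices to show that, with probability $1-\exp(-\Theta(p))$, every $y$ in an $\eta$-net $\net$ of the unit $c_0$-sparse vectors satisfies
\[
\|(M^\top M-\sigma^2)y\|\ge (c+C''(2c_1+\eta))\sqrt{np}.
\]

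\textbf{Step 2: small-ball bound for a fixed sparse vector.} This is the core of the argument. Fix a unit $y$ with support $S$, $|S|\le c_0 p$. For every $k\notin S$ we have $y_k=0$, so
\[
\big((M^\top M-\sigma^2)y\big)_k=\langle M_{\cdot k}, My\rangle .
\]
Here $My$ depends only on the columns in $S$, and these are independent of the columns $\{M_{\cdot k}\}_{k\notin S}$. We condition on the columns in $S$ and on $\|My\|\ge\sqrt n/2$. By \cref{lem-sum-concentration} applied to the unit vector $My/\|My\|$, each coordinate $k\notin S$ satisfies
\[
\LevyFunc\big(\langle M_{\cdot k},My\rangle,\ \varepsilon\sqrt n/2\big)\le p_0<1 .
\]
These coordinates are conditionally independent, so \cref{lem-tensorization} gives
\[
\Prob\big[\|P_{S^c}(M^\top M-\sigma^2)y\|\le \varepsilon'\sqrt{n}\sqrt{(1-c_0)p}\big]\le (p')^{(1-c0)p},
\]
with $\varepsilon',p'$ depending only on $K$. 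Notably, this step never uses the structure of $y$ beyond its support size, which is what makes the right singular vectors easy here, in contrast to \cref{lem-lcd-fixed-val}.

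\textbf{Step 3: union bound and choice of constants.} Standard volumetric counting gives $|\net|\le \binom{p}{\lfloor c_0p\rfloor}(3/\eta)^{c_0p}\le \exp\big(c_0p\log(e/c_0)+c_0p\log(3/\eta)\big)$. Choose $\eta$ and $c_1$ small compared with $\varepsilon'/C''$, and $c$ small compared with $\varepsilon'$, so that $\varepsilon'\sqrt{(1-c_0)}$ dominates $c+C''(2c_1+\eta)$. Then take $c_0$ small enough that
\[
c_0\log(e/c_0)+c_0\log(3/\eta)<\tfrac12(1-c_0)\log(1/p').
\]
The union bound over $\net$, together with $\Prob[\mathcal{E}^c]\le 2e^{-p}$, yields $\Prob[\mathcal{C}_R]\le\exp(-\Theta(p))$.

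\textbf{Main obstacle.} The delicate point is the ordering of constants. The approximation loss $C''(c_1+\eta)\sqrt{np}$ has the same $\sqrt{np}$ scale as the anti-concentration gain $\varepsilon'\sqrt{np}$, and the net size must beat $(p')^{p}$. This closes only because $\eta$ and $c_1$ are chosen first, depending on $K$ alone, and $c_0$ is chosen last. The conditioning in Step 2 also needs care: the event $\|My\|\ge\sqrt n/2$ must be measurable with respect to the columns in $S$. It is, for fixed $y$, either by intersecting with $\mathcal{E}$ or by a direct singular value bound for the $n\times|S|$ submatrix.
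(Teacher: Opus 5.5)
Your proof is correct. It lower-bounds the same quantity as the paper, namely the off-support block $P_{S^c}(M^\top M-\sigma^2 I)y$, which the paper writes as $Y^\top X u_0'$ after splitting $M=(X\,|\,Y)$. However, it controls that quantity by a different mechanism.

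The paper uses no net and no small-ball estimate at this stage. It bounds $\|Y^\top X u_0'\|\ge \MinSingVal(X)\,\MinSingVal(Y^\top U)$, where $U$ is an orthonormal basis of the column span of $X$. Conditionally on $X$, the matrix $Y^\top U$ is a tall $\lceil(1-c_0)p\rceil\times\lfloor c_0p\rfloor$ matrix with independent isotropic sub-Gaussian rows. So \cref{prop-sing-val-concentration} gives the bound $\hat c\sqrt{np}$ simultaneously for all unit vectors supported on a fixed $T$. The only union bound is then over the $\binom{p}{c_0p}\le e^{H(c_0)p}$ supports. The step from sparse to compressible vectors uses the same operator-norm bound $\|M^\top M-\sigma^2 I\|=O(\sqrt{np})$ that you use.

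You exploit the same independence (columns off the support versus $My$), but through a pointwise scalar estimate for each fixed direction $My/\|My\|$, via \cref{lem-sum-concentration} and \cref{lem-tensorization}. Because the estimate is pointwise, it forces the extra net factor $(3/\eta)^{c_0p}$. Your ordering of constants absorbs this correctly: $\varepsilon',p'$ depend only on $K$, then $\eta,c_1,c$ are chosen against $\varepsilon'/C''$, and $c_0$ is chosen last. Your handling of the conditioning is also sound: the event $\|My\|\ge\sqrt n/2$ depends only on the columns in $S$ and contains $\mathcal{E}$.

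The paper's route gives a net-free argument in which the admissible $c_0$ is limited only by $H(c_0)$ against a singular-value rate, with no auxiliary mesh parameter; this is the point of \cref{remark-net-technique}. Your route needs only one-dimensional anti-concentration beyond the operator-norm event $\mathcal{E}$, rather than a second least-singular-value estimate for a conditionally random matrix. It is essentially the net template of \cite{christoffersen2025gaps}.
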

\begin{proof}
    We argue by a union bound over $\mathcal{C}_R$. Specifically, since a compressible vector must be close to some sparse vector, and each sparse vector has support given by some subset of coordinates, we union bound over each admissible subset of support coordinates.  
    
    Let $\Comp_{p, T}(c_0, c_1)$ denote the subset of $\Comp_{p}(c_0, c_1)$ which are $c_1$-close to a $c_0$-sparse unit vector with support contained in the indices $T \subseteq [p]$ with $|T| = \lfloor c_0 p \rfloor$. Allowing $T$ to vary, we have that
    \begin{equation*}
        \Comp_{p}(c_0, c_1) \subseteq \bigcup_{T \subseteq I : \, |T| = \lfloor c_0 p \rfloor}\Comp_{p, T}(c_0, c_1)
    \end{equation*}
    and therefore defining the events 
    \begin{equation*}
        \mathcal{C}_{R, T} \coloneqq \bigg\{ \inf_{u \in \Comp_{p, T}(c_0, c_1)}\| M^\top M u - \sigma^2 u \| \leq c \sqrt{np} \bigg\}
    \end{equation*}
    we have 
    \begin{equation*}
        \mathcal{C}_R \subseteq \bigcup_{T \subseteq I : \, |T| = \lfloor c_0 p \rfloor}\mathcal{C}_{R, T}
    \end{equation*}
    and proceed to bound each $\mathcal{C}_{R, T}$. By rearrangement, we may assume that $T = \{ 1, ..., \lfloor c_0 p \rfloor \}$ and we write the decomposition 
    \begin{equation*}
        M = (X | Y)
    \end{equation*}
    where $X$ is an $n \times \lfloor c_0 p \rfloor$ matrix and $Y$ is an $n \times \lceil (1 - c_0) p \rceil$ matrix. By examining the bottom $p - \lfloor c_0 p \rfloor$ coordinates of the eigenvalue term $M^\top M u_0 - \sigma^2 u_0$ we have that
    \begin{equation*}
        \| M^\top  M u_0 - \sigma^2 u_0 \| \geq \|Y^\top X u_0'\|
    \end{equation*}
    where $u_0$ is a $c_0$-sparse vector in $\Comp_{p, T}(c_0, c_1)$ with support $\subseteq T$, and where $u_0'$ is the vector consisting of the first $\lfloor c_0 p \rfloor$ coordinates of $u_0$. Note that we may write 
    \begin{equation}\label{eq:Ytran-X-u0prime}
        \|Y^\top  X u_0'\| = \|X u_0' \| \cdot \|Y^\top (X u_0'/\|X u_0'\|)\| \geq \MinSingVal(X) \cdot \|Y^\top (X u_0'/\|X u_0'\|)\|
    \end{equation}
    since $u_0'$ is a unit vector. Now, $X$ is a sub-Gaussian rectangular matrix of dimensions $n \times \lfloor c_0 p \rfloor$, and so by Proposition~\ref{prop-sing-val-concentration} we have that
    \begin{equation}\label{eq:min-sing-val-X}
    \MinSingVal(X) \geq \sqrt{n} - K\sqrt{\lfloor c_0 p \rfloor}
    \end{equation}
    with probability at least $1 - e^{-Cn}$ for constants $K, C$ depending only on the law of $\xi$. Now, having controlled $\MinSingVal(X)$ we need to deal with $\|Y^\top (X u_0'/\|X u_0'\|)\|$. Observe that, having conditioned on $X$, that 
    \begin{equation*}
    \|Y^\top (X u_0'/\|X u_0'\|)\| \geq \inf_{x \in \mathrm{span}(X): \, \|x\| = 1}\|Y^\top x\|
    \end{equation*}
    which corresponds to $\MinSingVal(Y^\top U)$ where $U$ is an $n \times \lfloor c_0 p \rfloor$ matrix whose columns form an orthonormal basis of $\mathrm{span}(X)$. Note that $U$ is deterministic conditional upon $X$. Then, conditional on $X$, the rows of $Y^\top U$ are i.i.d sub-Gaussian vectors, and therefore
    \begin{equation}\label{eq:min-sing-val-YtransU}
        \|Y^\top (X u_0'/\|X u_0'\|)\| \geq \inf_{x \in \mathrm{span}(X): \, \|x\| = 1}\|Y^\top x\| = \MinSingVal(Y^\top U) \geq \sqrt{\lceil (1 - c_0)p \rceil} - K \sqrt{\lfloor c_0 p \rfloor})
    \end{equation}
    with (conditional) probability at least $1 - e^{-C'p}$. Plugging Equations~\ref{eq:min-sing-val-X} and~\ref{eq:min-sing-val-YtransU} into Equation~\ref{eq:Ytran-X-u0prime}, and choosing $\hat{c} > 0$ such that $\hat{c}\sqrt{np} \leq (\sqrt{n} - K\sqrt{\lfloor c_0 p \rfloor})(\sqrt{\lceil (1 - c_0)p \rceil} - K \sqrt{\lfloor c_0 p \rfloor}))$ for sufficiently large $n$, we have that 
    \begin{equation*}
        \|Y^\top (X u_0')\| \geq \hat{c}(\sqrt{np})
    \end{equation*}
    with probability at least $(1 - e^{-Cn})(1 - e^{-Cp})$ by the independence of $X$ and $Y$. This yields the statement 
    \begin{equation}
        \inf_{u \in \Comp_{p, T}(c_0, c_1) \cap c_0\mathrm{-sparse}}\|M^\top  M u - \sigma^2 u\| \geq \hat{c}\sqrt{np}
    \end{equation}
    with probability at least $(1 - e^{-Cn})(1 - e^{-Cp})$. By the reverse triangle inequality, we have that for $u \in \Comp_{p, T}(c_0, c_1)$ that 
    \begin{equation*}
        \|M^\top M u - \sigma^2 u\| \geq \|M^\top M u_0 - \sigma^2 u_0\| - \|M^\top M (u - u_0) - \sigma^2 (u - u_0)\| 
    \end{equation*}
    Now note that
    \begin{equation*}
        \|M^\top M (u - u_0) - \sigma^2 (u - u_0)\|  \leq c_1 \|M^\top M - \sigma^2I\| = c_1 \max_{1 \leq i \leq p}|s_i(M)^2 - \sigma^2| \leq c_1 \Theta(\sqrt{np})
    \end{equation*}
    since $\sigma$ lies in a window around $\sqrt{n}$ of width $\Theta(\sqrt{p})$, and so do all the singular values of $M$ with probability at least $1 - e^{-c'p}$. Choosing $c_0, c_1 > 0$ sufficiently small, we have that 
    \begin{equation*}
        \mathbb{P}[\mathcal{C}_{R,T}] \leq O(e^{-Cn}) + O(e^{-Cp}) \leq O(e^{-Cp})
    \end{equation*}
    with universal implied constants in the above display. Now, by Stirling's approximation we have that 
    \begin{equation*}
        \binom{p}{c_0 p} \leq e^{H(c_0)p}
    \end{equation*}
    where
    \begin{equation*}
        H(x) \coloneqq -x\log x - (1 - x)\log (1 - x)
    \end{equation*}
    is the binary entropy function (see Theorem 3.1 of \cite{galvin2014three}). A union bound leads to
    \begin{equation*}
        \mathbb{P}[\mathcal{C}_R] \leq \sum_{T \subseteq [p]: \, |T| = c_0 p}\mathbb{P}[\mathcal{C}_{R, T}] \leq 2e^{H(c_0)p - Cp}
    \end{equation*}
     with $c = \hat{c} - c_1$. Note that since $\hat{c}$ does not depend on $c_1$, and since decreasing $c_1$ only improves our concentration, we can choose $c_1$ sufficiently small that $c > 0$. This concludes the argument.
\end{proof}

We now establish a similar statement for the left singular vectors. Unlike with the right singular vector case, in which the first matrix that the singular vector hits has the favorable dimensions, here the second matrix has the favorable properties and so we require a more careful lower bound.

\begin{lemma}[Anti-concentration of left singular]\label{lem-incompressible-fixed-val-left}
    Fix a mean-zero, variance-one sub-Gaussian distribution $\xi$. Let $M$ be a $n \times p$ matrix with i.i.d.\ entries $m_{ij} \sim \xi$, and with $p = \Theta(n^{\zeta})$ for $\zeta \in (0, 1)$. Additionally, let $\sigma \in [\sqrt{n} - K\sqrt{p}, \sqrt{n} + K\sqrt{p}]$ and $c_0, c_1, c > 0$, and define the event
    \begin{equation}\label{eq-incompressible-fixed-val-1}
        \mathcal{C}_L \coloneqq \bigg\{ \inf_{u \in \Comp_n(c_0, c_1)}\|M M^\top u - \sigma^2 u\| \leq c n \bigg\}.
    \end{equation}
    Then there exist positive values for $c_0, c_1, c$ such that
    \begin{equation}\label{eq-incompressible-fixed-val-3}
        \mathbb{P}[\mathcal{C}_L] \leq \exp(- \Theta(n)),
    \end{equation}
    with $c_0, c_1, c$ as well as the implied constants depending only on the law of $\xi$.
\end{lemma}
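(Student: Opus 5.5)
The plan is to exploit the fact that $MM^\top$ has rank at most $p=o(n)$, so it acts as zero on most of $\R^n$. Let $P$ be the orthogonal projection onto $\mathrm{col}(M)\subseteq\R^n$, a subspace of dimension at most $p$, and set $P^\perp = I-P$. For any unit $u$, write $u=Pu+P^\perp u$. Since $MM^\top$ maps into $\mathrm{col}(M)$ and annihilates $\mathrm{col}(M)^\perp$, the two components separate orthogonally:
\begin{equation*}
\|MM^\top u-\sigma^2 u\|^2=\|(MM^\top-\sigma^2 I)Pu\|^2+\sigma^4\|P^\perp u\|^2\;\geq\;\sigma^4\|P^\perp u\|^2 .
\end{equation*}
Because $\sigma\geq \sqrt n-K\sqrt p\geq \sqrt n/2$ for large $n$, it suffices to show the following: with probability $1-\exp(-\Theta(n))$, every $u\in\Comp_n(c_0,c_1)$ satisfies $\|P^\perp u\|\geq 1/2$, or equivalently $\|Pu\|\leq\sqrt{3}/2$. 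This would give the bound with $c=1/16$.

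Note that the first term is of no use here. On $\mathrm{col}(M)$ the eigenvalues of $MM^\top$ are the $s_i(M)^2$, and these lie within $\Theta(\sqrt{np})$ of $\sigma^2$. So that term can only deliver order $\sqrt{np}$, not $n$. This is exactly the asymmetry with \cref{lem-incompressible-fixed-val-right}, and the reason the whole argument must rest on $P^\perp u$.

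To bound $\|Pu\|$, first treat a $c_0$-sparse unit vector $u_0$ supported on $T\subseteq[n]$ with $|T|=\lfloor c_0 n\rfloor$, and let $M_T$ be the $|T|\times p$ submatrix of rows in $T$. Then
\begin{equation*}
\|Pu_0\|=\sup_{w\neq 0}\frac{\langle u_0,Mw\rangle}{\|Mw\|}\leq\frac{\|M_T^\top u_0\|}{\MinSingVal(M)}\leq\frac{\|M_T\|}{\MinSingVal(M)} .
\end{equation*}
For a general compressible $u$ that is $c_1$-close to such a $u_0$, we get $\|Pu\|\leq\|Pu_0\|+c_1$. By \cref{eq-sing-val-range}, $\MinSingVal(M)\geq\sqrt n/2$ with probability $1-2e^{-p}$.

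It remains to control $\|M_T\|$ simultaneously for all $T$. This union bound is the step I expect to be the main obstacle, since there are $\binom{n}{c_0n}\leq e^{H(c_0)n}$ choices of $T$ and the failure probability for each $T$ must beat this count. For a fixed $T$, apply \cref{prop-sing-val-concentration} (in its upper-tail form, which holds for any aspect ratio) with $t=\sqrt{\beta n}$. This gives
\begin{equation*}
\|M_T\|\leq CK^2\bigl(\sqrt{c_0n}+\sqrt p+\sqrt{\beta n}\bigr)
\end{equation*}
with probability at least $1-2e^{-\beta n}$. Choosing $\beta=2H(c_0)$, the union bound fails with probability at most $2e^{-H(c_0)n}=\exp(-\Theta(n))$. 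On the complement, $\|Pu\|\leq 2CK^2\bigl(\sqrt{c_0}+\sqrt{2H(c_0)}+o(1)\bigr)+c_1$.

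Since $H(c_0)\to0$ as $c_0\to0$, we can pick $c_0$ and then $c_1$ small, depending only on $K$, so that this is at most $\sqrt3/2$ for large $n$. Intersecting with the event of \cref{eq-sing-val-range} costs only an extra $2e^{-p}$, which is absorbed into the stated bound (treating the $e^{-p}$ contribution as subsumed, or else taking $t=\sqrt n$ in \cref{prop-sing-val-concentration} to get $\MinSingVal(M)\geq\sqrt n/2$ with probability $1-e^{-\Theta(n)}$). This yields $\Prob[\mathcal{C}_L]\leq\exp(-\Theta(n))$.
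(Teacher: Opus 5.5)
Your argument is correct, and it takes a genuinely different route from the paper's.

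\textbf{The paper's route.} It splits $M$ by rows into the support block $X=M_T$ and the remaining block $Y$. It writes the sparse vector as $v_0'=v_{\perp}+v_{\|}$ with $v_{\perp}\in\ker(X^\top)$, and bounds $\|MM^\top v_0-\sigma^2v_0\|^2\ge \|v_{\|}\|^2\,s_{\min}(X)^2s_{\min}(Y)^2+\sigma^4\|v_{\perp}\|^2$. So it also uses the fact that vectors annihilated by the transpose get multiplied by $-\sigma^2$, but only inside the support block. The other component is handled through \emph{lower} bounds on $s_{\min}$ of the thin $\lfloor c_0n\rfloor\times p$ block $X$ and of $Y$. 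It then passes to compressible vectors using an operator-norm bound on $MM^\top-\sigma^2 I$.

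\textbf{Your route.} You project onto the global column space of $M$. You therefore only need an \emph{upper} bound on $\|M_T\|$ uniformly in $T$, plus one lower bound on $s_{\min}(M)$. The perturbation to compressible vectors comes for free via $\|P(u-u_0)\|\le c_1$, and the upper constraint on $\sigma$ is never used.

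\textbf{What this buys.} The real advantage is in the union bound over supports. Your deviation $\sqrt{2H(c_0)n}$ only has to be small compared with $\sqrt n$, which holds for small $c_0$. In the paper's version, \cref{prop-sing-val-concentration} keeps $s_{\min}(X)$ at scale $\sqrt{c_0n}$ only with failure probability no better than about $2e^{-c_0n/(C^2K^4)}$. For Gaussian entries the true probability is indeed $e^{-\Theta(c_0 n)}$. This cannot beat the $e^{H(c_0)n}$ supports, since $H(c_0)\ge c_0\log(1/c_0)$. Making that step rigorous would need a smaller, $c_0$-dependent threshold coming from a separate small-ball argument. The paper's route mainly offers structural parallelism with \cref{lem-incompressible-fixed-val-right}.

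\textbf{Small repairs to your write-up.}
\begin{itemize}
\item Taking $t=\sqrt n$ in \cref{prop-sing-val-concentration} makes the lower bound on $s_{\min}(M)$ negative. Take $t=\sqrt n/(4CK^2)$ instead, which gives $s_{\min}(M)\ge\sqrt n/2$ for large $n$ with failure probability $2e^{-n/(16C^2K^4)}$.
\item Your other option, treating the $2e^{-p}$ loss as subsumed, is not legitimate: $e^{-p}=e^{-\Theta(n^{\zeta})}$ is much larger than $e^{-\Theta(n)}$.
\item The sparse approximant of a compressible vector need not have unit norm. This only costs a harmless factor $1+c_1$ in your bound on its projection.
\end{itemize}
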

\begin{proof}
   Similar to the treatment of right singular vectors, let $\Comp_{n, T}(c_0, c_1)$ denote the subset of $\Comp_{n}(c_0, c_1)$ which are $c_1$-close to a $c_0$-sparse unit vector with support contained in the indices $T \subseteq [n]$ with $|T| = \lfloor c_0 n \rfloor$. Allowing $T$ to vary, we have that
    \begin{equation*}
        \Comp_{n}(c_0, c_1) \subseteq \bigcup_{T \subseteq I : \, |T| = \lfloor c_0 n \rfloor}\Comp_{n, T}(c_0, c_1)
    \end{equation*}
    and therefore defining the events 
    \begin{equation*}
        \mathcal{C}_{L, T} \coloneqq \bigg\{ \inf_{v \in \Comp_{n, T}(c_0, c_1)}\|MM^\top v - \sigma^2 v\| \leq c n \bigg\}
    \end{equation*}
    we have that 
    \begin{equation*}
        \mathcal{C}_L \subseteq \bigcup_{T \subseteq I : \, |T| = \lfloor c_0 n \rfloor}\mathcal{C}_{L, T}
    \end{equation*}
    and may proceed to bound each $\mathcal{C}_{L, T}$. By rearrangement, we may assume that $T = \{ 1, ..., \lfloor c_0 n \rfloor \}$ and we write the decomposition 
    \begin{equation*}
        M = \bigg(\frac{X}{Y}\bigg)
    \end{equation*}
    where $X$ is a $\lfloor c_0 n \rfloor \times p$ matrix and $Y$ is a $\lceil(1 - c_0)n \rceil \times p$ matrix. We have that
    \begin{equation}\label{eq-left-compressible-decomp}
        \| MM^\top v_0 - \sigma^2 v_0\| = \|YX^\top v_0' + XX^\top v_0' - \sigma^2 v_0'\| = \|YX^\top v_0'\| + \|XX^\top v_0' - \sigma^2 v_0'\|
    \end{equation}
    where $v_0$ is a $c_0$-sparse vector in $\Comp_{n, T}(c_0, c_1)$ with support $\subseteq T$, and where $v_0'$ is the vector consisting of the first $\lfloor c_0 n \rfloor$ coordinates of $v_0$. 
    
    Conditional on $X$, let $v_{\perp}$ be the projection of $v_0'$ onto the kernel of $X^\top $, and define $v_{\|} \coloneqq v_0' - v_{\perp}$. Equation~\ref{eq-left-compressible-decomp} when squared admits the lower bound
    \begin{align*}
        \| MM^\top v_0 - \sigma^2 v_0\|^2 &\geq \|YX^\top v_{\|}\|^2 + \|XX^\top (v_{\perp} + v_{\|}) - \sigma^2 (v_{\perp} + v_{\|})\|^2 \\
        &= \|YX^\top v_{\|}\|^2 + \|(XX^\top v_{\|} - \sigma^2 v_{\|})  - \sigma^2 v_{\perp}\|^2 \\
        &= \|YX^\top v_{\|}\|^2 + \|(XX^\top v_{\|} - \sigma^2 v_{\|})\|^2 + \|\sigma^2 v_{\perp}\|^2 \\
        &\geq \|YX^\top v_{\|}\|^2 + \|\sigma^2 v_{\perp}\|^2
    \end{align*}
    by an application of orthogonality at the second equality. The $v_{\perp}$ term will allow us to accommodate for the possibility that $v_{\|}$ is small. Now, since the row space of $X^\top $ is spanned by the singular vectors of $X^\top $ corresponding to only the non-zero singular values of $X^\top $, following the technique in Lemma~\ref{lem-incompressible-fixed-val-right} we have that
    \begin{equation*}
        \|YX^\top v_{\|}\| \geq \|v_{\|}\| \cdot \MinSingVal(X) \cdot \MinSingVal(Y)
    \end{equation*}
    and so combining we have that
    \begin{equation*}
        \| MM^\top v_0 - \sigma^2 v_0\|^2 \geq \|v_{\|}\|^2 \cdot (\MinSingVal(X)^2 \cdot \MinSingVal(Y)^2) + \sigma^4 \|v_{\perp}\|^2
    \end{equation*}
    since $v_0'$ is a unit vector, the lower bound is with probability at least $(1 - e^{-Cn})^2$ the convex combination of terms which are $\Theta(n^2)$. It follows that there exists some $\hat{c} > 0$ (independent of $c_1$) such that $\hat{c}^2 n^2 \leq \min(\MinSingVal(X)^2 \cdot \MinSingVal(Y)^2)$ with probability at least $(1 - e^{-Cn})^2$.
    
    . As $v_0'$ was arbitrary within $\Comp_{n, T}(c_0, c_1) \cap c_0\mathrm{-sparse}$, we extend to the entirety of $\Comp_{n, T}(c_0, c_1)$ by observing that uniformly over the cluster with probability at least $(1 - e^{-Cn})^2$ that
    \begin{equation*}
        \|MM^\top  u - \sigma^2 u\| \geq \|MM^\top  u_0 - \sigma^2 u_0\| - \|MM^\top  (u - u_0) - \sigma^2 (u - u_0)\| \geq \hat{c} n - c_1 \Theta(n)
    \end{equation*}
    again by the upper bounds on $\sigma^2$ and $\|M\|$. By a union bound over the clusters $\Comp_{n, T}(c_0, c_1)$ and an application of Stirling's approximation we have that
    \begin{equation*}
         \mathbb{P}[\mathcal{C}_L] \leq 2e^{H(c_0)n - Cn}
    \end{equation*}
    where again we choose $c_0, c_1$ sufficiently small and set $c = \hat{c} - c_1$, noting that $c > 0$ since $\hat{c}$ is independent of $c_1$ and $c_1$ can be made arbitrarily small.
\end{proof}

An important distinction between Lemmas \ref{lem-incompressible-fixed-val-right} and \ref{lem-incompressible-fixed-val-left} is that it was only necessary to use the sharper control on the concentration of the singular values of $M$ in the former and not the latter. 

Note in Lemmas~\ref{lem-incompressible-fixed-val-right} and \ref{lem-incompressible-fixed-val-left} reducing the quantities $c_0, c_1, c$ only improves the concentration probability and we may choose the relevant constants for both the right and left singular vectors to be identical. Indeed, the proofs of \cref{lem-incompressible-fixed-val-right,lem-incompressible-fixed-val-left} establish existence of the appropriate constants by choosing $c$ which depends on $c_1$, one may subsequently lower $c$ by monotonicity of measure, and $c_0, c_1$ by monotonicity of set containment.

We can now rule out the first small-ball obstruction with overwhelming probability.

\begin{proposition}[Compressible singular vectors are unlikely]\label{prop-incompressible-varying-val}
    Let $\mathcal{G}$ be the event that there exists either a vector $u \in \Comp_{p}(c_0, c_1)$ and value $\sigma^2 > 0$ such that $M^\top Mu = \sigma^2 u$, and either $u \in \Comp_{p}(c_0, c_1)$ or $Mu = v \in \Comp_{p}(c_0, c_1)$. Then 
    \begin{equation*}
        \mathbb{P}[\mathcal{G}] \leq \exp(-\Theta(p))
    \end{equation*}
    for sufficiently large $n$ and with implied constants depending only on the law of $\xi$.
\end{proposition}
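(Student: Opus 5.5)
The plan is to pass from the fixed-$\sigma$ estimates of \cref{lem-incompressible-fixed-val-right,lem-incompressible-fixed-val-left} to the varying singular value by a net argument over the admissible window, and then to take a union bound. The statement is read in its intended form. $\mathcal{G}$ is the event that some right singular pair $(u,\sigma)$ of $M$ has $u\in\Comp_p(c_0,c_1)$, or has normalized left singular vector $v=Mu/\sigma\in\Comp_n(c_0,c_1)$.

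First intersect with the event $\mathcal{E}$ of \cref{eq-sing-val-range}, which has probability at least $1-2e^{-p}$. On $\mathcal{E}$ every singular value satisfies $\sigma\in J:=[\sqrt n-K\sqrt p,\sqrt n+K\sqrt p]$, and $\|M\|\le \sqrt n+K\sqrt p$.

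Next discretize $J$. Let $\mathcal{N}\subset J$ be a grid whose consecutive points are at distance $\eta$, chosen so that $|\sigma^2-\sigma_0^2|\le 2(\sqrt n+K\sqrt p)\eta\le \tfrac12 c\sqrt{p}$ whenever $|\sigma-\sigma_0|\le\eta$. It suffices to take $\eta=c\sqrt p/(8\sqrt n)$, and then $|\mathcal{N}|=O(\sqrt p\cdot\sqrt n/\sqrt p)=O(\sqrt n)$.

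Suppose $M^\top Mu=\sigma^2u$ with $u\in\Comp_p$, and let $\sigma_0\in\mathcal N$ be the nearest grid point. Then $\|M^\top Mu-\sigma_0^2u\|=|\sigma^2-\sigma_0^2|\le c\sqrt p\le c\sqrt{np}$, so the event $\mathcal C_R$ holds at $\sigma_0$.

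The left case works the same way. If $v\in\Comp_n$, then $MM^\top v=\sigma^2 v$, and $\|MM^\top v-\sigma_0^2v\|\le c\sqrt p\le cn$, so $\mathcal C_L$ holds at $\sigma_0$.

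Taking a union bound over $\mathcal N$ and applying \cref{lem-incompressible-fixed-val-right,lem-incompressible-fixed-val-left} gives
\[
\Prob[\mathcal G]\le 2e^{-p}+O(\sqrt n)\big(e^{-\Theta(p)}+e^{-\Theta(n)}\big).
\]
Since $p=\Theta(n^\zeta)$, the factor $\sqrt n$ is $e^{O(\log n)}$, and this is absorbed into $e^{-\Theta(p)}$ for $n$ large. This gives the claim. Both lemmas can be applied with common constants $c_0,c_1,c$, by the monotonicity remark following \cref{lem-incompressible-fixed-val-left}.

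I expect the main obstacle to be the bookkeeping of the tolerances. The right-vector lemma tolerates only an error of $c\sqrt{np}$, which is much smaller than $n$, so the grid in $\sigma$ must be fine enough relative to the $\sigma^2$ scale. This is handled by meshing in $\sigma$ at spacing $\asymp\sqrt{p/n}$ rather than in $\sigma^2$. The count of grid points stays polynomial, and the $e^{-\Theta(p)}$ bound dominates only because $p$ grows polynomially in $n$. If the lemmas are applied at a $\sigma_0$ slightly outside $J$, I will enlarge the constant $K$ in the window to cover the grid endpoints.
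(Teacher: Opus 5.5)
Your proposal is correct and follows the paper's argument: discretize the singular-value window, apply \cref{lem-incompressible-fixed-val-right,lem-incompressible-fixed-val-left} at each grid point, and take a union bound. You are also more explicit than the paper in intersecting with the event of \cref{eq-sing-val-range}, and the one real difference is the mesh: the paper meshes $\sigma^2$ at spacing $c\sqrt{np}$ and gets $O(1)$ grid points, whereas your $\sigma$-mesh of spacing $\asymp\sqrt{p/n}$ is finer than the tolerance requires (spacing $\asymp c\sqrt{p}$ in $\sigma$ already suffices), which costs only a harmless $O(\sqrt{n})$ factor.
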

\begin{proof}
    The strategy is to choose some discretization of the set of candidate singular values, treat each fixed point in this discretization via \cref{lem-incompressible-fixed-val-right,lem-incompressible-fixed-val-left}, and conclude with a union bound. So, this is a net argument over the singular values.

    Let $\mathcal{G}_R$ denote that there exist a compressible right singular vector such as in the Proposition statement, and let $\mathcal{G}_L$ denote the left singular vector event, such that $\mathcal{G} = \mathcal{G}_R \cup \mathcal{G}_L$. We deal with the right singular vectors as the left case follows identically.
    
    Suppose that such a $u \in \Comp_{p}(c_0, c_1)$ and $\sigma^2$ exist, and let $\sigma_0^2$ be an integer multiple of $c \sqrt{np}$ (with $c$ as in Lemma~\ref{lem-incompressible-fixed-val-right}) in the range $[(\sqrt{n} - K\sqrt{p})^2, (\sqrt{n} + K\sqrt{p})^2]$. Then, there are $O(2K\sqrt{np})/c\sqrt{np}$ many such values, and $|\sigma^2 - \sigma_0^2| \leq c \sqrt{np}$, so we have 
    \begin{equation*}
        \|M^\top  M u - \sigma_0^2 u \| = |\sigma^2 - \sigma_0^2| \leq c \sqrt{np}
    \end{equation*}
    which occurs with probability at most $\exp(- \Theta(p))$ by Lemma~\ref{lem-incompressible-fixed-val-right}. By a union bound over the polynomially many values of $\sigma_0^2$ we have that 
    \begin{equation*}
        \mathbb{P}[\mathcal{G}_R] \leq \exp(-\Theta(p))
    \end{equation*}
    for sufficiently large $n$ with implied constants depending only on the law of $\xi$. By a similar argument and using Lemma~\ref{lem-incompressible-fixed-val-left} we determine that
    \begin{equation*}
        \mathbb{P}[\mathcal{G}_L] \leq \exp(-\Theta(n))
    \end{equation*}
    again with implied constants depending only on the law of $\xi$. A final union bound then yields the claim.
\end{proof}

We conclude this section with a few remarks.

\begin{remark}[Superficiality of upper bounds and grid granularity]
    Note that the particular upper bounds on the infima of Lemmas~\ref{lem-incompressible-fixed-val-right} and \ref{lem-incompressible-fixed-val-left} are mostly superficial. Once one can prove that the relevant infimum is $\leq n^{\alpha}$ for some $\alpha \in \R$ with probability at least $1 - \exp(-\Theta(p))$, the impact of $\alpha$ is exclusively on the size of the singular value grid, which contributes an $n^{-\xi'}$ multiplicative factor via a union bound which is ultimately dominated by the exponentially small probability of the fixed $\sigma$ event. The particular thresholds are merely the result of technique limitations.
\end{remark}

\begin{remark}[Net technique]\label{remark-net-technique}
    Note that in \cite{christoffersen2025gaps} the authors prove their analogues of Lemmas~\ref{lem-incompressible-fixed-val-right} and \ref{lem-incompressible-fixed-val-left} using a net over the compressible vectors. We suspect that such an argument, when supplemented with the additional singular value information of Equation~\ref{eq-sing-val-range} as well as the more careful lower bound in Lemma~\ref{lem-incompressible-fixed-val-right}, would be sufficient to produce Proposition~\ref{prop-incompressible-varying-val}. 
    
    In this paper we choose to present a support union argument, as opposed to the net union argument, as it is a little more simple, yields a slightly more refined result with stronger intermediate claims, a better dependence of the rate of exponential convergence on $c_0$ and an independence of the rate on $c_1$. In particular, the support union bound allows one to consider the setting in which $c_0$ and $c_1$ vary, which is obstructed by the net union multiplicative term $(9/c_0 c_1)^{c_0 n}$.
\end{remark}

\subsection{Left singular vectors do not have strong additive structure}\label{sec-additivity}

The second obstruction to small-ball probabilities, which is the existence of additive structure, is quantified by a concept known as \textit{least common denominator} (LCD), introduced in \cite{rudelson2008littlewood}. Following \cite{christoffersen2025gaps}, we use a variant known as \textit{regularized} LCD.

\begin{definition}[LCD and regularized LCD]\label{def-lcd}
    Let $\kappa > 0$ and $\gamma \in (0, 1)$. For a unit vector $x \in S^{n - 1}$ we define $\LCD_{\kappa, \gamma}(x)$ as 
    \begin{equation*}
        \LCD_{\kappa, \gamma}(x) \coloneqq \inf \{ \theta > 0: \, \dist(\theta x, \Z^{n}) < \min(\gamma \|\theta x \|, \kappa) \}
    \end{equation*}
    Additionally, we define the regularized LCD for $\alpha \in (0, c')$ with $c' = c_0 c_1^2 /4$, $0 < \alpha < c'/4$ and $x \in \Incomp_n(c_0, c_1)$, which we denote by $\rLCD_{\kappa, \gamma}(x, \alpha)$, as
    \begin{equation*}
        \rLCD_{\kappa, \gamma}(x, \alpha) \coloneqq \sup \{ \LCD_{\kappa, \gamma}(x_I/\|x_I\|): I \subseteq \spread(x), \, |I| =  \lceil \alpha n \rceil \}
    \end{equation*}
    where 
    \begin{equation*}
        \spread(x) \coloneqq \{i \in [n]: \, \frac{c_1}{\sqrt{2n}} \leq |x_i| \leq \frac{1}{\sqrt{c_0 n}} \}
    \end{equation*}
\end{definition}

$\LCD_{\kappa, \gamma}(x)$ finds the smallest $\theta > 0$ such that $\theta x$ is approximately integer valued, which describes how close the entries of $x$ are to being approximately embedded in an arithmetic progression. The value $\kappa$ configures the notion of ``approximate'', and $\gamma \|\theta x \|$ prevents the approximation of $\theta x$ by an integer vector direction which deviates significantly from $x$. $\rLCD_{\kappa, \gamma}(x, \alpha)$ is a more robust version of $\LCD_{\kappa, \gamma}(x)$; since $x \in S^{n - 1}$, $\LCD_{\kappa, \gamma}(x)$ can be large by virtue of the entries of $x$ being ``spread-out'' rather than having genuine arithmetic structure, so $\rLCD_{\kappa, \gamma}(x, \alpha)$ considers the arithmetic structure over all such well-spread subsets of $x$. See \cite{rudelson2010lecture} for a more thorough discussion.

The following describes the quantitative impact of regularized LCD on concentration.

\begin{lemma}[{\cite[Lemma 4.1]{christoffersen2025gaps}}]\label{lem-lcd-helper-concentration}
    Let $X$ be a random vector in $\R^n$ with i.i.d.\ coordinates, and assume that there exist $\varepsilon_0, p_0 > 0$ such that $\LevyFunc(X_i, \varepsilon_0) \leq 1 - p_0$ for all $i \in [n]$. If $x \in \Incomp_n(c_0, c_1)$ and $\kappa > 0$, if $\varepsilon$ satisfies 
    \begin{equation*}
        \varepsilon \geq \frac{\sqrt{\alpha}}{c_0 \rLCD_{\kappa, \gamma}(x, \alpha)}
    \end{equation*}
    then 
    \begin{equation}
        \LevyFunc(\langle X, x \rangle, \varepsilon) \leq C \bigg( \frac{\varepsilon}{\gamma c_1 \sqrt{\alpha}} + e^{-\Theta(\kappa^2)} \bigg)
    \end{equation}
    where the implied constants depend only on the law of $X_1$.
\end{lemma}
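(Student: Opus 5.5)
We would follow the standard Rudelson--Vershynin route: reduce to a well-spread block of coordinates of $x$ on which the LCD is large, condition on all other coordinates of $X$, and apply the classical LCD small-ball inequality to that block.

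\textbf{Step 1 (choosing a good block).} By the definition of $\rLCD_{\kappa,\gamma}(x,\alpha)$ as a supremum over the finitely many sets $I\subseteq\spread(x)$ with $|I|=\lceil\alpha n\rceil$, the supremum is attained. Fix such an $I$ with $\LCD_{\kappa,\gamma}(y)=\rLCD_{\kappa,\gamma}(x,\alpha)$, where $y\coloneqq x_I/\|x_I\|$. If one prefers not to use attainment, take $I$ with $\LCD_{\kappa,\gamma}(y)\ge \rLCD_{\kappa,\gamma}(x,\alpha)/2$; this costs only a factor $2$ in the constant. We need $\spread(x)$ to have at least $\lceil\alpha n\rceil$ elements. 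This is the standard fact that an incompressible vector has $|\spread(x)|\ge c' n$ with $c'=c_0c_1^2/4$, which is exactly why $\alpha<c'/4$ is imposed. Since every coordinate in $I$ satisfies $|x_i|\ge c_1/\sqrt{2n}$, we get $\|x_I\|\ge c_1\sqrt{\alpha/2}$.

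\textbf{Step 2 (conditioning).} Write $\langle X,x\rangle=\langle X_I,x_I\rangle+\langle X_{I^c},x_{I^c}\rangle$. The two summands are independent. Conditioning on $X_{I^c}$ and taking the supremum over the shift $a$ gives
\[
\LevyFunc(\langle X,x\rangle,\varepsilon)\le \LevyFunc(\langle X_I,x_I\rangle,\varepsilon)=\LevyFunc\bigl(\langle X_I,y\rangle,\varepsilon/\|x_I\|\bigr).
\]

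\textbf{Step 3 (small-ball estimate).} Apply the Rudelson--Vershynin small-ball inequality for LCD to the unit vector $y\in S^{|I|-1}$. This is proved via Esseen's inequality together with the coordinatewise hypothesis $\LevyFunc(X_i,\varepsilon_0)\le 1-p_0$. It gives, for every $t\ge0$,
\[
\LevyFunc(\langle X_I,y\rangle,t)\le \frac{C}{\gamma}\Bigl(t+\frac{1}{\LCD_{\kappa,\gamma}(y)}\Bigr)+Ce^{-c\kappa^2},
\]
with $C,c$ depending only on $\varepsilon_0$ and $p_0$. Take $t=\varepsilon/\|x_I\|\le \sqrt2\,\varepsilon/(c_1\sqrt\alpha)$.

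\textbf{Step 4 (absorbing the LCD term).} The hypothesis $\varepsilon\ge\sqrt\alpha/(c_0\rLCD)$ gives $1/\LCD_{\kappa,\gamma}(y)\le c_0\varepsilon/\sqrt\alpha$. Since $c_0c_1\le 1$, this is at most $\varepsilon/(c_1\sqrt\alpha)$. So the LCD term is absorbed into the $t$ term, and the bound becomes $C\bigl(\varepsilon/(\gamma c_1\sqrt\alpha)+e^{-\Theta(\kappa^2)}\bigr)$, as claimed.

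\textbf{Main obstacle.} The substantive ingredient is the LCD small-ball inequality in Step 3. Proving it from scratch needs Esseen's inequality, a symmetrization argument to obtain a characteristic-function bound of the form $|\phi(t)|\le\exp(-c\,\dist(ty/2\pi,\Z^{|I|})^2)$, and then a level-set analysis of $\dist(\theta y,\Z^{|I|})$ below the LCD. We would import this as the standard result (Rudelson--Vershynin, as used in \cite{christoffersen2025gaps}) rather than reprove it. The only care needed is that the result is applied to the subvector $y$ in dimension $|I|$, which is harmless because the bound is dimension-free.
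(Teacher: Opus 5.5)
Your proposal is correct and is essentially the standard argument. The paper gives no proof of its own and imports the lemma from \cite{christoffersen2025gaps}, whose proof is exactly your reduction: pick a maximizing block $I\subseteq\spread(x)$, condition on $X_{I^c}$, rescale by $\|x_I\|\ge c_1\sqrt{\alpha/2}$, apply the Rudelson--Vershynin LCD small-ball inequality, and absorb the $1/\LCD$ term using the hypothesis on $\varepsilon$. One small quibble: standard statements of the imported inequality also use a mild moment bound on $X_1$ (e.g.\ $\E|X_1|\le K_0$) to control the symmetrized variable from above, so its constants depend on the law of $X_1$ rather than on $\varepsilon_0,p_0$ alone, which is exactly what the lemma permits.
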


Note that the constants above in our setting only pertain to the individual random variables which we populate our matrix $M$ with, and so are fixed.

 Since small $\rLCD$ represents strong additive structure, we aim to prove a high probability lower bound on the $\rLCD$ of a left singular vector. Recall that for a metric space $(X, d)$ that an $\varepsilon$-net of a subset $S \subseteq X$ is a subset $\mathcal{N} \subseteq S$ such that for all $s \in S$ that $d(\mathcal{N}, s) < \varepsilon$. Nets will be useful structures with which to construct union bound arguments, and the following result bounds the size of an $\varepsilon$-net of the sub-level sets with respect to regularized LCD.

\begin{lemma}[{\cite[Lemma B.1]{nguyen2015randommatricestailbounds}}]\label{lem-lcd-helper-level-size}
    Consider $D \geq 1$ and $n^{-c} \leq \alpha \leq c'/4$. We denote by $S_D$ the sub-level set
    \begin{equation*}
        S_D \coloneqq \bigg\{x \in \Incomp_n(c_0, c_1): \, \rLCD_{\kappa, \gamma}(x, \alpha) \leq D \bigg\}
    \end{equation*}
    $S_D$ has a $\frac{\kappa}{D\sqrt{\alpha}}$-net $\net$ with 
    \begin{equation*}
        |\net| \leq \frac{C^nD^{n + 2/\alpha}}{(\alpha n)^{c'n/4}}
    \end{equation*}
    where $C$ is a universal constant.
\end{lemma}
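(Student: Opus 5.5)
We construct the net by splitting each $x\in S_D$ into its spread coordinates and the rest, and approximating the spread part by rescaled integer points. Fix $x\in S_D$. Since $x\in\Incomp_n(c_0,c_1)$, the standard incompressibility lemma gives $|\spread(x)|\geq c' n$ with $c'=c_0c_1^2/4$. We partition $\spread(x)$ into $k\approx c'/(2\alpha)$ blocks $I_1,\dots,I_k$, each of size $\lceil\alpha n\rceil$, and let $I_0=[n]\setminus(I_1\cup\dots\cup I_k)$. The regularized LCD hypothesis gives $\LCD_{\kappa,\gamma}(x_{I_j}/\|x_{I_j}\|)\leq D$ for every $j\geq 1$. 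So for each $j$ there are $\theta_j\leq D$ and $p_j\in\Z^{I_j}$ with
\[
\|\theta_j x_{I_j}/\|x_{I_j}\| - p_j\|<\kappa .
\]
Moreover $\theta_j\gtrsim\sqrt{\alpha n}$, because the spread condition bounds the entries of the normalized block by $O(1/\sqrt{\alpha n})$, and a vector with such small entries cannot come within $\gamma\|\theta x\|$ of a nonzero integer point unless $\theta$ is that large.

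The candidate net consists of vectors formed blockwise as follows. On each $I_j$ ($j\geq1$) we take $\rho_j\,p_j/\theta_j$, where $\rho_j$ (a proxy for $\|x_{I_j}\|$) ranges over a fine grid and $\theta_j$ ranges over a grid of step about $\kappa/(D\sqrt\alpha)$ in $[c\sqrt{\alpha n},D]$. On $I_0$ we take points from an ordinary $\frac{\kappa}{D\sqrt\alpha}$-net of a ball, of size $(CD\sqrt\alpha/\kappa)^{|I_0|}$. We also choose which coordinates belong to the spread set and how they split into blocks, which costs at most $C^n$.

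The blockwise errors are $\|x_{I_j}\|\kappa/\theta_j$, and they combine in $\ell^2$ to at most $\kappa/D$-type accuracy. We then rescale $\kappa$ by a constant, and finally move every net point back into $S_D$ at the cost of doubling the radius.

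For the count: the number of integer points $p_j$ with $\|p_j\|\leq 2\theta_j\leq 2D$ in dimension $\alpha n$ is $(CD/\sqrt{\alpha n})^{\alpha n}$ by volume. Multiplying over the $k$ blocks gives about $(CD)^{c'n/2}(\alpha n)^{-c'n/4}$; this is where the $(\alpha n)^{-c'n/4}$ gain comes from. The $\theta_j$ and $\rho_j$ grids contribute $D^{O(1)}$ per block, hence $D^{O(k)}=D^{O(1/\alpha)}$ in total, which is the $D^{2/\alpha}$ factor. The $I_0$ part and the pooled dimensions contribute at most $C^nD^{n}$.

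The main obstacle is the bookkeeping needed to get exactly the exponent $n+2/\alpha$ on $D$ while keeping the full $(\alpha n)^{-c'n/4}$ saving. Two issues are delicate. First, the non-spread coordinates in $I_0$ need a net at scale $\kappa/(D\sqrt\alpha)$, and we must check that this costs no more than $D^{|I_0|}$ up to $C^n$. Second, the lower bound $\theta_j\gtrsim\sqrt{\alpha n}$ must hold uniformly across blocks. Since this is precisely \cite[Lemma B.1]{nguyen2015randommatricestailbounds}, in the paper I would follow that argument for these constants rather than re-optimize them.
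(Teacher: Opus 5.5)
There is a genuine gap in the accuracy step. On block $I_j$ you approximate $x_{I_j}$ by $\rho_j p_j/\theta_j$, with error $\|x_{I_j}\|\kappa/\theta_j$, where $\theta_j=\LCD_{\kappa,\gamma}(x_{I_j}/\|x_{I_j}\|)$. The hypothesis $\rLCD_{\kappa,\gamma}(x,\alpha)\le D$ only gives $\theta_j\le D$, so $\kappa/\theta_j\ge\kappa/D$. That inequality points the wrong way for your claim that the block errors ``combine in $\ell^2$ to at most $\kappa/D$-type accuracy''.

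The only lower bound available is $\theta_j\gtrsim\sqrt{\alpha n}$. Combined with $\|x_{I_j}\|\lesssim\sqrt{\alpha}$ and about $c'/(2\alpha)$ blocks, this gives a total error of $O(\kappa/\sqrt{\alpha n})$. That is within the required radius $\kappa/(D\sqrt{\alpha})$ only when $D\lesssim\sqrt n$. But the lemma, through its sliced version, is used in \cref{lem-lcd-fixed-val} with $D$ as large as $\kappa^{1/\alpha}=n^{c/\alpha}$, so the case you miss is exactly the one that matters. A finer grid for $\theta_j$ does not help. When the LCD of $y_j:=x_{I_j}/\|x_{I_j}\|$ is much smaller than $D$, nothing in the definition supplies a $\theta$ comparable to $D$ with $\dist(\theta y_j,\Z^{I_j})\lesssim\kappa$.

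The missing idea is a refinement step whose cost does not depend on $\theta_j$. Suppose $T\le\theta_j<2T$. Then $y_j$ lies within $2\kappa/T$ of $p_j/\|p_j\|$ for some $p_j\in\Z^{I_j}$ with $\|p_j\|\le 4T$, and there are at most $(CT/\sqrt{\alpha n})^{\alpha n}$ such $p_j$. Each ball $B(p_j/\|p_j\|,2\kappa/T)$ is covered by $(CD/T)^{\alpha n}$ balls of radius $\kappa/D$. Hence
\[
\Bigl(\frac{CT}{\sqrt{\alpha n}}\Bigr)^{\alpha n}\Bigl(\frac{CD}{T}\Bigr)^{\alpha n}=\Bigl(\frac{C^2D}{\sqrt{\alpha n}}\Bigr)^{\alpha n}
\]
points approximate $y_j$ to within $\kappa/D$, uniformly in $T$. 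Summing over the $O(\log D)$ dyadic scales costs a factor $O(D)$ per block. So your per-block count $(CD/\sqrt{\alpha n})^{\alpha n}$ is the right number, but it is what the refined cover costs, not what integer points of norm $\le 2D$ give on their own.

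With this in place, $\sum_j\|x_{I_j}\|^2\le 1$ makes the block errors $\|x_{I_j}\|\kappa/D$ combine to at most $\kappa/D$. The $1/\sqrt{\alpha}$ slack in the radius then absorbs discretizing the roughly $c'/(2\alpha)$ norms $\rho_j$ at step $\kappa/D$. The net on $I_0$ can be taken at radius $\kappa/(D\sqrt{\alpha})$ at cost $(3D)^{|I_0|}$. The scale and norm grids together account for the factor $D^{2/\alpha}$.

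Two smaller points:
\begin{itemize}
\item Constant losses in the radius must be absorbed by a further refinement costing $C^n$. You cannot rescale $\kappa$, since it is fixed in the definition of $S_D$.
\item The split of $\spread(x)$ into blocks must be canonical, e.g.\ consecutive chunks in increasing index order. This is legitimate because $\rLCD$ is a supremum over all $I\subseteq\spread(x)$. Counting arbitrary splits costs roughly $\alpha^{-c'n/2}$, which is not $C^n$ once $\alpha$ decays like $n^{-c}$.
\end{itemize}
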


A more refined version of the above result, which considers a slice of the sub-level set, enables tighter control on a union bound.

\begin{lemma}[{\cite[Definition 4.6]{christoffersen2025gaps}}]\label{lem-lcd-helper-level-size-refined}
    In analogy to Lemma~\ref{lem-lcd-helper-level-size}, we define the level set slice
    \begin{equation*}
        S_{D, k} \coloneqq \bigg\{x \in \Incomp_n(c_0, c_1): \, 2^{-(k + 1)}D \leq \rLCD_{\kappa, \gamma}(x, \alpha) \leq 2^{-k}D \bigg\}
    \end{equation*}
    where $k \in [0, \log_2 D]$, which has a $\frac{2^k\kappa}{D\sqrt{\alpha}}$-net $\net$ with 
    \begin{equation*}
        |\net| \leq \frac{C^nD^{n + 2/\alpha}}{(\alpha n)^{c'n/4}} 2^{-kn - (2k/\alpha)}
    \end{equation*}
    where $C$ is an universal constant.
\end{lemma}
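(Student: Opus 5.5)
The plan is to derive the slice bound from \cref{lem-lcd-helper-level-size}, following the net construction of \cite[Lemma B.1]{nguyen2015randommatricestailbounds} but keeping track of the lower bound $\rLCD_{\kappa,\gamma}(x,\alpha)\geq 2^{-(k+1)}D$. The naive step is to apply \cref{lem-lcd-helper-level-size} with $D$ replaced by $D_k \coloneqq 2^{-k}D$. Since $S_{D,k}\subseteq S_{D_k}$, this yields a $\frac{\kappa}{D_k\sqrt{\alpha}} = \frac{2^k\kappa}{D\sqrt{\alpha}}$-net of $S_{D_k}$. Its cardinality is at most
\begin{equation*}
\frac{C^n (2^{-k}D)^{n+2/\alpha}}{(\alpha n)^{c'n/4}} = \frac{C^nD^{n+2/\alpha}}{(\alpha n)^{c'n/4}}\, 2^{-kn-2k/\alpha}.
\end{equation*}
This is exactly the claimed mesh and cardinality.

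The net must lie inside the set being netted, so one step remains. Following the standard argument, I will replace each net point $y$ whose ball meets $S_{D,k}$ by a point of $S_{D,k}$ in that ball. The resulting set is a net of $S_{D,k}$ with at most twice the mesh, and its cardinality does not increase. The factor $2$ in the mesh is absorbed by applying the lemma with $D_k$ replaced by $2D_k = 2^{-(k-1)}D$. This changes only the universal constant, since $2^{n+2/\alpha}\le C_1^{n}$ when $\alpha\ge n^{-c}$ with $c<1$.

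To be safe I will also re-examine the proof of \cref{lem-lcd-helper-level-size}, so that the requirement $D\ge 1$ still holds for $D_k$. It does, because $k\le \log_2 D$ gives $2^{-k}D\ge 1$. For $k=\log_2 D$ the slice only borders $D_k=1$, where the lemma still applies.

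The main obstacle is the validity condition $D\ge1$ together with the constant bookkeeping for $2/\alpha$ in the exponent. In particular, $2^{2/\alpha}$ could exceed $C^n$ if $\alpha$ were allowed to be very small. The hypothesis $n^{-c}\le\alpha$ with $c<1$ is what makes $2/\alpha=o(n)$, and I will rely on it at exactly this point. The lower bound $2^{-(k+1)}D$ is not needed for the size estimate; it matters only later, when it is paired with \cref{lem-lcd-helper-concentration} in the union bound.
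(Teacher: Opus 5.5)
Your proposal is correct and is essentially the intended argument: the paper gives no proof of its own (it cites \cite[Definition 4.6]{christoffersen2025gaps} ``in analogy to'' \cref{lem-lcd-helper-level-size}), and the stated bound is exactly \cref{lem-lcd-helper-level-size} applied at scale $2^{-k}D\ge 1$, using $S_{D,k}\subseteq S_{2^{-k}D}$. Your extra step forcing the net points into $S_{D,k}$, with the resulting factor $2^{n+2/\alpha}$ absorbed into $C^n$ via $1/\alpha=O(n)$, is more careful than the paper, which in the proof of \cref{lem-lcd-fixed-val} simply assumes without loss of generality that the net points lie in the slice.
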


Now, we rule out left singular vectors with low LCD. In \cite{christoffersen2025gaps}, the authors' primary focus is on the singular values of a general family of matrices $\Sigma^{1/2}M$ (with $\Sigma$ positive definite), and their repulsion is achieved by using high LCD bounds on left singular vectors and incompressibility of right singular vectors. However, due to the symmetry of the $p = \Theta(n)$ regime the authors can and do prove high LCD bounds on the right singular vectors too, yielding analogous repulsion statements for the matrices $M \Sigma^{1/2}$. In our setting it is neither necessary nor possible to establish with our current technique high LCD bounds of the right singular vectors of $M$, and we will comment on this obstruction after Lemma~\ref{lem-lcd-fixed-val}. 

First, we require a result from \cite{christoffersen2025gaps}; the following statement can be extracted from the proof of Lemma 4.4.

\begin{lemma}[{\cite[Half of Lemma 4.4]{christoffersen2025gaps}}]\label{lem-approx-singular}
    Fix a mean-zero, variance-one sub-Gaussian distribution $\xi$. Let $M$ be a $n \times p$ matrix with i.i.d.\ entries $m_{ij} \sim \xi$, and with $p = \Theta(n^{\zeta})$ for $\zeta \in (0, 1)$. Recall that $c' = c_0 c_1^2/4$. Then there exists a constant $C > 0$ depending only on the law of $\xi$ such that for any $x \in \Incomp_p(c_{0}, c_{1})$ and $y \in \Incomp_n(c_{0}, c_{1})$, $0 < \alpha < c'/4$, and $\varepsilon > 0$ which satisfies
    \begin{equation*}
        \varepsilon \geq \frac{\sqrt{\alpha}}{c_0 \rLCD_{\kappa, \gamma}(x, \alpha)}
    \end{equation*}
    we then have
    \begin{equation*}
        \sup_{a \in \R^n}\mathbb{P}[ \{\|Mx - a\| < \varepsilon \sqrt{n}\}] \leq C\bigg(\frac{\varepsilon\sqrt{2}}{\gamma c_{1} \sqrt{\alpha}} + e^{-\Theta(\kappa^2)} \bigg)^{n - \lfloor \alpha n \rfloor}
    \end{equation*}
\end{lemma}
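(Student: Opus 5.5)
We prove \cref{lem-approx-singular} by writing $\|Mx-a\|^2$ as a sum over the $n$ rows of $M$ and applying \cref{lem-lcd-helper-concentration} to each row. The rows $R_1,\dots,R_n\in\R^p$ of $M$ are i.i.d.\ with i.i.d.\ coordinates distributed as $\xi$, so $Mx-a$ has independent coordinates $\langle R_i,x\rangle-a_i$. By \cref{lem-levy-concentration-finiteness} there are $\varepsilon_0,p_0>0$, depending only on $K$, with $\LevyFunc(\xi,\varepsilon_0)\le 1-p_0$. This is exactly the hypothesis of \cref{lem-lcd-helper-concentration}. Applied in dimension $p$ to the incompressible vector $x\in\Incomp_p(c_0,c_1)$, and under the stated lower bound on $\varepsilon$ in terms of $\rLCD_{\kappa,\gamma}(x,\alpha)$, it gives the uniform single-row estimate
\begin{equation*}
\sup_{a_i\in\R}\Prob\big[|\langle R_i,x\rangle-a_i|\le \varepsilon\sqrt{2}\big]\le C_1\bigg(\frac{\varepsilon\sqrt2}{\gamma c_1\sqrt\alpha}+e^{-\Theta(\kappa^2)}\bigg)=:q .
\end{equation*}
The factor $\sqrt2$ costs nothing, because the lemma's hypothesis is monotone in $\varepsilon$.

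The second step is a counting argument in the spirit of tensorization (\cref{lem-tensorization}). On the event $\|Mx-a\|<\varepsilon\sqrt n$ we have $\sum_i|\langle R_i,x\rangle-a_i|^2<\varepsilon^2 n$. By Markov/pigeonhole, fewer than $n/2$ indices can have $|\langle R_i,x\rangle-a_i|>\varepsilon\sqrt2$. Hence at least $\lceil n/2\rceil$ coordinates, and in particular at least $n-\lfloor\alpha n\rfloor$ coordinates once the threshold is adjusted, satisfy the single-row small-ball event. A union bound over the admissible index sets $J$ of size $n-\lfloor\alpha n\rfloor$, together with independence of the rows, bounds the probability by
\begin{equation*}
\binom{n}{\lfloor \alpha n\rfloor}\,q^{\,n-\lfloor\alpha n\rfloor}\le \Big(\tfrac{e}{\alpha}\Big)^{\alpha n}q^{\,n-\lfloor\alpha n\rfloor}.
\end{equation*}
To guarantee that at least $n-\lfloor\alpha n\rfloor$ coordinates are small, the per-coordinate threshold is taken as $\varepsilon/\sqrt{\alpha}$ instead of $\varepsilon\sqrt2$. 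This holds because at most $\alpha n$ coordinates can exceed $\varepsilon/\sqrt\alpha$ when the squared sum is below $\varepsilon^2 n$. That change rescales $q$ by a factor depending only on $\alpha$.

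The main obstacle is absorbing the combinatorial prefactor $(e/\alpha)^{\alpha n}$ and the $\alpha$-dependent rescaling into the constant $C$ and the exponent while keeping the clean form $C(\cdot)^{n-\lfloor\alpha n\rfloor}$. The first route I would try follows the Rudelson--Vershynin tensorization. Instead of a union bound over subsets, bound $\Prob[\sum_i Z_i^2<\varepsilon^2 n]\le e^{n}\,\E\prod_i e^{-Z_i^2/\varepsilon^2}$ with $Z_i=|\langle R_i,x\rangle-a_i|$. Then compute each factor $\E e^{-Z_i^2/\varepsilon^2}=\int_0^\infty 2ue^{-u^2}\Prob[Z_i<\varepsilon u]\,du$ using the single-row bound. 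Since \cref{lem-lcd-helper-concentration} holds for every threshold $\ge\varepsilon$, this bound is linear in $u$ for $u\ge1$, and the factor comes out as $O(q)$. This yields $(C'q)^n$, which is at most $C(\cdot)^{n-\lfloor\alpha n\rfloor}$ after enlarging the constant $C_1$ inside the parenthesis, as \cite{christoffersen2025gaps} do.

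Several details are routine and need no real argument. The supremum over $a$ passes inside because each coordinate bound is uniform in $a_i$. The tall regime $p=\Theta(n^\zeta)$ plays no role, since only the $p$-dimensional incompressibility and $\rLCD$ of $x$ enter. The vector $y$ is not used in this half of the lemma.
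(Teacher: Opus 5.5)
Your final argument is correct and is essentially the paper's argument, which it imports from Lemma 4.4 of \cite{christoffersen2025gaps}: a row-wise small-ball bound from \cref{lem-lcd-helper-concentration}, followed by the Rudelson--Vershynin exponential-moment tensorization over the independent rows of $M$, with the constant read inside the power as the application in \cref{lem-lcd-fixed-val} permits. You were right to abandon the subset-counting detour, since its $1/\sqrt{\alpha}$ rescaling cannot be absorbed into a $C$ independent of $\alpha$. The only difference is that \cite{christoffersen2025gaps} first restrict, via the block decomposition of $M$, to the top $n-\lfloor\alpha n\rfloor$ rows; this is needed there only to decouple from the column-side event, and it accounts for the $\sqrt{2}$ via $n\le 2(n-\lfloor\alpha n\rfloor)$. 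You instead tensorize over all $n$ rows and obtain the slightly stronger exponent $n$.
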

We can now rule out low LCD for approximate left singular vectors with a fixed target singular value for an analogous parameter range to \cite{christoffersen2025gaps}.

\begin{lemma}\label{lem-lcd-fixed-val}
   Let $M$ satisfy the assumptions of \cref{lem-approx-singular}. There exists some $c > 0$ depending only on $c_0$ and $c_1$ such that for any fixed $\sigma_0^2 \in [(\sqrt{n} - K\sqrt{p})^2, (\sqrt{n} + K\sqrt{p})^2]$ and any fixed $a \in \R^n$, under the restrictions
    \begin{equation*}
        \kappa = n^{c}, \, D \leq (\kappa)^{1/\alpha}, \, \alpha \geq 1/\sqrt{\kappa}
    \end{equation*}
    we have that with
    \begin{equation*}
        \beta = \frac{\kappa}{\sqrt{\alpha}D}
    \end{equation*}
    that the event
    \begin{equation*}
        \event \coloneqq \{ \exists (u, v) \in \Incomp(c_{0}, c_{1}) \times S_{D}: \|Mu - \sigma_0 v - a \| \leq \beta \sqrt{n} \} 
    \end{equation*}
    has probability at most $ \exp(-O(n \log n))$ with implied constants depending only on the law of $\xi$, and where $S_{D}$ denotes the $\rLCD$ sub-level sets.
\end{lemma}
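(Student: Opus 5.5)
The plan is a net argument over both $u$ and $v$, followed by a union bound, with \cref{lem-approx-singular} supplying the small-ball estimate for each fixed net point. Two restrictions come first. We intersect with the event $\{\|M\| \le \sqrt{n} + K\sqrt{p}\}$ from \cref{eq-sing-val-range}, which fails with probability at most $2\exp(-p)$. We also decompose $S_D$ into the slices $S_{D,k}$ of \cref{lem-lcd-helper-level-size-refined}, with $k \in [0,\log_2 D]$; since there are only $O(\log D) = O(n^{c}/\alpha)$ slices, the final union bound over $k$ is harmless.

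\textbf{Step 1 (net over $v$).} Fix a slice $S_{D,k}$ and take its $\frac{2^k\kappa}{D\sqrt{\alpha}}$-net $\net_k$ from \cref{lem-lcd-helper-level-size-refined}. If $(u,v)$ witnesses $\event$ with $v \in S_{D,k}$, choose $v_0 \in \net_k$ close to $v$. Then
\[
\|Mu - \sigma_0 v_0 - a\| \le \beta\sqrt{n} + \sigma_0 \frac{2^k \kappa}{D\sqrt{\alpha}} \lesssim 2^{k}\beta\sqrt{n},
\]
using $\sigma_0 \lesssim \sqrt{n}$. So it suffices to bound the probability that some incompressible $u$ has $Mu$ within $O(2^k\beta\sqrt{n})$ of the fixed point $\sigma_0 v_0 + a$.

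\textbf{Step 2 (reverse the roles).} This is the main obstacle. \cref{lem-approx-singular} needs the LCD information on the vector multiplying the random matrix, but here the structured vector $v_0$ lives on the left. Following \cite{christoffersen2025gaps}, I would pass to the transposed singular equation: on the singular-vector event we also have $M^\top v \approx \sigma_0 u$, and $M^\top v$ is a sum of independent rows of $M$ weighted by the structured $v$. Concretely, the fixed-point estimate should have the form
\[
\sup_{b}\Prob\bigl[\|M^\top v_0 - b\| \le 2^k \beta \sqrt{p}\bigr] \le \bigl(C\, 2^k\beta/(\gamma c_1\sqrt{\alpha}) + e^{-\Theta(\kappa^2)}\bigr)^{p},
\]
obtained by applying \cref{lem-lcd-helper-concentration} columnwise (the columns of $M$ are independent vectors in $\R^n$ dotted with $v_0$) and then tensorizing via \cref{lem-tensorization}. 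The LCD hypothesis $\varepsilon \ge \sqrt{\alpha}/(c_0 \rLCD)$ holds with $\varepsilon \approx 2^k\beta$ because $\rLCD(v_0) \approx 2^{-k}D$.

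The asymmetry of the tall regime is exactly where this bites: the exponent is only $p$, not $n$, while the net has size about $D^{n}$. I would therefore try first to keep the $n$-dimensional small ball by conditioning on $u$ instead. This means also netting $\Incomp_p$ at scale $\beta$, which costs only $(3/\beta)^p$, and then applying \cref{lem-approx-singular} to the $n$ independent coordinates of $Mu$ jointly with the structured $v_0$. The shift $a$ and $\sigma_0 v_0$ are absorbed into the supremum over $a$, giving probability $\bigl(C 2^k\beta/(\gamma c_1\sqrt\alpha) + e^{-\Theta(\kappa^2)}\bigr)^{n-\lfloor \alpha n\rfloor}$.

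\textbf{Step 3 (union bound).} Combining the counts, the total probability is at most
\[
\sum_k (3/\beta)^{p}\,\frac{C^nD^{n+2/\alpha}2^{-kn-2k/\alpha}}{(\alpha n)^{c'n/4}}\Bigl(\frac{C2^k\kappa}{\gamma c_1\alpha D}+e^{-\Theta(\kappa^2)}\Bigr)^{n-\alpha n}.
\]
The $D^n$ factors cancel against $D^{-(n-\alpha n)}$, leaving $D^{\alpha n + 2/\alpha} \le \kappa^{n+2/\alpha^2}$ by $D \le \kappa^{1/\alpha}$. Similarly, $2^{-kn}$ cancels $2^{k(n-\alpha n)}$ up to $2^{-k\alpha n}\le 1$. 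The $(3/\beta)^p$ term is $\exp(O(p\log n))$, which is negligible. What remains is roughly $\exp\bigl(n(c\log n\cdot O(1) - (c'/4)\log(\alpha n))\bigr)$. Since $\kappa = n^{c}$ and $\alpha \ge n^{-c/2}$, choosing $c$ small relative to $c'$ makes the factor $(\alpha n)^{-c'n/4} \le n^{-c'n/8}$ dominate. This yields $\exp(-\Theta(n\log n))$, and adding the $2e^{-p}$ from the norm event only affects lower-order terms.
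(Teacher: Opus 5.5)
Your final route has a genuine gap where you apply \cref{lem-approx-singular} to $Mu_0$ at scale $\varepsilon\approx 2^k\beta$. That lemma's hypothesis $\varepsilon\ge\sqrt{\alpha}/(c_0\rLCD_{\kappa,\gamma}(x,\alpha))$ is on the vector $x$ that multiplies $M$, i.e.\ on the right net point $u_0\in\Incomp_p(c_0,c_1)$, because the small-ball bound comes from the $n$ independent rows of $M$ dotted with $u_0$. Once $\sigma_0 v_0+a$ is absorbed into the supremum over shifts, the arithmetic structure of $v_0$ plays no role. Your justification via $\rLCD(v_0)\approx 2^{-k}D$ belongs to the $M^\top v_0$ estimate you discarded. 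An incompressible $u_0$ can have $\rLCD(u_0,\alpha)$ as small as order $\sqrt{\alpha p}$ (e.g.\ $u_0=\mathbf{1}/\sqrt{p}$). For such $u_0$ the lemma can only be used with $\varepsilon\gtrsim\sqrt{\alpha}/\rLCD(u_0,\alpha)$. That gives a bound of order $\bigl(C/\rLCD(u_0,\alpha)\bigr)^{n-\lfloor\alpha n\rfloor}$ with no factor $D^{-1}$, and for Rademacher entries this is essentially sharp, since each coordinate of $Mu_0$ has atoms of mass of order $p^{-1/2}$. So in your Step~3 the $D^{n}$ from the $v$-net is not cancelled, and the union bound diverges for large $D$.

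The paper handles this with a case split your proposal lacks: it decomposes $\event_k$ according to whether $\rLCD(v,\alpha)\le\rLCD(u,\alpha)$. On that piece, $u$ inherits $\rLCD(u,\alpha)\ge 2^{-(k+1)}D$ from the slice $S_{D,k}$. One then nets $u$ only over this super-level set (still at cost $(C/\beta_k)^p$), and \cref{lem-approx-singular} applies legitimately at scale $\beta_k$. On that piece your Step~3 bookkeeping matches the paper's.

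The complementary piece $\rLCD(u,\alpha)<\rLCD(v,\alpha)$ is the real difficulty, and the paper disposes of it in a few lines (slicing the right level sets and netting at scale $\beta_l$). With only the one-sided relation $\|Mu-\sigma_0v-a\|\le\beta\sqrt n$, this piece apparently cannot be closed at all. For Rademacher entries, take $a=0$, $\sigma_0=\sqrt n$, $u_0=\mathbf{1}/\sqrt p$, the row-sum vector $s=\sqrt p\,Mu_0\in\Z^n$, and $v=s/\|s\|$. Then $v$ is incompressible with high probability. Every $v_I/\|v_I\|$ becomes the integer vector $s_I$ after scaling by $\|s_I\|\lesssim\sqrt{np}$, so $v\in S_D$ with $D=2\sqrt{np}$; this is admissible for small $\alpha$, e.g.\ $\alpha=n^{-c/2}$. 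Meanwhile $\|Mu_0-\sigma_0 v\|=\bigl|\|Mu_0\|-\sqrt n\bigr|$. By the central limit theorem this is at most $\beta\sqrt n=\kappa/(2\sqrt{\alpha p})$ with probability $\Omega(\min\{1,\kappa/\sqrt{\alpha p}\})$, far above $\exp(-\Omega(n\log n))$. So the statement as posed seems too strong. A correct version must use more information, such as the companion equation $M^\top v\approx\sigma_0 u$ satisfied by genuine singular pairs, or an a priori LCD lower bound on the right vectors.
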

\begin{proof}
    We can decompose $\event$ into the events $\event_{k}$ which signify that $\event$ occurs with $v \in S_{D, k}$ as in Lemma~\ref{lem-lcd-helper-level-size-refined}. We further decompose $\event_{k}$ into the events $\event_{k, -}$ and $\event_{k, +}$, which denote that $\event$ holds with $\rLCD_{\kappa, \gamma}(v, \alpha) \leq \rLCD_{\kappa, \gamma}(u, \alpha)$ and $\rLCD_{\kappa, \gamma}(v, \alpha) \geq \rLCD_{\kappa, \gamma}(u, \alpha)$ respectively. Finally, we decompose $\event_{k, +}$ into the events $\event_{k, +, l}$ for $1 \leq l \leq k$ which denote that $u \in S_{D, l}$. This decomposition can be summarized as 
    \begin{equation*}
        \event \subseteq \bigcup_{1 \leq k \leq \log D} \bigg(\event_{k, -} \cup \bigg(\bigcup_{1 \leq l \leq k}\event_{k, +, l} \bigg) \bigg)
    \end{equation*}
    and we bound each individual event. First, consider $\event_{k, -}$. Then $\rLCD_{\kappa, \gamma}(x, \alpha) \geq D2^{-(k + 1)}$, and we select a $(\beta_{k})$-net of the left level set slice called $\mathcal{L}_k$ and a $(\beta_{k})$-net of the super-level set called $\mathcal{R}_k$ which satisfy the cardinality bounds
    \begin{equation*}
        |\mathcal{L}_k| \leq \frac{(CD)^n}{(2^k)^n(\sqrt{\alpha n})^{c'n/2}}(2^{-k}D)^{2/\alpha} ,\,
        |\mathcal{R}_k| \leq \bigg(\frac{C}{\beta_{k}}\bigg)^p
    \end{equation*}
    by the standard sphere bounds and Lemma~\ref{lem-lcd-helper-level-size-refined}. Additionally, define
    \begin{equation*}
        \beta_{k} = \frac{2^k\kappa}{\sqrt{\alpha}D}
    \end{equation*}
    We can assume without loss of generality that the elements of the left and right nets are in the sub-level slice set and sub-level set respectively. Now, observe that on the event $\event_{R, k}$ that for some $u_0, v_0 \in \mathcal{R}_{k} \times \mathcal{L}_{k}$ that
    \begin{align}\label{eq-right-approximate-fixed}
        \|Mu_0 - \sigma_0 v_0 - a\| &\leq \|M(u - u_0)\| + \sigma_0\|v - v_0\| + \|Mu - \sigma_0 v - a\| \\
        &\leq \|M\| \beta_{k} + \sqrt{n}\beta_{k} + \beta_{R}\sqrt{n} = O(\beta_{k}\sqrt{n})
    \end{align}
    and that
    \begin{equation*}
       \varepsilon = \beta_{k} = \frac{2^k\kappa}{\sqrt{\alpha}D} \geq \frac{\sqrt{\alpha}}{\rLCD_{\kappa, \gamma}(u_0, \alpha)}
    \end{equation*}
    Together these facts render the union bound feasible. Applying Lemma~\ref{lem-approx-singular} we have that
    \begin{align*}
        \mathbb{P}[\event_{L, k, -}] &\leq  C_5 \bigg(\frac{C}{\beta_{k}}\bigg)^p  \\
        &\times \bigg(\frac{(CD)^n}{(2^k)^n(\sqrt{\alpha n})^{c'n/2}}(2^{-k}D)^{2/\alpha}\bigg) \\
        &\times \bigg(\frac{\varepsilon\sqrt{2}}{\gamma c_{1} \sqrt{\alpha}} + e^{-\Theta(\kappa^2)} \bigg)^{n - \lfloor \alpha n \rfloor}
    \end{align*}
    Now, note that
    \begin{align*}
        \bigg(\textcircled{1} \bigg) &=  \bigg(\frac{C}{\beta_{k}}\bigg)^p = \exp(\pm O(p \log n)) \\
        \bigg(\textcircled{2} \bigg) &=  \bigg(\frac{(CD)^n}{(2^k)^n(\sqrt{\alpha n})^{c'n/2}}(2^{-k}D)^{2/\alpha}\bigg) = \exp(\pm O(n \log n)) \\
        \bigg(\textcircled{3} \bigg) &= \bigg(\frac{\varepsilon\sqrt{2}}{\gamma c_{1} \sqrt{\alpha}} + e^{-\Theta(\kappa^2)} \bigg)^{n - \lfloor \alpha n \rfloor} = \exp(\pm O(n \log n))
    \end{align*}
    and therefore we can safely ignore term $\textcircled{1}$ and address terms $\textcircled{2}$ and $\textcircled{3}$. We have, for sufficiently large $n$, that
    \begin{align*}
        \bigg(\textcircled{2} \bigg) \cdot \bigg(\textcircled{3} \bigg) &\leq \bigg(\frac{(CD)^n}{(2^k)^n(\sqrt{\alpha n})^{c'n/2}}(2^{-k}D)^{2/\alpha}\bigg) \cdot \bigg(\frac{2^k \kappa \sqrt{2}}{\gamma c_{1} \alpha D} + e^{-\Theta(\kappa^2)} \bigg)^{n - \lfloor \alpha n \rfloor} \\
        &= \frac{2^{(n - \lfloor \alpha n \rfloor)/2}C^n D^{\lfloor \alpha n \rfloor  + 2/\alpha}\kappa^{n - \lfloor \alpha n \rfloor}}{(2^k)^{\lfloor \alpha n \rfloor + 2/\alpha}\alpha^{c'n/4 + n - \lfloor \alpha \rfloor}n^{c'n/4}(c_1\gamma)^{n - \lfloor \alpha n \rfloor}} \\
        &= \frac{2^{(n - \lfloor \alpha n \rfloor)/2}C^n}{(2^k)^{\lfloor \alpha n \rfloor + 2/\alpha}\alpha^{c'n/4 + n - \lfloor \alpha n \rfloor}(c_1\gamma)^{n - \lfloor \alpha n \rfloor}}  \cdot \frac{D^{\lfloor \alpha n \rfloor  + 2/\alpha}\kappa^{n - \lfloor \alpha n \rfloor}}{n^{c'n/4}} \\
        &\leq \frac{2^{(n - \lfloor \alpha n \rfloor)/2}C^n}{(2^k)^{\lfloor \alpha n \rfloor + 2/\alpha}(c_1\gamma)^{n - \lfloor \alpha n \rfloor}}  \cdot \frac{D^{\lfloor \alpha n \rfloor  + 2/\alpha}\kappa^{n - \lfloor \alpha n \rfloor + (c'n/4 + n - \lfloor \alpha n \rfloor)/2}}{n^{c'n/4}} \\
        &\leq \frac{2^{(n - \lfloor \alpha n \rfloor)/2}C^n}{(2^k)^{\lfloor \alpha n \rfloor + 2/\alpha}(c_1\gamma)^{n - \lfloor \alpha n \rfloor}}  \cdot \frac{n^{c(n + n^c + n - \lfloor \alpha n \rfloor + (c'n/4 + n - \lfloor \alpha n \rfloor)/2)}}{n^{c'n/4}} \\
        &= \bigg(\textcircled{a} \bigg) \cdot \bigg(\textcircled{b} \bigg)
    \end{align*}
    Examining \textcircled{b}, we find that by choosing $c$ sufficiently small that $(\textcircled{2}) \cdot (\textcircled{3}) = \exp(-O(n \log n))$ since \textcircled{a} possesses an upper bound which is $\exp(O(n))$ with implied constants independent of $k$. In particular, we require that $c < 1$ to ensure that the numerator is $\exp(O(n \log n))$.

    The above argument proves that $\mathbb{P}[\event_{L, k, +}] \leq \exp(-O(n \log n))$ with implied constants depending only on $\xi$, and in particular independent of $k$. To deal with the events $\event_{L, k, +, l}$, we repeat the above argument with a net parameter of $(\beta_l)$, and refined $\rLCD$ slice nets on both sides. This only improves the strength of the the upper bound, and thus we achieve an $\exp(-O(n \log n))$ upper bound with implied constants independent of $l$ and $k$. 

    Now, there are $\log D$ many events $\event_{L, k, -}$, and $\log D(\log D + 1)/2$ many events $\event_{L, k, +}$, and so a union bound produces an upper bound on $\event_L$ of order $\exp(-O(n\log n))$ and thus completes the proof.
\end{proof}

A few remarks are in order regarding the proof of \cref{lem-approx-singular}. First, if one compares \cref{lem-approx-singular} with Proposition 4.8 of \cite{christoffersen2025gaps}, it becomes clear that our proof is not symmetric in the left and right singular values, but is also not totally decoupled. \cref{remark-two-sided} discusses why we still need to deal to some extent with right singular vectors, and \cref{remark-one-sided} describes why the full symmetry of Proposition 4.8 of \cite{christoffersen2025gaps} is not required here either. Secondly, \cref{remark-right-sing-vecs} describes why our current approach is unable to prove strong LCD lower bounds on the right singular vectors.

\begin{remark}[Right vector involvement in \cref{lem-lcd-fixed-val}]\label{remark-two-sided}
    Given that by assumption the candidate vectors involved in the statement of \cref{lem-lcd-fixed-val} are incompressible, a uniform structure over which to partition and union bound $\Incomp_n$ is not obvious, and we proceed by a net argument. Now, any one sided net for the left-singular vector statement should naturally involve a net with exponent $\Theta(n)$, but by combining \cref{lem-lcd-helper-concentration} with \cref{lem-tensorization} expect to yield a probability with exponent $\Theta(p)$. Such an argument appears insufficient, and the involvement of the right singular vectors is done to introduce a probability term of the necessary order.
\end{remark}

\begin{remark}[Asymmetry of Lemma~\ref{lem-lcd-fixed-val}]\label{remark-one-sided}
    Note that in \cref{lem-lcd-fixed-val}, the upper bound consists of three terms: a left net, a right net, and one probabilistic term. This proof could have very well been executed in the manner of Proposition 4.8 of \cite{christoffersen2025gaps} by using the two-pronged version of \cref{lem-approx-singular} and producing a left-sided analogue of \cref{eq-right-approximate-fixed}. However, the exponent of the second probabilistic term is $\Theta(p)$ and so the entire term is asymptotocally dominated by term $(\textcircled{2}) \cdot (\textcircled{3})$, therefore becoming irrelevant. Moreover, one may verify that the left-sided analogue of \cref{eq-right-approximate-fixed} would yield an $\varepsilon = \Theta(n/p)$ in the left probability, so the probability bound would be diverging as $n \to \infty$, and would actively harm the strength of the upper bound although being asymptotically benign. In \cite{christoffersen2025gaps} all four terms in the bound are of the same order, and both probability terms are helpful, motivating their two-pronged approach.
\end{remark}

\begin{remark}[Information on right singular vectors]\label{remark-right-sing-vecs}
    Observe that the refined net term is critical to showing that the probability converges, and is \textit{only} magnified by an exponent to the appropriate order in the left singular vector case. Were one to repeat \cref{lem-lcd-fixed-val} for the right case, or use the two-pronged approach described in \cref{remark-one-sided} the refined net would not possess the proper exponent. 
    
    Additionally, now \textit{only} the probability term coming from the right singular vectors is aiding the proof, so we must ensure an $\LCD$ lower bound on the right at all times. To this end, we slice the right sub-level set. This is \textit{not} equivalent to the proof of the right event in Proposition 4.8 in \cite{christoffersen2025gaps}. That proof deals with the arrangement in which $\rLCD_{\gamma, \kappa}(u, \alpha) < \rLCD_{\gamma, \kappa}(v, \alpha)$ for a right-left pair $(u, v)$ with the $\rLCD$ of $v$ arbitrarily large, but \cref{lem-lcd-fixed-val} only establishes this fact for a polynomial upper bound on the $\rLCD$ of $v$, which is insufficient to conclude that any right singular vector has polynomially large $\rLCD$ with overwhelming probability.
\end{remark}

Now, we can establish polynomially large LCD lower bounds on right singular vectors.

\begin{proposition}\label{prop-lcd-varying-val}
    Let $M$ satisfy the assumptions of \cref{thm-repulsion-unscaled}, and let $\alpha$, $\kappa$ and $c$ satisfy the assumptions of \cref{lem-lcd-fixed-val}. Then, with probability at most $\exp(-O(p))$ we have that any left singular vector $w$ of $M$ satisfies $\rLCD_{\kappa, \gamma}(w, \alpha) \geq n^{c/\alpha}$.
\end{proposition}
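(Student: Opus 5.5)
The intended reading of \cref{prop-lcd-varying-val} is that, with probability at least $1-\exp(-O(p))$, every left singular vector $w$ of $M$ satisfies $\rLCD_{\kappa,\gamma}(w,\alpha)\geq n^{c/\alpha}$. Equivalently, the bad event (some left singular vector lies in $S_D$ with $D=n^{c/\alpha}=\kappa^{1/\alpha}$) has probability at most $\exp(-O(p))$. The plan mirrors \cref{prop-incompressible-varying-val}: a net argument over the candidate singular values, with \cref{lem-lcd-fixed-val} applied at each fixed value and a union bound at the end.

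First I intersect with two good events. The first is the singular value containment \cref{eq-sing-val-range}, which fails with probability at most $2\exp(-p)$. The second is the complement of the event $\mathcal{G}$ from \cref{prop-incompressible-varying-val}, which fails with probability at most $\exp(-\Theta(p))$. On their intersection, every singular triple $(\sigma,u,w)$ with $Mu=\sigma w$ and $M^\top w=\sigma u$ has $\sigma\in[\sqrt n-K\sqrt p,\sqrt n+K\sqrt p]$, $u\in\Incomp_p(c_0,c_1)$ and $w\in\Incomp_n(c_0,c_1)$. This is exactly what is needed to place the pair $(u,w)$ in the set $\Incomp\times S_D$ of \cref{lem-lcd-fixed-val}, with $a=0$.

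Next I discretize $\sigma$. Take a grid of values $\sigma_0$ in the window $[\sqrt n-K\sqrt p,\sqrt n+K\sqrt p]$ with spacing $\beta\sqrt n$, where $\beta=\kappa/(\sqrt\alpha D)$. Since $D=\kappa^{1/\alpha}$ with $\kappa=n^c$, the quantity $\beta$ is at least $n^{-O(1)}$ because $\alpha$ is bounded below. The grid therefore has at most $O(K\sqrt p/(\beta\sqrt n))=n^{O(1)}$ points. If $Mu=\sigma w$ and $\sigma_0$ is the nearest grid point, then
\[
\|Mu-\sigma_0 w\|=|\sigma-\sigma_0|\leq\beta\sqrt n,
\]
so the event $\event$ of \cref{lem-lcd-fixed-val} occurs for that $\sigma_0$, with $a=0$ and $v=w\in S_D$.

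\Cref{lem-lcd-fixed-val} bounds each such event by $\exp(-\Omega(n\log n))$. A union bound over the $n^{O(1)}$ grid points keeps the total at $\exp(-\Omega(n\log n))$. Adding the failure probabilities of the two good events gives an overall bound of $\exp(-O(p))$, where the $p$-scale terms dominate.

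The main obstacle is checking that the parameter regime is admissible. I need $D=\kappa^{1/\alpha}$ to be allowed by \cref{lem-lcd-fixed-val} (it is the boundary case $D\leq\kappa^{1/\alpha}$), together with $\alpha\geq 1/\sqrt\kappa$. I also need the grid to stay polynomial: $\beta$ can be as small as $n^{c-c/\alpha}$, which is still polynomial provided $\alpha$ is a fixed constant. If $\alpha$ is instead allowed to shrink like $n^{-c/2}$, then $D$ becomes superpolynomial and the grid count grows like $\exp(n^{O(c)}\log n)$. In that case I would instead run the union bound over the dyadic slices in the $\sigma$-grid, or simply fix $\alpha$ constant. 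Either choice is still absorbed by the $\exp(-\Omega(n\log n))$ per-point bound as long as $c<1$.
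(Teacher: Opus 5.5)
Your proposal is correct and follows the paper's proof essentially step for step. You get incompressibility of both singular vectors from \cref{prop-incompressible-varying-val}, use a grid of spacing $\beta\sqrt{n}$ over the singular value window so that $\|Mu-\sigma_0 w\|\le\beta\sqrt{n}$, and finish with a union bound over grid points using \cref{lem-lcd-fixed-val} with $a=0$ and $D=n^{c/\alpha}=\kappa^{1/\alpha}$. You also correctly read ``at most'' in the statement as a typo for the failure probability, and your check that the grid count stays $\exp(o(n\log n))$ even when $\alpha$ decays to $n^{-c/2}$ covers a point the paper leaves implicit.
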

\begin{proof}
    By \cref{prop-incompressible-varying-val} the probability that either a left or right singular vector are compressible is at most $\exp(-O(p))$.
    
    Suppose that $w$ is an incompressible left singular vector of $M$ associated with an incompressible right singular vector $v$ and singular value $\sigma$. Choosing a grid of spacing $\beta \sqrt{n}$ of the interval $[\sqrt{n} - K\sqrt{p}, \sqrt{n} + K\sqrt{p}]$, there exists a grid value $\sigma_0$ such that $|\sigma - \sigma_0| < \beta \sqrt{n}$ with probability at least $1 - \exp(-O(p))$ and so 
    \begin{equation*}
        \|M v - \sigma_0 u\| \leq \beta \sqrt{n}
    \end{equation*}
    By a union bound over $\sigma_0$ and an application of \cref{lem-lcd-fixed-val} we conclude the claim.
\end{proof}

\cref{prop-incompressible-varying-val} and \cref{prop-lcd-varying-val} result in the following Theorem, which rules out the main obstructions to large small-ball probabilities for the sake of eigenvalue repulsion.

\begin{theorem}\label{thm-obstructions}
     Let $M$ satisfy the assumptions of \cref{thm-repulsion-unscaled}, and let $\alpha$, $\kappa$ and $c$ satisfy the assumptions of \cref{lem-lcd-fixed-val}. Then with probability at least $1 - \exp(-O(p))$ we have
     \begin{enumerate}
         \item The left and right singular vectors incompressible (with $c_0$ and $c_1$ as in \cref{prop-incompressible-varying-val}).
         \item The left singular vectors admit an $\rLCD$ lower bound of $n^{c/\alpha}$.
     \end{enumerate}
\end{theorem}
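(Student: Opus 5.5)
This is a union bound over the two bad events already controlled earlier in the section. Let $\mathcal{G}$ be the event of \cref{prop-incompressible-varying-val}, namely that some left or right singular vector of $M$ is $(c_0,c_1)$-compressible. Let $\mathcal{H}$ be the event that some incompressible left singular vector $w$ of $M$ has $\rLCD_{\kappa,\gamma}(w,\alpha) < n^{c/\alpha}$. Item 1 of the statement holds off $\mathcal{G}$, and item 2 holds off $\mathcal{G}\cup\mathcal{H}$. It therefore suffices to show $\mathbb{P}[\mathcal{G}] \le \exp(-\Theta(p))$ and $\mathbb{P}[\mathcal{H}\setminus \mathcal{G}] \le \exp(-\Theta(p))$. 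The first bound is exactly \cref{prop-incompressible-varying-val}, applied with the common constants $c_0,c_1$ that, as remarked after \cref{lem-incompressible-fixed-val-left}, may be chosen identically for the left and right cases.

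For $\mathcal{H}\setminus\mathcal{G}$, I will run the argument of \cref{prop-lcd-varying-val} explicitly.

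\textbf{Step 1: restrict the singular value.} I intersect with the event of \cref{eq-sing-val-range}, which costs probability at most $2\exp(-p)$. On this event every singular value $\sigma$ lies in $[\sqrt n - K\sqrt p, \sqrt n + K\sqrt p]$.

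\textbf{Step 2: set the parameters.} Set $\kappa = n^{c}$ and $D = \kappa^{1/\alpha} = n^{c/\alpha}$, which is the largest $D$ permitted in \cref{lem-lcd-fixed-val}, with $\alpha \ge 1/\sqrt{\kappa}$. Then $\beta = \kappa/(\sqrt{\alpha}D)$.

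\textbf{Step 3: discretize.} Cover the singular value window by a grid of spacing $\beta$. The grid has $O(\sqrt{p}/\beta)$ points, which is polynomial in $n$ because $D$ and $\alpha^{-1}$ are polynomial. If $(v,w,\sigma)$ is a singular triple with $Mv=\sigma w$, $v,w$ incompressible, and $w\in S_D$, pick a grid point $\sigma_0$ with $|\sigma-\sigma_0|\le\beta$. Then
\[
\|Mv-\sigma_0 w\| = |\sigma-\sigma_0| \le \beta \le \beta\sqrt n .
\]
So the event $\event$ of \cref{lem-lcd-fixed-val} occurs with $a=0$ and target $\sigma_0$.

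\textbf{Step 4: union bound.} Sum over the polynomially many grid points, each contributing at most $\exp(-\Omega(n\log n))$ by \cref{lem-lcd-fixed-val}. This gives $\mathbb{P}[\mathcal{H}\setminus\mathcal{G}] \le \mathrm{poly}(n)\exp(-\Omega(n\log n)) + 2e^{-p}$.

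Combining with the bound on $\mathcal{G}$, the total failure probability is $\exp(-\Theta(p))$, since $p=\Theta(n^\zeta)=o(n\log n)$. This proves the theorem.

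The main obstacle is consistency of parameters between the two ingredients. First, the $\rLCD$ sub-level sets $S_D$ in \cref{lem-lcd-fixed-val} are defined within $\Incomp_n(c_0,c_1)$ and the right vector ranges over $\Incomp_p(c_0,c_1)$, so the $c_0,c_1$ used there must match those of \cref{prop-incompressible-varying-val}; I would fix them first and then choose $c$ depending on them. Second, \cref{lem-lcd-fixed-val} is stated with exponents written as $\exp(-O(n\log n))$, which should be read as a genuine decay $\exp(-\Omega(n\log n))$; I would note this explicitly so the polynomial grid factor is absorbed. Finally, the fact that a gridded singular value gives only an approximate singular pair is exactly what the $\beta\sqrt n$ slack in $\event$ is designed to absorb, so no further net over $\sigma$ is needed.
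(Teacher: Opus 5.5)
Your proposal is correct and follows the paper's route: the theorem is \cref{prop-incompressible-varying-val} combined by a union bound with \cref{prop-lcd-varying-val}, and the paper proves the latter exactly as you do, with a grid over $\sigma_0$ followed by \cref{lem-lcd-fixed-val} (the paper uses grid spacing $\beta\sqrt n$ instead of $\beta$, which only changes the number of grid points). One small caveat: with $D=\kappa^{1/\alpha}$ the grid has polynomially many points only when $\alpha$ is a constant, which is the regime the paper actually uses; if $\alpha$ is as small as $\kappa^{-1/2}$ the grid has $\exp(O(n^{c/2}\log n))$ points, but since $c<1$ this is still absorbed by the $\exp(-\Omega(n\log n))$ bound.
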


Note that, as with \cite{christoffersen2025gaps}, the inclusion of the multiplicative term $1/\alpha$ in the exponent of the polynomial bound allows one to prove extremely strong lower bounds when $\alpha$ is taken to decay with $n$. However, we will only require that $\alpha$ is some small constant for our final repulsion result. Indeed, a slightly easier proof of \cref{lem-lcd-fixed-val} can be executed if one is willing to constraint $D$ by an upper bound of $n^{O(1)}$.

\subsection{Tall Wishart matrices have quantitative eigenvalue repulsion}
We now conclude our main repulsion results using the technique of \cite{nguyen2015randommatricestailbounds} and \cite{christoffersen2025gaps}.

\begin{theorem}[Squared singular value repulsion for tall real Wishart matrices]\label{theorem-repulsion-singular-vals-squared-real}
    Let $M$ satisfy the assumptions of \cref{thm-repulsion-unscaled}. There exists a $c \in (0, 1)$ such that, for $\alpha$ and $\kappa$ satisfying the assumptions of \cref{thm-obstructions}, and $\delta \geq n^{-c/\alpha}$, the event
    \begin{align*}
        \event_i &\coloneqq \{ \sigma_i^2(M) - \sigma_{i + 1}^2(M) \leq \sigma_{i + 1}(M) \cdot \delta \cdot p^{-1/2} \}
    \end{align*}
    satisfies
    \begin{equation}
         \sup_{1 \leq i \leq p}\mathbb{P}[\event_i] \leq O( \frac{\delta}{\sqrt{\alpha}})
    \end{equation}
    with implied constants depending only on $c_0$, $c_1$ and the law of $\xi$. Additionally, we have that the event
    \begin{align*}
        \mathcal{F}_i &\coloneqq \{ \sigma_i^2(M) - \sigma_{i + 1}^2(M) \leq (\sqrt{n} - K\sqrt{p}) \cdot \delta \cdot p^{-1/2} \}
    \end{align*}
    satisfies
    \begin{equation}
         \sup_{1 \leq i \leq p}\mathbb{P}[\mathcal{F}_i']  \leq \sup_{1 \leq i \leq p}\mathbb{P}[\mathcal{F}_i] \leq O( \frac{\delta}{\sqrt{\alpha}}) + \exp(-\Theta(p))
    \end{equation}
\end{theorem}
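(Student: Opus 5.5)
We follow the Nguyen--Tao--Vu / Christoffersen et al.\ template. The gap between consecutive squared singular values is controlled by the projection of a removed column onto a singular vector of the remaining matrix.

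Write $M=(M'\,|\,X)$, where $X\in\R^n$ is the last column and $M'$ is the $n\times(p-1)$ remaining matrix. We use the Cauchy interlacing $\sigma_{i+1}(M)\le\sigma_i(M')\le\sigma_i(M)$ for the Gram matrices $M^\top M$ and $M'^\top M'$, where $M^\top M$ is the bordered matrix
\[
\begin{pmatrix} M'^\top M' & M'^\top X\\ X^\top M' & \|X\|^2\end{pmatrix}.
\]
Let $(u_j,w_j,\sigma_j')$ denote the right and left singular triples of $M'$. The standard Schur-complement eigenvalue equation for the bordered matrix gives, for any eigenvalue $\lambda=\sigma_i^2(M)$ not in the spectrum of $M'^\top M'$,
\[
\|X\|^2-\lambda=\sum_j \frac{\sigma_j'^2|\langle w_j,X\rangle|^2}{\sigma_j'^2-\lambda}.
\]
Taking $j=i$ and using interlacing, $\lambda_i-\lambda_{i+1}\le\delta'$ forces $|\sigma_i'^2-\lambda|\le\delta'$ for the relevant eigenvalue(s). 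We then argue, as in \cite{nguyen2015randommatricestailbounds}, that the $i$-th term must be balanced against the remaining terms, whose sum is $O(n)$ by \cref{prop-sing-val-concentration}. This yields $\sigma_i'|\langle w_i,X\rangle|\lesssim\sqrt{\delta'}\cdot(\text{poly})$.

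A cleaner route, which we adopt, is the one in \cite{christoffersen2025gaps}: combine the singular-vector equations $Mv=\sigma y$ and $M^\top y=\sigma v$ and compare with $M'$. Doing so gives
\[
|\langle w_i, X\rangle|\le \frac{\sigma_i^2-\sigma_{i+1}^2}{\sigma_{i+1}}\cdot\frac{1}{|v(p)|},
\]
where $v(p)$ is the last coordinate of the right singular vector of $M$ for the $i$-th value. By the exchangeability of the columns, we may pick the removed coordinate so that $|v(p)|\gtrsim p^{-1/2}$. Incompressibility of the right singular vectors (\cref{thm-obstructions}, part 1) supplies a spread set of coordinates of size $\ge c_0p/2$ with $|v(k)|\ge c_1/\sqrt{2p}$. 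We average over the choice $k$ in that set, which costs only a constant factor $1/c_0$. On $\event_i$ this produces $|\langle w_i^{(k)},X_k\rangle|\le \delta\cdot O(1)$ for a proportion of indices $k$. This is exactly why $\event_i$ is normalized by $\sigma_{i+1}(M)\,p^{-1/2}$.

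The crucial point is that $X_k$ is independent of $M^{(k)}$, and hence of $w_i^{(k)}$. By \cref{thm-obstructions} applied to $M^{(k)}$ (an $n\times(p-1)$ matrix of the same type), outside an event of probability $\exp(-\Theta(p))$ the vector $w_i^{(k)}$ is incompressible with $\rLCD_{\kappa,\gamma}(w_i^{(k)},\alpha)\ge n^{c/\alpha}$. Conditioning on $M^{(k)}$, we apply \cref{lem-lcd-helper-concentration} with $\varepsilon=\delta\ge n^{-c/\alpha}\ge\sqrt\alpha/(c_0\rLCD)$, after adjusting $c$. This gives
\[
\Prob\bigl[|\langle w_i^{(k)},X_k\rangle|\le\delta\bigr]\le C\Bigl(\frac{\delta}{\gamma c_1\sqrt\alpha}+e^{-\Theta(\kappa^2)}\Bigr),
\]
and $e^{-\Theta(n^{2c})}$ is absorbed into $O(\delta/\sqrt\alpha)$ because $\delta\ge n^{-c/\alpha}$. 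The averaging over $k$ (a Markov/counting argument over the $\ge c_0p/2$ good indices) keeps the bound $O(\delta/\sqrt\alpha)$ with constants depending on $c_0,c_1$. The exceptional event $\exp(-\Theta(p))$ is absorbed since we only need $\delta\gtrsim$ it, or else it is kept as a separate additive term.

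For $\mathcal F_i$, we intersect with the event of \cref{eq-sing-val-range}, which holds with probability $1-2e^{-p}$. On it, $\sigma_{i+1}(M)\ge\sqrt n-K\sqrt p$, so $\mathcal F_i\subseteq\event_i\cup\{\text{range fails}\}$. This gives the extra $\exp(-\Theta(p))$; the $\mathcal F_i'$ inequality is immediate by containment.

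The main obstacle is the deterministic step linking a small gap to a small inner product with the correct normalization $\sigma_{i+1}p^{-1/2}$. This requires carefully tracking the interplay between the singular vectors of $M$ and of $M^{(k)}$, and the averaging over removed columns. The left/right asymmetry means only the left vector $w$ carries LCD information, so the removed object must be a column (acting through the left vectors) rather than a row.
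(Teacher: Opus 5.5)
Your proposal is correct and follows essentially the same route as the paper. Both arguments remove a column and project the first $p-1$ rows of $M^\top M u=\sigma_i^2(M)u$ onto the $i$-th right singular vector $\hat u$ of $M'$, giving $|\hat w^\top X|\le (\sigma_i^2(M)-\sigma_i^2(M'))/(|b|\,\sigma_i(M'))$, then apply interlacing. Both then average over columns using incompressibility of the right singular vector of $M$, finish with the $\rLCD$ lower bound on $\hat w$ together with \cref{lem-lcd-helper-concentration}, and handle $\mathcal F_i$ via \cref{eq-sing-val-range}. The deterministic step you flag as the main obstacle is just that two-line projection plus Cauchy--Schwarz ($|\hat u^\top u'|\le 1$). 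The spread set supplied by incompressibility has size $c_0c_1^2p/2$ rather than $c_0p/2$, which only changes the averaging constant to $2/(c_0c_1^2)$.
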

\begin{proof}
    Fix some $1 \leq i \leq p$. Decompose the matrix $M$ as $$M = (M'| X)$$ where $X$ is the $p$'th column, and the right singular vector corresponding to the $i$'th singular value of $M$ as 
    $$
    u = \binom{u'}{b}
    $$
    Additionally, define the events
    \begin{align*}
        \event_i' &\coloneqq \{ \sigma_i^2(M) - \sigma_{i + 1}^2(M') \leq \sigma_{i + 1}(M') \cdot \delta \cdot p^{-1/2} \} \\
        \mathcal{F}_i' &\coloneqq \{ \sigma_i^2(M) - \sigma_{i + 1}^2(M') \leq (\sqrt{n} - K\sqrt{p}) \cdot \delta \cdot p^{-1/2} \}
    \end{align*}
    The singular value equation can then be written as 
    \begin{equation*}
        \begin{pmatrix}
            M'^\top M' & M'^\top X \\
            X^\top M' & X^\top  X
        \end{pmatrix}
        \binom{u'}{b}
        = \sigma_i^2(M) \binom{u'}{b}
    \end{equation*}
    The top $p - 1$ rows of the above equation are equivalent to 
    \begin{equation*}
        M'^\top M'u' - \sigma_i^2(M)u'  = - M'^\top X b
    \end{equation*}
    Letting $\hat{u}$ be the right singular vector corresponding to the $i$'th singular value of $M'$, by multiplying on the left by the transpose of $\hat{u}$ we observe that
    \begin{equation*}
        (\sigma_i^2(M') - \sigma_i^2(M))\hat{u}^\top  u' = - b \sigma_i(M')\hat{w}^\top  X
    \end{equation*}
    where $\hat{w}$ is a left singular vector of $M'$ corresponding to $\sigma_i(M')$. Taking absolute values and applying Cauchy-Schwartz on the left, we derive the inequality
    \begin{equation*}
        |\hat{w}^\top  X| \leq \bigg| \frac{\sigma_i^2(M') - \sigma_i^2(M)}{b\sigma_i(M')} \bigg|
    \end{equation*}
    Thus, on the event $\event_i'$ we have that
    \begin{equation*}
         |\hat{w}^\top  X| \leq \bigg| \frac{\delta p^{-1/2}}{b} \bigg|
    \end{equation*}
    where $X$ is independent of the left singular vector $\hat{w}$ of $M'$, and $b$ is a coordinate of a right singular vector of $M$. Since $u \in \Incomp_p(c_0, c_1)$ it is well spread (in the sense of \cref{def-lcd}), so with probability at least $1 - \exp(-O(p))$ we have that at least $c_0c_1^2 p/2$ coordinates of $u$ have magnitude at least $c_1 p^{-1/2}/\sqrt{2}$. Following \cite{christoffersen2025gaps} we define the threshold value and events: 
    \begin{align*}
        B \coloneqq c_1 p^{-1/2}/\sqrt{2} \, ,
        \mathcal{G}_{ij} \coloneqq \event_i \wedge \{u_j \geq B \} \, ,
        p_B \coloneqq |\{j : u_j \geq B \}|
    \end{align*}
    where $p_B$ can be written as $\sum_{j = 1}^{p}l_{j}$ where $l_j$ is the indicator function of the event $\{u_j \geq B \}$. Decomposing the event $\event_i'$, by monotonicity and an application of Markov's inequality we have that for any $N \geq 1$ that 
    \begin{equation*}
        \mathbb{P}[\event_i'] \leq \frac{p}{N}\mathbb{P}[\mathcal{G}_{ip}] + \mathbb{P}[p_B < N]
    \end{equation*}
    since $\mathbb{P}[\mathcal{G}_{ij}] = \mathbb{P}[\mathcal{G}_{ip}]$ for all $1 \leq i, j \leq p$. Letting $N = c_0c_1^2 p/2$, we have that 
    \begin{align*}
        \mathbb{P}[\event_i'] \leq \frac{p}{N}\mathbb{P}[|\hat{w}^\top  X| \leq \frac{\delta p^{-1/2}}{B}] + \mathbb{P}[p_B < N] = \frac{2}{c_0 c_1^2}\mathbb{P}[|\hat{w}^\top  X| \leq \delta\frac{\sqrt{2}}{c_1}] + Ce^{-\Theta(p)}
    \end{align*}
    Now, by choosing a constant $K$ sufficiently large, which depends only on $c_0$ and $c_1$, we have that
    \begin{align*}
        \mathbb{P}[|\hat{w}^\top  X| \leq \delta\frac{\sqrt{2}}{c_1}] \leq Ce^{-\Theta(n)} + \LevyFunc(\hat{w}^\top  X, K \sqrt{2}\delta / c_1) \leq O(\frac{\delta}{\sqrt{\alpha}}) + C' \exp(-\Theta(n))
    \end{align*}
    since $\alpha$ has a constant upper bound. The analogous bound holds for $\event_i$ by the Cauchy interlacing theorem. The final observation for events $\mathcal{F}_i$ and $\mathcal{F}_i'$ then follows from the high probability lower bound on the singular values of $M$.
\end{proof}

\begin{corollary}[Extreme eigenvalue gap for tall real Wishart matrices]\label{cor-extreme-gap-real}
    For $M$ as in \cref{theorem-repulsion-singular-vals-squared-real}, we have that 
    \begin{equation*}
        \inf_{1 \leq i \leq p}(\lambda_i(M^\top  M) - \lambda_{i + 1}(M^\top  M)) \geq (\sqrt{n} - K\sqrt{p})p^{-3/2 - o(1/\log(n))}
    \end{equation*}
    with high probability. 
\end{corollary}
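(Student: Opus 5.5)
The plan is a union bound over the $p-1$ consecutive gaps using the second part of \cref{theorem-repulsion-singular-vals-squared-real}. Note that $\lambda_i(M^\top M)=\sigma_i^2(M)$, so each gap $\lambda_i-\lambda_{i+1}$ is exactly the quantity controlled by the event $\mathcal{F}_i$. Fix a target gap parameter $\delta$ and observe that
\[
\Big\{\inf_{1\le i<p}(\lambda_i-\lambda_{i+1}) \le (\sqrt{n}-K\sqrt{p})\,\delta\, p^{-1/2}\Big\}\subseteq \bigcup_{i=1}^{p-1}\mathcal{F}_i .
\]
Hence
\[
\Prob\Big[\inf_i(\lambda_i-\lambda_{i+1}) \le (\sqrt{n}-K\sqrt{p})\,\delta\, p^{-1/2}\Big] \le p\Big(O(\delta/\sqrt{\alpha})+e^{-\Theta(p)}\Big).
\]

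Next I will choose the parameters. Take $\alpha$ to be a small fixed constant admissible in \cref{thm-obstructions}, so that $O(\delta/\sqrt{\alpha})=O(\delta)$. Then take
\[
\delta = p^{-1-\epsilon_n}
\]
with $\epsilon_n\to 0$ slowly; the exponent $o(1/\log n)$ in the statement is the bookkeeping of this $\epsilon_n$. With this choice:
\begin{itemize}
\item the union bound becomes $O(p\delta)+pe^{-\Theta(p)} = O(p^{-\epsilon_n})+o(1)$, which tends to zero;
\item the threshold becomes $(\sqrt{n}-K\sqrt{p})\,p^{-1/2}\delta = (\sqrt{n}-K\sqrt{p})\,p^{-3/2-\epsilon_n}$, which is the claimed bound.
\end{itemize}
If "high probability" is meant in the polynomial sense $1-O(n^{-c})$, I would instead take $\delta=p^{-1-\eta}$ with a fixed $\eta>0$. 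This gives failure probability $O(p^{-\eta})$ and matches the $n^{\zeta-\beta}$ form in \cref{thm-repulsion-unscaled}.

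I must also check that $\delta$ meets the hypothesis $\delta\ge n^{-c/\alpha}$ of \cref{theorem-repulsion-singular-vals-squared-real}. Since $p=\Theta(n^\zeta)$, we have $\delta \approx n^{-\zeta(1+\epsilon_n)}$, so it suffices to take $\alpha$ small enough that $c/\alpha>\zeta(1+\epsilon_n)$, for instance $c/\alpha > 2\zeta$. This is compatible with the constraints of \cref{lem-lcd-fixed-val}, namely $\alpha\ge 1/\sqrt{\kappa}$ with $\kappa=n^c$, because $\alpha$ is a constant while $\kappa$ grows.

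The main obstacle is this compatibility. The implied constant $O(\delta/\sqrt{\alpha})$ must stay uniform over $i$ once $\alpha$ is fixed. In addition, the extra $e^{-\Theta(p)}$ terms, from incompressibility in \cref{prop-incompressible-varying-val} and from the singular value window \eqref{eq-sing-val-range}, must survive multiplication by $p$; they do, since $pe^{-\Theta(p)}\to 0$. I would also remark that the factor $p$ lost in the union bound is exactly why the exponent is $-3/2$ and not $-1/2$.
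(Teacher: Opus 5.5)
The union bound over consecutive gaps is the natural route. The paper states this corollary without proof, and this is evidently the intended derivation. Your inclusion $\{\inf_i(\lambda_i-\lambda_{i+1})\le(\sqrt n-K\sqrt p)\,\delta\, p^{-1/2}\}\subseteq\bigcup_i\mathcal{F}_i$ is fine, as are the choice of a fixed small $\alpha$ and the check $\delta\ge n^{-c/\alpha}$. The gap is in the parameter bookkeeping, and it is not cosmetic: your two requirements on $\epsilon_n$ are incompatible. For the union bound $p\cdot O(\delta/\sqrt{\alpha})=O(p^{-\epsilon_n})$ to tend to zero, you need $\epsilon_n\log p\to\infty$. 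Since $p=\Theta(n^\zeta)$, this means $\epsilon_n\log n\to\infty$. But the statement asks for $\epsilon_n=o(1/\log n)$. In that regime $p^{-\epsilon_n}=\exp(-\epsilon_n\log p)\to 1$, so your bound on the failure probability tends to a positive constant, not to zero. Equivalently, $p^{-o(1/\log n)}=e^{-o(1)}$, so the claimed threshold is $(1-o(1))(\sqrt n-K\sqrt p)\,p^{-3/2}$, which forces $\delta\asymp 1/p$. At that scale, the per-gap estimate $O(\delta/\sqrt{\alpha})$ summed over $p$ gaps is $\Theta(1)$.

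What your argument actually proves is weaker in each version.
\begin{itemize}
\item For failure probability $o(1)$, you get the corollary with $o(1/\log n)$ replaced by any $\epsilon_n$ satisfying $\epsilon_n\log n\to\infty$. For example, $\delta=1/(p\log n)$ gives the threshold $(\sqrt n-K\sqrt p)\,p^{-3/2}/\log n$, i.e.\ exponent $-3/2-\Theta(\log\log n/\log n)$, with failure probability $O(1/\log n)+pe^{-\Theta(p)}$.
\item If you only ask for failure probability below a fixed constant, you get exponent $-3/2-O(1/\log n)$, but not $o(1/\log n)$.
\item In the polynomial sense, you get $-3/2-\eta$ with failure probability $O(p^{-\eta})$.
\end{itemize}
Reaching the exponent exactly as written would need one of two things. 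Either a per-gap bound $\sup_i\Prob[\mathcal{F}_i]=o(\delta)$ at $\delta\asymp 1/p$, or an argument that avoids paying the full factor $p$ in the union bound. \cref{theorem-repulsion-singular-vals-squared-real} provides neither. You should therefore state and prove one of the corrected versions above, which is most plausibly what the stated $o(1/\log n)$ was meant to express. Alternatively, supply the missing sharper estimate. Do not assert that $\epsilon_n=o(1/\log n)$ is compatible with $p^{-\epsilon_n}\to 0$.
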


\subsection{The complex case}\label{subsec-repulsion-complex}
In this section, we briefly describe the modifications required to achieve the similar eigenvalue repulsion result for complex Wishart matrices. The general strategy will be to embed the complex objects $M \in \C^{n \times p}$ and $u \in C^{p}$ in $\R^{2n \times 2p}$ and $\R^{2p}$ respectively, in the vein of Section 6.3 of \cite{nguyen2015randommatricestailbounds}. It appears that the real representations of complex objects we use, as well as most of their properties, are folklore in the random matrix theory community. 

Indeed, it is often the case in random matrix theory that the complex analogues of real results are regarded as straightforward corollaries. However, in the Wishart matrix case, the newfound dependencies introduced in the real matrix presentations cause complications. The authors of \cite{christoffersen2025gaps} do not address the complex case, which we believe to be principally due to the fragile independence structure of their Lemma 4.4.

For an element $M \in \C^{n \times p}$ denote by $\Re(M), \Im(M) \in \R^{m \times n}$ the real and complex parts of $M$, and define the block-real representation of $M \in C^{m \times n}$ as the matrix 
\begin{equation*}
    \rho(M) \coloneqq         
    \begin{pmatrix}
        \Re(M) & -\Im(M) \\
        \Im(M) & \Re(M)
    \end{pmatrix}
\end{equation*}
For vectors $x \in \C^n$ we use the flat real representation 
\begin{equation*}
    r(x) \coloneqq (\Re(x), \Im(x))
\end{equation*}
Using the properties of these representations, we find that the approximate eigenvalue equation has the equivalence
\begin{equation}
    \| M^*M u - \sigma^2 u \| = \| \rho(M)^\top  \rho(M) r(u) - \sigma^2 r(u) \|
\end{equation}
and similarly for the right singular vector equation
\begin{equation}
    \| Mx - a \| = \|\rho(M)r(x) - r(a) \|
\end{equation}
While we are unable to directly apply the content from the real case due to the coordinate-wise dependence of $\rho(M)$, the dependence structure of $\rho(M)$ is rather simple, and minimal modifications can be made to the key Lemmas, which we describe now.

In \cref{lem-incompressible-fixed-val-right}, the decomposition $M = (X | Y)$ is not quite sufficient as $X$ and $Y$ possess some nontrivial dependency regardless of $c_0$. However, with $c_0 < 1/2$ we may further decompose $Y = (Y_1 | Y_2)$ where $Y_1$ is a $n \times (1/2 - c_0)$ matrix, so the concatenation of $X$ and $Y$ is equal to $(\Re(M) | -\Im(M))$ and thus $X$ and $Y_1$ are independent. We can then replace the lower bound $\|Y^\top  X u_0'\|$ with $\|Y_1^\top  X u_0'\|$, and the proof proceeds identically. 

An analogue of \cref{lem-incompressible-fixed-val-right} can be achieved in a similar way. Take the horizontal decomposition $M = \binom{X}{Y}$ and further decompose $Y = \binom{Y_1}{Y_2}$ such that $Y_1$ is a $(1/2 - c_0)n \times p$ matrix which independent of $X$, proceeding with the proof as before. Note that unlike the right singular vector setting above, in which $X$ and $Y_1$ have i.i.d.\ entries, the $X$ and $Y_1$ involved in the left singular vector case will only have i.i.d.\ sub-Gaussian rows, which is still sufficient to achieve the singular value bounds (see \cref{prop-sing-val-concentration}).

Lastly, \cref{lem-approx-singular} can be achieved with the new dependence structure by dropping the left prong of the argument of Lemma 4.4 of \cite{christoffersen2025gaps} and using a slightly different decomposition. Observe that the proof of Lemma 4.4 involves the decomposition 
\begin{equation*}
    M \coloneqq         
    \begin{pmatrix}
        A & B \\
        C & D
    \end{pmatrix}
\end{equation*}
where $A$ is an $(n - \lfloor \alpha n \rfloor) \times (p - \lfloor \alpha p \rfloor)$ matrix. Instead, we consider the same decomposition where $A$ is an $(n/2) \times (p - \lfloor \alpha p \rfloor)$ matrix, in which case the lower bound 
\begin{equation*}
    \|Mx - a \| \geq \|Ax' + Bx'' - a' \|
\end{equation*}
possesses $A$ and $B$ independent, and so the application of \cref{lem-tensorization} yields an upper bound 
\begin{equation*}
    \sup_{a \in \R^n}\mathbb{P}[ \{\|Mx - a\| < \varepsilon \sqrt{n}\}] \leq C\bigg(\frac{\varepsilon\sqrt{2}}{\gamma c_{1} \sqrt{\alpha}} + e^{-\Theta(\kappa^2)} \bigg)^{n/2}
\end{equation*}
which has no substantive impact on \cref{lem-lcd-fixed-val}. The dimensions of the anologous decomposition in \cite{christoffersen2025gaps} are crucial as it is necessary to achieve independence of the two events contributing the two probabilities which form the upper bound. We believe this to be the primary obstruction to \cite{christoffersen2025gaps} addressing the complex case, but since we only care about one half of the argument we need not maintain the same fragile decomposition structure.

Finally, we conclude the repulsion argument in a similar manner to \cref{theorem-repulsion-singular-vals-squared-real}. By an identical argument, we derive the inequality 
\begin{equation*}
    |\hat{w}^\top  X| \leq \bigg| \frac{\sigma_i^2(M') - \sigma_i^2(M)}{b\sigma_i(M')} \bigg|
\end{equation*}
where the $|\cdot|$ is the complex modulus. On the analogue of the event $\event_i$ this reduces to the inequality 
\begin{equation*}
    |\hat{w}^\top  X| \leq \frac{\delta p^{-1/2}}{|b|}
\end{equation*}
Now, following \cite[Remark 6.3]{nguyen2015randommatricestailbounds} and letting $w = (\Re(\hat{w}), \Im(\hat{w}))$, $w' = (\Im(\hat{w}), -\Re(\hat{w}))$ and $Y = (\Re(X), \Im(X))$ the above event reduces to 
\begin{equation*}
    \{ |w^\top  Y| \leq \frac{\delta p^{-1/2}}{|b|} \wedge |w'^\top  Y| \leq \frac{\delta p^{-1/2}}{|b|} \} \subseteq \{ |w^\top  Y| \leq \frac{\delta p^{-1/2}}{|b|} \}
\end{equation*}
Lastly, the term $b$ is a (complex) coordinate of $\hat{w}$, and so $|b|$ is the sum of the absolute value of two coordinates of an incompressible real vector $r(\hat{w}) = w$, and so the same argument from \cref{theorem-repulsion-singular-vals-squared-real} applies and we deduce that $\event_i$ occurs with probability at most $O(\delta/\sqrt{\alpha})$. Altogether, this yields the complex analogue of \cref{theorem-repulsion-singular-vals-squared-real}.

\appendix

\section{Background on the effective and efficient manipulation of algebraic numbers}\label{sec:algebraicNumbers}
Our algorithm for testing unitary tensor isomorphism exactly crucially depends on an ability to effectively and efficiently manipulate algebraic numbers (values $\alpha \in \AA$ that are roots of single variable polynomials with integer coefficients) at perfect precision. There is a well-developed literature concerning the perfect and efficient manipulation of algebraic numbers (see \cite{AGAlgorithms,cohen2013course}). We shall state a few basic and important facts about this theory.

First, associated with any algebraic number $\alpha \in \AA$ is a defining polynomial $p_\alpha$ that is the unique polynomial (coefficients do not have common factors) of least degree which vanishes at $\alpha$. Any algebraic number, even the irrational ones, then possess a standard finite encoding composed of its defining polynomial, and rational approximations of its real and imaginary parts of precision sufficient to distinguish $\alpha$ from the other roots of $p_\alpha$: a representation is a tuple $(p_\alpha, a, b, r) \in \mathbb{Z}[x] \times \mathbb{Q}^3$ where $r$ is the radius specifying the root contained around the point $a+bi$. A separation bound of Mignotte \cite{mignotte1982some} makes this representation rapidly computable:
for distinct roots $\alpha, \beta$ of polynomial $p \in \mathbb{Z}[x]$
\[
|\alpha - \beta| > \frac{\sqrt{6}}{d^{(d+1)/2}H^{d-1}},
\]
where $d$ is the degree and $H$ is the height of $p$. In particular, when $r$ is sufficiently small we have equality checking between algebraic numbers. Moreover, computing $\alpha+\beta$, $\alpha\beta$, $1/\alpha$, $\overline{\alpha}$, $|\alpha|$, $\sqrt{\alpha}$, as well as deciding whether $\alpha > \beta$ for distinct algebraic $\alpha$ and $\beta$, may all be performed in polynomial-time (in representation lengths).

\section{Proof sketch of Corollary~\ref{cor:hypergraph-iso}}
\label{app:hypergraph-iso}

Let $L=M=N=[n]:=\{1, 2, \dots, n\}$, and let $G, H\subseteq L\times M\times N$ be the edge sets of two 3-uniform tripartite hypergraphs on the vertex set $L\cup M\cup N$. Let $A_G\in \RR^n\otimes\RR^n\otimes\RR^n$ be the adjacency tensor of $G$, such that $A_G(i, j, k)=1$ if $(i, j, k)\in G$, and $-1$ otherwise. Then testing whether $G$ and $H$ are isomorphic is equivalent to testing whether $A_G$ and $A_H$ are isomorphic under the natural action of $\Sym_n\times\Sym_n\times\Sym_n$ on $\RR^n\otimes\RR^n\otimes\RR^n$, where $\Sym_n$ is the symmetric group over $[n]$.

Suppose $G$ is a random hypergraph, that is, each edge $(i, j, k)$ is included in $G$ with probability $1/2$. Let $F_G$ be the flattening of $A_G$ along the vertext set $N$ direction, so $F_G\in \M(n^2\times n, \RR)$. By Theorem~\ref{thm-repulsion-unscaled}, with high probability over $F_G$ whose entries are drawn from the Rademacher distribution, eigenvalues of $F_G^\top F_G$ are of multiplicity $1$, and they are well-separated. Let $F_H$ be the flattening of $A_H$ along the vertex set $N$ direction. For $G$ and $H$ to be isomorphic, a necessary condition is that the singular values of $F_G$ and those of $F_H$ are the same. 

Let $C_N=\{\pi\in \Sym_n\mid \pi^\top F_G^\top F_G \pi=F_H^\top F_H\}$. Note that $C_N$ is a coset of the group $B_N:=\{\pi\in \Sym_n\mid \pi^\top F_G^\top F_G \pi=F_G^\top F_G\}$. By \cite{leighton1979certificates} (see also \cite{spielman2018testing}), we can compute $C_N$ represented by a coset representative $\sigma\in \Sym_n$ and a generating set of $B_N$. Furthermore, $B_N$ is a $2$-group, that is, $|B_N|$ is a power of $2$. 

We then carry out the above process along the $M$ and $L$ directions to get $C_M$ and $B_M$, as well as $C_L$ and $B_L$. After applying the coset representatives to transform $A_G$ to $A_G'$, the problem becomes to test if there exists an element from $B_L\times B_M\times B_N$ that sends $A_G'$ to $A_H$. Note that $B_L\times B_M\times B_N$ is a $2$-group. This problem, the set-wise transporter problem under $2$-groups, can then be solved by Luks in his celebrated result of bounded-valence graph isomorphism \cite{Luks82}.

\paragraph{Acknowledgement.} The authors would like to thank Peter B\"urgisser for his interest and discussions on this paper.
Y.Q. would like to thank Robert Andrews for sharing insights into Gr\"obner basis computation over $\RR$. S.E. was supported by the National Science Foundation Graduate Research Fellowship Program under Grant No. 2140001.

\bibliographystyle{alphaurl}
\bibliography{refs}

\end{document}